\tikzset{boper/.style={rectangle,fill,inner sep=2pt,black}}
\tikzset{bblob/.style={circle,fill,inner sep=1.5pt,black}}
\tikzset{blob/.style={circle,draw,fill=white,outer sep=0mm,inner sep=0.3mm}}
\tikzset{mblob/.style={circle,fill=white,draw=black,inner sep=1.5pt}}
\tikzset{diam/.style={diamond,fill,inner sep=1.5pt,black}}
\tikzset{pblob/.style={circle,fill=white,draw=red,inner sep=1.5pt}}
\tikzset{
  aline/.style={
    decorate,
    decoration={
      meta-amplitude=#1,
      meta-segment length=0.15cm,
},
    postaction={decorate,ultra thick,decoration={markings,mark = at position #1 with {\arrow{>}}}}        
  },
  aline/.default=0.9
}
\tikzset{
  bline/.style={blue,
    decorate,
    decoration={
      meta-amplitude=#1,
      meta-segment length=0.3cm,
},
    postaction={decorate,ultra thick,decoration={markings,mark = at position #1 with {\arrow{>}}}}        
  },
  bline/.default=0.9
}
\tikzset{edge/.style={very thick,draw=blue}}%
\tikzset{contour/.style={brown,dashed,postaction={decorate,decoration={markings,mark = at position #1 with {\arrow{>}}}}}}
\tikzset{contour/.default=0.5}
\def\loos{0.35}
\def\slt{0.2}
\pgfmathsetmacro{\ae}{atan(\slt)}
\pgfmathsetmacro{\aw}{\ae+180}
\pgfmathsetmacro{\an}{90-\ae}
\pgfmathsetmacro{\as}{\an+180}
\pgfmathsetmacro{\sltb}{sqrt(1-\slt*\slt)}
\pgfmathsetmacro{\lcrot}{45-atan(\slt/\sltb)*0.5}
\tikzset{distort/.style={cm={1,0,-\slt,\sltb,(0,0)}}}
\def\goI#1(#2,#3){
\pgfextra{
\pgfmathparse{#1+180}\global\let\oldangle=\currentangle\global\let\newangle=\pgfmathresult\global\let\currentangle=#1
\pgfmathparse{\oldx+#2}\global\let\newx=\pgfmathresult\xdef\oldx{#2}
\pgfmathparse{\oldy+#3}\global\let\newy=\pgfmathresult\xdef\oldy{#3}
}
.. controls ++(\oldangle:\loos) and ++(\newangle:\loos) .. ++(\newx,\newy)
}
\def\go#1{\expandafter\goI#1}
\def\startI#1(#2,#3){
\pgfextra{\global\let\currentangle=#1\xdef\oldx{#2}\xdef\oldy{#3}}}
\def\start#1{\expandafter\startI#1}
\def\north{\an(0,0.5)}
\def\south{\as(0,-0.5)}
\def\east{\ae(0.5,0)}
\def\west{\aw(-0.5,0)}
\tikzset{bgplaq/.style={fill=lightgray!20!white}}
\tikzset{dgreen/.style={green!50!black,thick}}
\def\plaq(#1,#2){
\begin{scope}[shift={(#1,#2)}]
\draw[dotted] (-0.5,-0.5) rectangle ++(1,1); 
\end{scope}
}
\def\plaqz(#1,#2){
\begin{scope}[shift={(#1,#2)}]
\draw[bgplaq,dotted] (-0.5,-0.5) rectangle ++(1,1); 
\end{scope}
}
\def\plaqa(#1,#2){
\begin{scope}[shift={(#1,#2)}]
\draw[dotted,bgplaq] (-0.5,-0.5) rectangle ++(1,1);
\draw[edge] (0,-0.5) \start\north\go\east;
\draw[edge] (0,0.5) \start\south\go\west;
\end{scope}
}
\def\plaqb(#1,#2){
\begin{scope}[shift={(#1,#2)}]
\draw[dotted,bgplaq] (-0.5,-0.5) rectangle ++(1,1);
\draw[edge] (0,0.5) \start\south\go\east;
\draw[edge] (0,-0.5) \start\north\go\west;
\end{scope}
}
\def\bplaqe(#1,#2){
\begin{scope}[shift={(#1,#2)}]
\draw[dotted] (-0.5,0.5) -- (0.5,0.5) -- (-0.5,-0.5) -- cycle;
\end{scope}
}
\def\bplaqw(#1,#2){
\begin{scope}[shift={(#1,#2)}]
\draw[dotted] (0.5,-0.5) -- (0.5,0.5) -- (-0.5,-0.5) -- cycle;
\end{scope}
}
\def\bplaqn(#1,#2){
\begin{scope}[shift={(#1,#2)}]
\draw[dotted] (0.5,-0.5) -- (-0.5,0.5) -- (-0.5,-0.5) -- cycle;
\end{scope}
}
\def\bplaqs(#1,#2){
\begin{scope}[shift={(#1,#2)}]
\draw[dotted] (0.5,-0.5) -- (-0.5,0.5) -- (0.5,0.5) -- cycle;
\end{scope}
}
\def\bplaqez(#1,#2){
\begin{scope}[shift={(#1,#2)}]
\draw[dotted,bgplaq] (-0.5,0.5) -- (0.5,0.5) -- (-0.5,-0.5) -- cycle;
\end{scope}
}
\def\bplaqwz(#1,#2){
\begin{scope}[shift={(#1,#2)}]
\draw[dotted,bgplaq] (0.5,-0.5) -- (0.5,0.5) -- (-0.5,-0.5) -- cycle;
\end{scope}
}
\def\bplaqnz(#1,#2){
\begin{scope}[shift={(#1,#2)}]
\draw[dotted,bgplaq] (0.5,-0.5) -- (-0.5,0.5) -- (-0.5,-0.5) -- cycle;
\end{scope}
}
\def\bplaqsz(#1,#2){
\begin{scope}[shift={(#1,#2)}]
\draw[dotted,bgplaq] (0.5,-0.5) -- (-0.5,0.5) -- (0.5,0.5) -- cycle;
\end{scope}
}
\def\bplaqea(#1,#2){
\begin{scope}[shift={(#1,#2)}]
\draw[dotted,bgplaq] (-0.5,0.5) -- (0.5,0.5) -- (-0.5,-0.5) -- cycle;
\draw[edge] (0,0.5) \start\south\go\west;
\end{scope}
}
\def\bplaqeb(#1,#2){
\begin{scope}[shift={(#1,#2)}]
\draw[dotted,bgplaq] (-0.5,0.5) -- (0.5,0.5) -- (-0.5,-0.5) -- cycle;
\draw[edge] (0,0.5) \start\south\go\west
node[blob] {};
\end{scope}
}
\def\bplaqwa(#1,#2){
\begin{scope}[shift={(#1,#2)}]
\draw[dotted,bgplaq] (0.5,-0.5) -- (0.5,0.5) -- (-0.5,-0.5) -- cycle;
\draw[edge] (0,-0.5) \start\north\go\east;
\end{scope}
}
\def\bplaqwb(#1,#2){
\begin{scope}[shift={(#1,#2)}]
\draw[dotted,bgplaq] (0.5,-0.5) -- (0.5,0.5) -- (-0.5,-0.5) -- cycle;
\draw[edge] (0,-0.5) \start\north\go\east
node[blob] {};
\end{scope}
}
\def\bplaqna(#1,#2){
\begin{scope}[shift={(#1,#2)}]
\draw[dotted,bgplaq] (0.5,-0.5) -- (-0.5,0.5) -- (-0.5,-0.5) -- cycle;
\draw[edge] (0,-0.5) \start\north\go\west;
\end{scope}
}
\def\bplaqnb(#1,#2){
\begin{scope}[shift={(#1,#2)}]
\draw[dotted,bgplaq] (0.5,-0.5) -- (-0.5,0.5) -- (-0.5,-0.5) -- cycle;
\draw[edge] (0,-0.5) \start\north\go\west 
node[blob] {};
\end{scope}
}
\def\bplaqsa(#1,#2){
\begin{scope}[shift={(#1,#2)}]
\draw[dotted,bgplaq] (0.5,-0.5) -- (-0.5,0.5) -- (0.5,0.5) -- cycle;
\draw[edge] (0,0.5) \start\south\go\east;
\end{scope}
}
\def\bplaqsb(#1,#2){
\begin{scope}[shift={(#1,#2)}]
\draw[dotted,bgplaq] (0.5,-0.5) -- (-0.5,0.5) -- (0.5,0.5) -- cycle;
\draw[edge] (0,0.5) \start\south\go\east
node[blob] {}; 
\end{scope}
}
\def\plaqff(#1,#2){
\begin{scope}[shift={(#1,#2)}]
\draw[dotted,bgplaq] (-0.5,-0.5) rectangle ++(1,1);
\draw[edge] (0,-0.5) \start\north\go\east;
\end{scope}
}
\def\plaqf(#1,#2){
\begin{scope}[shift={(#1,#2)}]
\draw[dotted,bgplaq] (-0.5,-0.5) rectangle ++(1,1);
\draw[edge] (0,0.5) \start\south\go\west;
\end{scope}
}
\def\plaqd(#1,#2){
\begin{scope}[shift={(#1,#2)}]
\draw[dotted,bgplaq] (-0.5,-0.5) rectangle ++(1,1);
\draw[edge] (0,-0.5) \start\north\go\west;
\end{scope}
}
\def\plaqdd(#1,#2){
\begin{scope}[shift={(#1,#2)}]
\draw[dotted,bgplaq] (-0.5,-0.5) rectangle ++(1,1);
\draw[edge] (0,0.5) \start\south\go\east;
\end{scope}
}
\def\plaqc(#1,#2){
\begin{scope}[shift={(#1,#2)}]
\draw[dotted,bgplaq] (-0.5,-0.5) rectangle ++(1,1);
\draw[edge] (0,-0.5) \start\north\go\north;
\end{scope}
}
\def\plaqcc(#1,#2){
\begin{scope}[shift={(#1,#2)}]
\draw[dotted,bgplaq] (-0.5,-0.5) rectangle ++(1,1);
\draw[edge] (-0.5,0) \start\east\go\east;
\end{scope}
}
\def\plaqg(#1,#2){
\begin{scope}[shift={(#1,#2)}]
\draw[dotted,bgplaq] (-0.5,-0.5) rectangle ++(1,1);
\end{scope}
}
\def\cross(#1,#2){
\begin{scope}[shift={(#1,#2)}]
\draw[thick,blue] (-0.1,-0.1) -- (0.1,0.1); 
\draw[thick,blue] (-0.1,0.1) -- (0.1,-0.1); 
\end{scope}
}
\def\gcross(#1,#2){
\begin{scope}[shift={(#1,#2)}]
\draw[thick,green!50!black] (-0.1,-0.1) -- (0.1,0.1); 
\draw[thick,green!50!black] (-0.1,0.1) -- (0.1,-0.1); 
\end{scope}
}
\def\scross(#1,#2){
\begin{scope}[shift={(#1,#2)}]
\draw[thick,blue] (-0.07,-0.07) -- (0.07,0.07); 
\draw[thick,blue] (-0.07,0.07) -- (0.07,-0.07); 
\end{scope}
}
\def\triplaq(#1,#2,#3,#4,#5){
\begin{scope}[shift={(#1,#2)}]
\draw[dgreen] (1.6,-1) -- (0,0) -- (1.6,1);
\draw(0,0) node[bblob] {};
\draw(1.6,1) node[bblob] {};
\draw(1.6,-1) node[bblob] {};
\draw(0,0) node[left] {${#3}$};
\draw(1.6,1) node[right] {${#4}$};
\draw(1.6,-1) node[right] {${#5}$};
\draw[aline=0.9] (0.8,0.5) -- ( (0.8,-0.5);
\cross(0.8,0);
\draw[blue,wavy] (1.6,0) -- (0.8,0);
\end{scope}
}
\def\recpplaq(#1,#2,#3,#4,#5,#6){
\begin{scope}[shift={(#1,#2)}]
\draw[dgreen] (-0.5,-0.5) -- (0.5,-0.5);
\draw[dgreen] (-0.5,0.5) -- (0.5,0.5);
\draw (-0.5,-0.5) node[left] {${#6}$};
\draw (0.5,-0.5)  node[right] {${#5}$};
\draw (-0.5,0.5) node[left] {${#3}$};
\draw  (0.5,0.5) node[right] {${#4}$};
\draw (-0.5,-0.5) node[bblob] {};
\draw (0.5,-0.5)  node[bblob] {};
\draw (-0.5,0.5) node[bblob] {};
\draw  (0.5,0.5) node[bblob] {};
\draw[aline=0.9] (0,0.5) -- ( (0,-0.5);
\draw[blue,wavy=0.2] (-0.5,0) -- (0.5,0);
\end{scope}
}
\def\recmplaq(#1,#2,#3,#4,#5,#6){
\begin{scope}[shift={(#1,#2)}]
\draw[dgreen] (-0.5,-0.5) -- (0.5,-0.5);
\draw[dgreen] (-0.5,0.5) -- (0.5,0.5);
\draw (-0.5,-0.5) node[left] {${#6}$};
\draw (0.5,-0.5)  node[right] {${#5}$};
\draw (-0.5,0.5) node[left] {${#3}$};
\draw  (0.5,0.5) node[right] {${#4}$};
\draw (-0.5,-0.5) node[bblob] {};
\draw (0.5,-0.5)  node[bblob] {};
\draw (-0.5,0.5) node[bblob] {};
\draw  (0.5,0.5) node[bblob] {};
\draw[aline=0.9] (0,0.5) -- ( (0,-0.5);
\draw[blue,wavy=0.2] (0.5,0) -- (-0.5,0);
\end{scope}
}
\def\recpdashplaq(#1,#2,#3,#4,#5,#6){
\begin{scope}[shift={(#1,#2)}]
\draw[dgreen,dashed] (-0.5,-0.5) -- (0.5,-0.5);
\draw[dgreen] (-0.5,0.5) -- (0.5,0.5);
\draw (-0.5,-0.5) node[left] {${#6}$};
\draw (0.5,-0.5)  node[right] {${#5}$};
\draw (-0.5,0.5) node[left] {${#3}$};
\draw  (0.5,0.5) node[right] {${#4}$};
\draw (-0.5,-0.5) node[bblob] {};
\draw (0.5,-0.5)  node[bblob] {};
\draw (-0.5,0.5) node[bblob] {};
\draw  (0.5,0.5) node[bblob] {};
\draw[aline=0.9] (0,0.5) -- ( (0,-0.5);
\draw[blue,wavy=0.2] (-0.5,0) -- (0.5,0);
\end{scope}
}
\def\recmdashplaq(#1,#2,#3,#4,#5,#6){
\begin{scope}[shift={(#1,#2)}]
\draw[dgreen,dashed] (-0.5,-0.5) -- (0.5,-0.5);
\draw[dgreen] (-0.5,0.5) -- (0.5,0.5);
\draw (-0.5,-0.5) node[left] {${#6}$};
\draw (0.5,-0.5)  node[right] {${#5}$};
\draw (-0.5,0.5) node[left] {${#3}$};
\draw  (0.5,0.5) node[right] {${#4}$};
\draw (-0.5,-0.5) node[bblob] {};
\draw (0.5,-0.5)  node[bblob] {};
\draw (-0.5,0.5) node[bblob] {};
\draw  (0.5,0.5) node[bblob] {};
\draw[aline=0.9] (0,0.5) -- ( (0,-0.5);
\draw[blue,wavy=0.2] (0.5,0) -- (-0.5,0);
\end{scope}
}
\def\sosplaq(#1,#2){
\begin{scope}[shift={(#1,#2)}]
\draw[dgreen] (-0.5,-0.5) rectangle ++(1,1);
\draw(0.5,0.5) node[bblob] {};
\draw(0.5,-0.5) node[bblob] {};
\draw(-0.5,0.5) node[bblob] {};
\draw(-0.5,-0.5) node[bblob] {};
\draw[dgreen] (-0.5,0.2) -- (-0.3,0.5);
\end{scope}
}
\def\udashsosplaq(#1,#2){
\begin{scope}[shift={(#1,#2)}]
\draw[dgreen,dashed] (-0.5,0.5) -- (0.5,0.5) -- (0.5,-0.5);
\draw[dgreen] (-0.5,0.5)-- (-0.5,-0.5) -- (0.5,-0.5);
\draw(0.5,0.5) node[bblob] {};
\draw(0.5,-0.5) node[bblob] {};
\draw(-0.5,0.5) node[bblob] {};
\draw(-0.5,-0.5) node[bblob] {};
\draw[dgreen] (-0.5,0.2) -- (-0.3,0.5);
\end{scope}
}
\def\lsosplaq(#1,#2){
\begin{scope}[shift={(#1,#2)}]
\draw[dgreen] (-0.5,-0.5) rectangle (0.5,0.2) ++(1,1);
\draw(0.5,0.2) node[bblob] {};
\draw(0.5,-0.5) node[bblob] {};
\draw(-0.5,0.2) node[bblob] {};
\draw(-0.5,-0.5) node[bblob] {};
\draw[dgreen] (-0.5,-0.1) -- (-0.3,0.2);
\end{scope}
}
\def\usosplaq(#1,#2){
\begin{scope}[shift={(#1,#2)}]
\draw[dgreen] (-0.5,-0.2) rectangle (0.5,0.5) ++(1,1);
\draw(0.5,0.5) node[bblob] {};
\draw(0.5,-0.2) node[bblob] {};
\draw(-0.5,0.5) node[bblob] {};
\draw(-0.5,-0.2) node[bblob] {};
\draw[dgreen] (-0.5,0.2) -- (-0.3,0.5);
\end{scope}
}
\def\lslantsosplaq(#1,#2){
\begin{scope}[shift={(#1,#2)}]
\draw[dgreen] (-0.5,-0.5) -- (0.5,-0.5) -- (0.5,0.2) -- (-0.5,0.5) -- (-0.5,-0.5) ++(1,1);
\draw(0.5,0.2) node[bblob] {};
\draw(0.5,-0.5) node[bblob] {};
\draw(-0.5,0.5) node[bblob] {};
\draw(-0.5,-0.5) node[bblob] {};
\draw[dgreen] (-0.5,0.2) -- (-0.3,0.4);
\end{scope}
}
\def\uslantsosplaq(#1,#2){
\begin{scope}[shift={(#1,#2)}]
\draw[dgreen] (-0.5,-0.5) -- (0.5,-0.2) -- (0.5,0.5) -- (-0.5,0.5) -- (-0.5,-0.5) ++(1,1);
\draw(0.5,0.5) node[bblob] {};
\draw(0.5,-0.2) node[bblob] {};
\draw(-0.5,0.5) node[bblob] {};
\draw(-0.5,-0.5) node[bblob] {};
\draw[dgreen] (-0.5,0.2) -- (-0.3,0.5);
\end{scope}
}
\def\luslantsosplaq(#1,#2){
\begin{scope}[shift={(#1,#2)}]
\draw[dgreen] (-0.5,-0.2) -- (0.5,-0.5) -- (0.5,0.5) -- (-0.5,0.5) -- (-0.5,-0.2) ++(1,1);
\draw(0.5,0.5) node[bblob] {};
\draw(0.5,-0.5) node[bblob] {};
\draw(-0.5,0.5) node[bblob] {};
\draw(-0.5,-0.2) node[bblob] {};
\draw[dgreen] (-0.5,0.2) -- (-0.3,0.5);
\end{scope}
}
\def\rcasosplaq(#1,#2){
\begin{scope}[shift={(#1,#2)}]
\draw[dgreen] (-0.75,-0.5) -- (0.5,-0.5) -- (0.5,0.5) -- (-0.5,0.5) -- (-0.75,-0.5) ++(1,1);
\draw(0.5,0.5) node[bblob] {};
\draw(0.5,-0.5) node[bblob] {};
\draw(-0.5,0.5) node[bblob] {};
\draw(-0.75,-0.5) node[bblob] {};
\draw[dgreen] (-0.6,0.15) -- (-0.3,0.5);
\end{scope}
}
\def\rcbsosplaq(#1,#2){
\begin{scope}[shift={(#1,#2)}]
\draw[dgreen] (-0.5,-0.5) -- (0.75,-0.5) -- (0.5,0.5) -- (-0.5,0.5) -- (-0.5,-0.5) ++(1,1);
\draw(0.5,0.5) node[bblob] {};
\draw(0.75,-0.5) node[bblob] {};
\draw(-0.5,0.5) node[bblob] {};
\draw(-0.5,-0.5) node[bblob] {};
\draw[dgreen] (-0.5,0.2) -- (-0.3,0.5);
\end{scope}
}
\def\rwidelsosplaq(#1,#2){
\begin{scope}[shift={(#1,#2)}]
\draw[dgreen] (-0.5,-0.2) rectangle (1,0.5) ++(1,1);
\draw(1,0.5) node[bblob] {};
\draw(1,-0.2) node[bblob] {};
\draw(-0.5,0.5) node[bblob] {};
\draw(-0.5,-0.2) node[bblob] {};
\draw[dgreen] (-0.5,0.2) -- (-0.3,0.5);
\end{scope}
}
\def\rwidesosplaq(#1,#2){
\begin{scope}[shift={(#1,#2)}]
\draw[dgreen] (-0.5,-0.5) rectangle (1,0.5) ++(1,1);
\draw(1,0.5) node[bblob] {};
\draw(1,-0.5) node[bblob] {};
\draw(-0.5,0.5) node[bblob] {};
\draw(-0.5,-0.5) node[bblob] {};
\draw[dgreen] (-0.5,0.2) -- (-0.3,0.5);
\end{scope}
}
\def\tallsosplaq(#1,#2){
\begin{scope}[shift={(#1,#2)}]
\draw[dgreen] (-0.5,-0.5) rectangle (0.5,0.8) ++(1,1);
\draw(0.5,0.8) node[bblob] {};
\draw(0.5,-0.5) node[bblob] {};
\draw(-0.5,0.8) node[bblob] {};
\draw(-0.5,-0.5) node[bblob] {};
\draw[dgreen] (-0.5,0.5) -- (-0.3,0.8);
\end{scope}
}
\tikzset{
  wavy/.style={
    decorate,
    decoration={
      prewavy,
      meta-amplitude=#1,
      meta-segment length=0.3cm,
      amplitude=1.5pt, 
      segment length=6pt 
},
    postaction={decorate,ultra thick,decoration={markings,mark = at position #1 with {\arrow{>}}}}        
  },  
  wavy/.default=0.5
}
\tikzset{oper/.style={rectangle,fill,inner sep=2.5pt}}
\tikzset{arr/.style={postaction={decorate,thick,decoration={markings,mark = at position #1 with {\arrow{>}}}}}}
\tikzset{
  abline/.style={blue,dashed,
    decorate,
    decoration={
      meta-amplitude=#1,
      meta-segment length=0.3cm,
},
    postaction={decorate,ultra thick,decoration={markings,mark = at position #1 with {\arrow{>}}}}        
  },
  abline/.default=0.5
}
\tikzset{
  aline/.style={
    decorate,
    decoration={
      meta-amplitude=#1,
      meta-segment length=0.3cm,
},
    postaction={decorate,ultra thick,decoration={markings,mark = at position #1 with {\arrow{>}}}}        
  },
  aline/.default=0.5
}
\tikzset{
  agline/.style={green!50!black,thick,
    decorate,
    
    decoration={
      meta-amplitude=#1,
      meta-segment length=0.3cm,
},
    postaction={decorate,ultra thick,decoration={markings,mark = at position #1 with {\arrow{>}}}}        
  },
  agline/.default=0.5
}
\tikzset{
  gline/.style={color=green!50!black,thick}}
  \tikzset{
  blueline/.style={blue!50,thick,
    decorate,
    decoration={
      meta-amplitude=#1,
      meta-segment length=0.3cm,
},
    postaction={decorate,ultra thick,decoration={markings,mark = at position #1 with {\arrow{>}}}}        
  },
  aline/.default=0.5
}
  \tikzset{
  redline/.style={red!50,thick,
    decorate,
    decoration={
      meta-amplitude=#1,
      meta-segment length=0.3cm,
},
    postaction={decorate,ultra thick,decoration={markings,mark = at position #1 with {\arrow{>}}}}        
  },
  aline/.default=0.5
}
 \tikzset{
  dashblueline/.style={blue!50,thick,dashed,
    decorate,
    decoration={
      meta-amplitude=#1,
      meta-segment length=0.3cm,
},
    postaction={decorate,ultra thick,decoration={markings,mark = at position #1 with {\arrow{>}}}}        
  },
  aline/.default=0.5
}
  \tikzset{
  dashredline/.style={red!50,thick,dashed,
    decorate,
    decoration={
      meta-amplitude=#1,
      meta-segment length=0.3cm,
},
    postaction={decorate,ultra thick,decoration={markings,mark = at position #1 with {\arrow{>}}}}        
  },
  aline/.default=0.5
}
\tikzset{
  dashline/.style={dashed,thick,
    decorate,
    decoration={
      meta-amplitude=#1,
      meta-segment length=0.3cm,
},
    postaction={decorate,ultra thick,decoration={markings,mark = at position #1 with {\arrow{>}}}}        
  },
  aline/.default=0.5
}
 \tikzset{
  greenline/.style={green!50!black,thick,
    decorate,
    decoration={
      meta-amplitude=#1,
      meta-segment length=0.3cm,
},
    postaction={decorate,ultra thick,decoration={markings,mark = at position #1 with {\arrow{>}}}}        
  },
  aline/.default=0.5
}
 \tikzset{
  dashgreenline/.style={green!50!black,thick,dashed,         decorate,
    decoration={
      meta-amplitude=#1,
      meta-segment length=0.3cm,
},
    postaction={decorate,ultra thick,decoration={markings,mark = at position #1 with {\arrow{>}}}}        
  },
  aline/.default=0.5
}
\newcommand{\nc}{\newcommand}
\newcommand{\rnc}{\renewcommand}
\setlist[itemize,1]{leftmargin=.4in}
\setlist[enumerate,1]{leftmargin=.4in,label=(\roman*)}
\setlist[description,1]{leftmargin=.4in,font=\normalfont\itshape}
\nc{\qq}{\qquad}
\nc{\qu}{\quad}
\newtheorem{thrm}{Theorem}[section]
\newtheorem{prop}[thrm]{Proposition}
\newtheorem{crl}[thrm]{Corollary}
\newtheorem{lemma}[thrm]{Lemma}
\theoremstyle{definition}
\newtheorem{defn}[thrm]{Definition}
\theoremstyle{remark}
\newtheorem{rmk}[thrm]{Remark}
\nc{\rmkend}{\ensuremath{\diameter}}
\nc{\examend}{\ensuremath{\diameter}}
\nc{\defnend}{\ensuremath{\diameter}}
\numberwithin{equation}{section}
\nc{\eq}[1]{\begin{equation} #1 \end{equation}}
\nc{\eqrefs}[2]{\text{(\ref{#1}-\ref{#2})}}
\rnc\appendixname{}
\nc{\al}{\alpha}
\nc{\be}{\beta}
\nc{\eps}{\epsilon}
\nc{\veps}{\varepsilon}
\nc{\ga}{\gamma}
\nc{\del}{\delta}
\nc{\Del}{\Delta}
\nc{\ze}{\zeta}
\nc{\ka}{\kappa}
\nc{\la}{\lambda}
\nc{\La}{\Lambda}
\nc{\vrho}{\varrho}
\nc{\si}{\sigma}
\nc{\Si}{\Sigma}
\nc{\ups}{\upsilon}
\nc{\vphi}{\varphi}
\nc{\om}{\omega}
\nc{\Om}{\Omega}
\nc{\A}{\mathbb{A}}
\nc{\C}{\mathbb{C}}
\nc{\F}{\mathbb{F}}
\nc{\K}{\mathbb{K}}
\nc{\N}{\mathbb{N}}
\nc{\Q}{\mathbb{Q}}
\nc{\R}{\mathbb{R}}
\nc{\Z}{\mathbb{Z}}
\nc{\mfgl}{\mathfrak{g}\mathfrak{l}}
\nc{\mfsl}{\mathfrak{s}\mathfrak{l}}
\nc{\mfb}{\mathfrak{b}}
\nc{\mfg}{\mathfrak{g}}
\nc{\mfh}{\mathfrak{h}}
\nc{\mfk}{\mathfrak{k}}
\nc{\mfn}{\mathfrak{n}}
\nc{\ot}{\otimes}
\nc{\h}{{\sf ht}}
\nc{\e}{{\sf e}}
\nc{\id}{{\sf id}}
\nc{\Id}{{\sf Id}}
\rnc{\t}{{\sf t}}
\DeclareMathOperator{\Ad}{Ad}
\DeclareMathOperator{\End}{End}
\DeclareMathOperator{\GL}{GL}
\nc{\wb}{\overline}
\nc{\wh}{\widehat}
\nc{\wt}{\widetilde}
\nc{\mc}{\mathcal}
\nc{\mf}{\mathfrak}
\nc{\red}{\color{red}}
\nc{\blu}{\color{blue}}
\nc{\br}{\color{brown}}
\nc{\grn}{\color{green!55!black}}
\nc{\gry}{\color{gray}}
\nc{\cA}{{\mathcal{A}}}
\nc{\cB}{{\mathcal{B}}}
\nc{\cD}{{\mathcal{D}}}
\nc{\cF}{{\mathcal{F}}}
\nc{\cG}{{\mathcal{G}}}
\nc{\cK}{{\mathcal{K}}}
\nc{\cM}{{\mathcal{M}}}
\nc{\cL}{{\mathcal{L}}}
\nc{\cO}{{\mathcal{O}}}
\nc{\cP}{{\mathcal{P}}}
\nc{\cQ}{{\mathcal{Q}}}
\nc{\cR}{{\mathcal{R}}}
\nc{\cT}{{\mathcal{T}}}
\nc{\cV}{{\mathcal{V}}}
\nc{\ba}{{\bar a}}
\nc{\bb}{{\bar b}}
\nc{\bD}{{\wb D}}
\nc{\bcA}{\bar \cA}
\nc{\bcK}{{\wb\cK}}
\nc{\bcL}{{\wb\cL}}
\nc{\bM}{{\wb M}}
\nc{\bcM}{{\wb\cM}}
\nc{\bp}{{\wb p}}
\nc{\bcQ}{{\wb\cQ}}
\nc{\bR}{{\wb R}}
\nc{\bcT}{{\wb\cT}}
\nc{\brho}{{\bar\vrho}}
\nc{\bpi}{\wb \pi}
\nc{\adag}{a^\dag}
\nc{\badag}{\ba^\dag}
\nc{\tc}{\wt c}
\nc{\tcK}{{\wt\cK}}
\nc{\tbcK}{{\wt\bcK}}
\nc{\tcL}{{\wt\cL}}
\nc{\tbcL}{{\wt\bcL}}
\nc{\txi}{\wt \xi}
\nc{\tK}{\wt K}
\nc{\uqbp}{U_q(\wh\mfb^+)} 
\nc{\uqbm}{U_q(\wh\mfb^-)} 
\nc{\uq}{U_q(\wh{\mathfrak{sl}}_2)}
\begin{document}

\title{A Q-operator for open spin chains II: boundary factorization}
\author{Alec Cooper$^1$}
\author{Bart Vlaar$^{1,2,3}$}
\author{Robert Weston$^1$}
\address{
$^1$Department of Mathematics, Heriot-Watt University, Edinburgh, EH14 4AS, Scotland U.K.}
\address{
$^2$Beijing Institute for Mathematical Sciences and Applications, 544 Hefangkou Village, Huaibei Town, Huairou District, Beijing, 101408, China}
\address{
$^3$Max Planck Institute for Mathematics, Vivatsgasse 7, 53111 Bonn, Germany}
\email{a.cooper@hw.ac.uk, b.vlaar@bimsa.cn, r.a.weston@hw.ac.uk}

\begin{abstract} 
One of the features of Baxter's Q-operators for many closed spin chain models is that all transfer matrices arise as products of two Q-operators with shifts in the spectral parameter.
In the representation-theoretical approach to Q-operators, underlying this is a factorization formula for L-operators (solutions of the Yang-Baxter equation associated to particular infinite-dimensional representations).
To have such a formalism to open spin chains, one needs a factorization identity for solutions of the reflection equation (boundary Yang-Baxter equation) associated to these representations. 
In the case of quantum affine $\mathfrak{sl}_2$ and diagonal K-matrices, we derive such an identity using the recently formulated theory of universal K-matrices for quantum affine algebras.
\end{abstract}

\subjclass[2020]{Primary 81R10, 81R12, 81R50; Secondary 16T05, 16T25, 39B42}

\maketitle

\setcounter{tocdepth}{1} 

\tableofcontents


\section{Introduction}

\subsection{Background and overview}

Baxter first introduced his Q-operator in \cite{Ba72,Ba73} as an auxiliary tool in the derivation of Bethe Equations for the eigenvalues of the 8-vertex model transfer matrix. 
The key characters in the story are the transfer matrix $\mc{T}(z)$ and the Q-operator $\mc Q(z)$. 
A detailed description of the essential properties of $\mc{T}(z)$ and $\mc Q(z)$ can be found in \cite{BLZ97} (also see \cite{VW20} and references therein); the key relation that they satisfy that leads directly to the Bethe equations is of the form
\eq{
\mc T(z) \mc Q(z)= \alpha_+(z) \mc Q(q z) + \alpha_-(z) \mc Q(q^{-1}z),\label{eq:TQ1} \\
}
where $\alpha_\pm(z)$ are meromorphic functions and $q\in \C^\times$ is not a root of unity.

In the original papers of Baxter, the operator $\cQ(z)$ was constructed by a brilliant but ad hoc argument; the representation-theoretic construction of $\cQ(z)$ had to wait more than 20 years until the work of Bazhanov, Lukyanov and Zamolodchikov \cite{BLZ96,BLZ97,BLZ99}. 
The main idea of the latter approach is to construct both $\mc T(z)$ and $\mc Q(z)$ as partial traces over different representations of the universal R-matrix $\cR$ of $\uq$. 
The operator $\mc T(z)$ is a twisted trace over a two-dimensional $\uq$-representation $\Pi_z$, and $\mc Q(z)$ is a similarly twisted trace over an infinite-dimensional $U_q(\wh\mfb^+)$-representation $\rho_z$, where $\wh\mfb^+$ is the upper Borel subalgebra of $\wh\mfsl_2$ (the relevant representations are defined in Section \ref{sec:reps:plus} of the current paper). 
The relation \eqref{eq:TQ1} for closed spin chains then follows immediately by considering a short exact sequence (SES) of $\uqbp$-representations with $\Pi_z\ot \rho_z $ as its `middle' object (cf. \cite[Lem.~2 (2)]{FR99}).
For an arbitrary untwisted affine Lie algebra $\wh\mfg$ with upper Borel subalgebra $\wh\mfb^+$, the level-0 representation theory of $U_q(\wh\mfb^+)$ was studied in \cite{HJ12}; for the general connection with the theory of Baxter's Q-operators see \cite{FH15}.

As well as this direct SES route to the equation, there is an alternative strategy which we refer to as the ``factorization approach''; for closed chains see \cite{BS90,De05,DKK06,De07,BJMST09,BLMS10}. In fact, this approach was the one taken by Bazhanov, Lukyanov and Zamolodchikov.
The work that developed this formalism in language most similar to the current paper, is \cite{KT14}.

In this approach, a second operator $\wb{ \mc Q}(z)$ with similar properties to $\mc Q(z)$ is introduced as a trace of $\cR$ over another infinite-dimensional representation $\brho_z $ of $U_q(\wh\mfb^+)$. 
The affinized version $\ups_z$ of the $U_q(\mfsl_2)$-Verma module is also considered as well as an another infinite-dimensional filtered $\uqbp$-module $\phi_z$; these two representations depend on a complex parameter $\mu$.
The key connection between all representations is given by Theorem \ref{thm:O:plus}, which expresses the fact that particular pairwise tensor products are isomorphic as $U_q(\wh\mfb^+)$-modules by means of an explicit intertwiner $\cO$ (defined in Section \ref{sec:O+} of the current paper). 
At the level of the L-operators this implies
\eq{ \label{factorization:bulk:intro}
\cO_{12} \cL_{\vrho}(q^\mu z)_{13} \cL_{\brho}(q^{-\mu} z)_{23} =  
\cL_{\ups}(z)_{13} \cL_{\phi}(z)_{23} \cO_{12}, 
}
(see Theorem \ref{thm:fund} of the current paper), which is referred to as \emph{factorization} of the Verma module L-operator $\cL_\ups(z)$ in terms of the L-operators $\cL_{\vrho}(z)$ and $\cL_{\brho}(z)$ which are used to define $\mc Q(z)$, $\wb{\mc Q}(z)$ (the transfer matrix corresponding to the additional operator $\cL_{\phi} (z)$ is trivial).

Defining $\mc T_{\mu}(z)$ to be the transfer matrix that is the trace over the $\mu$-dependent representation $\ups_{z}$ of $\cR$ in the first space, Theorem \ref{thm:fund} yields a relation of the following form:
\eq{ 
\mc T_{\mu}(z) \: \propto \: \mc Q(zq^{-\mu/2}) \wb{\mc Q}(zq^{\mu/2}).\label{eq:TQQ}
}
The SES associated with $\ups_{z}$ in the case $\mu$ is an integer then leads to the key relation \eqref{eq:TQ1}. 

\subsection{Present work}

In the current work we are interested in an analogue of \eqref{factorization:bulk:intro} for open chains, setting out an approach to Q-operators which complements the SES approach of \cite{VW20}. 

The problem of Q-operators for open XXZ chains with diagonal boundaries was discussed in \cite{BT18} and in \cite{Ts21}.
The XXX version of this problem was solved already in \cite{FS15}.
Earlier, Baxter TQ-relations with more general boundary conditions were found in \cite{YNZ06} (XXZ) and \cite{YZ06} (XYZ) by spin-$j$ transfer matrix asymptotics.

Our main result is the following analogue of Theorem \ref{thm:fund}, which we call the \emph{boundary factorization identity} and answers in the positive a question raised in \cite[Sec.~5]{BT18}:
\eq{ \label{factorization:boundary:intro}
\cK_\ups(z)_1 \cR_{\ups\phi}(z^2) \cK_\phi(z)_2 \,\cO = \cO \cK_\vrho(q^{\mu}z)_1 \cR_{\vrho\brho}(z^2) \cK_\brho(q^{-\mu}z)_2
} 
where $z$ is a formal parameter (which can be specialized to generic complex numbers). 
The precise statement is given in Theorem \ref{thm:keyrelation:right}. 
This formula involves the actions of the universal R-matrix of $U_q(\wh\mfsl_2)$ in tensor products of the various infinite-dimensional representations introduced.
In addition, the various K-operators are diagonal solutions of reflection equations (boundary Yang-Baxter equations) \cite{Ch84,Sk88}.
They arise as actions of the universal K-matrix associated to the augmented q-Onsager algebra, a particular coideal subalgebra of $U_q(\wh\mfsl_2)$, which featured also in e.g.~\cite{BB13,RSV15,BT18,VW20}.
More precisely, diagonal solutions of the reflection equation with a free parameter, considered by Sklyanin in his 2-boundary version of the algebraic Bethe ansatz in \cite{Sk88}, are intertwiners for this algebra.

Equation \eqref{factorization:boundary:intro} has a natural diagrammatic formulation, see Section \ref{sec:boundaryfactorization}.
In a subsequent paper the authors will explain how \eqref{factorization:boundary:intro} yields relations analogous to \eqref{eq:TQQ} and hence \eqref{eq:TQ1} for open chains.\\

The proof of \eqref{factorization:boundary:intro} and of the well-definedness of the various K-operators is an application of the universal K-matrix formalism developed in \cite{AV22a,AV22b} which is built on the earlier works \cite{BW18,BK19}.
More precisely, it relies on an extension of the theory of K-matrices for finite-dimensional representations of quantum affine algebras in \cite{AV22b} to level-0 representations of $U_q(\wh\mfb^+)$, which we discuss in Section \ref{sec:augmentedqOns}.
The key point is that, for the special case of the augmented q-Onsager algebra there exists a universal element $\cK$, centralizing this subalgebra up to a twist, simultaneously satisfying three desirable properties.
\begin{enumerate}
    \item The element $\cK$ lies in (a completion of) the Borel subalgebra $U_q(\wh\mfb^+)$, so that the resulting family of linear maps is itself compatible with $U_q(\wh \mfb^+)$-intertwiners (which play an essential role in the algebraic theory of Baxter Q-operators).
    \item The coproduct of $\cK$ is of a particularly simple form, which is relevant for the proof of the boundary factorization identity.
    \item The linear operators accomplishing the action of $\cK$ in level-0 representations satisfy the untwisted reflection equation. 
\end{enumerate}  
Thus we obtain the factorization identity \eqref{factorization:boundary:intro} as a natural consequence of the representation theory of $U_q(\wh\mfsl_2)$. 
The main benefit of this universal approach is that laborious linear-algebraic computations are avoided; in particular, we not even need explicit expressions for the various factors.
Nevertheless, we do provide these explicit expressions, as we expect them to be useful in further work in this direction.
We also give an alternative computational proof of \eqref{factorization:boundary:intro}, to illustrate the power of the universal approach.

This is a `boundary counterpart' to the level-0 theory of the universal R-matrix, which we also include for reference.
We do this in Section \ref{sec:Uqhatsl2}, staying close to the original work by Drinfeld and Jimbo \cite{Dr85,Dr86,Ji86a,Ji86b}.
In particular, Theorem \ref{thm:R(z):action} states that the grading-shifted universal R-matrix has a well-defined action as a linear-operator-valued formal power series on any tensor product of level-0 representations of $U_q(\wh\mfb^+)$ and $U_q(\wh\mfb^-)$ (including finite-dimensional representations).
Often this well-definedness is tacitly assumed, see e.g.~\cite[Sec.~2.3]{VW20}.
It also follows from the Khoroshkin-Tolstoy factorization \cite{KT92} of the universal R-matrix, see also \cite{BGKNR10,BGKNR13,BGKNR14}; however we are unaware of such a factorization for the universal K-matrix.

\subsection{Outline}

In Section \ref{sec:Uqhatsl2} we study the action of the universal R-matrix of quantum affine $\mfsl_2$ on tensor products of level-0 representations of Borel subalgebras.
Section \ref{sec:augmentedqOns} is a `boundary analogue' to Section \ref{sec:Uqhatsl2}, where we consider the augmented q-Onsager algebra. 
We show that its \emph{(semi-)standard} universal K-matrix, see \cite{AV22a,AV22b}, has a well-defined action on level-0 representations of $U_q(\wh\mfb^+)$, see Theorem \ref{thm:K(z):action}, and satisfies the above three desirable properties.

In Section \ref{sec:Borelreps} we discuss the relevant representations of $U_q(\wh\mfb^+)$ in terms of (an extension of) the q-oscillator algebra, as well as the $U_q(\wh\mfb^+)$-intertwiner $\cO$. 
Various solutions of Yang-Baxter equations are obtained in Section \ref{sec:LandR} as actions of the universal R-matrix in tensor products of Borel representations. 
Similarly, in Section \ref{sec:K} we introduce solutions of the reflection equation as actions of the universal K-matrix in Borel representations.

We revisit the SES approach to Baxter's Q-operators for the open XXZ spin chain in light of the universal K-matrix formalism in Section \ref{sec:fusionintw}.
Next, in Section \ref{sec:boundaryfactorization} we give a diagrammatic motivation of the boundary factorization identity \eqref{factorization:boundary:intro} for the open XXZ spin chain, and provide a short proof using the level-0 theory developed in Section \ref{sec:augmentedqOns}.
Finally in Section \ref{sec:discussion} we summarize the main results and point out future work.

Some supplementary material is given in appendices. 
Namely, Appendix \ref{app:qexp} provides some background material on deformed Pochhammer symbols and exponentials. 
Moreover, Appendix \ref{app:R-operators} contains derivations of the explicit expressions of the two R-operators appearing in \eqref{factorization:boundary:intro}.
In Appendix \ref{app:altproof} we provide a computational alternative proof of the boundary factorization identity \eqref{factorization:boundary:intro}, relying on the explicit expressions of all involved factors.
The key tool of this proof is provided by Lemma \ref{lem:qexp:auxeqns}, which consists in two product formulas involving deformed Pochhammer symbols and exponentials. 
We emphasize that the main text and its results do not rely on Appendices \ref{app:R-operators} and \ref{app:altproof}.

\subsection*{Acknowledgments}

B.V. would like to thank A. Appel, P. Baseilhac and N. Reshetikhin for useful discussions. 
This research was supported in part by funding from EPSRC grant EP/R009465/1, from the Simons Foundation and the Centre de Recherches Mathématiques (CRM), through the Simons-CRM scholar-in-residence programme, and by the Galileo Galilei Institute (GGI) scientific programme on ‘Randomness, Integrability and Universality’. 
R.W. would like to acknowledge and thank CRM and the GGI for their hospitality and support.
A.C. is grateful for support from EPSRC DTP award EP/M507866/1.

\subsection*{Data availability statement}

Data sharing is not applicable to this article as no datasets were generated or analysed during the current study.

\section{Quantum affine $\mfsl_2$ and its universal R-matrix} \label{sec:Uqhatsl2}

In this section we study the action of the universal R-matrix of the quasitriangular Hopf algebra quantum affine $\mfsl_2$ on tensor products of level-0 representations (including infinite-dimensional representations) of the Borel subalgebras.
We give a basic survey of the algebras involved, the representations and the quasitriangular structure and show that the universal R-matrix has a well-defined action on tensor products of all level-0 representations of the Borel subalgebras.

\subsection{General overview of finite-dimensional R-matrix theory}

To formulate a quantum integrable system in terms of a transfer matrix built out of R-matrices, one needs finite-dimensional representations of a suitable quasitriangular Hopf algebra.
To get trigonometric R-matrices, one can proceed as follows.

Let $\mfg$ be a finite-dimensional simple Lie algebra and note that the untwisted loop algebra $L\mfg = \mfg \ot \C[t,t^{-1}]$ has a central extension $\wh{\mfg} = L\mfg \oplus \C c$.
In turn, this can be extended to $\wt{\mfg} = \wh{\mfg} \oplus \C d$ where $d$ satisfies $[d,\cdot] = t \frac{\sf d}{{\sf d}t}$.
For a fixed Cartan subalgebra $\mfh \subset \mfg$ we define
\[ 
\wh{\mfh} := \mfh \oplus \C c, \qq \wt{\mfh} := \wh{\mfh} \oplus \C d.
\]
The Lie algebra $\wt\mfg$ is a Kac-Moody algebra and hence has a non-degenerate bilinear form $(\cdot,\cdot)$, which restricts to a non-degenerate bilinear form on $\wt\mfh$. See e.g.~\cite{Ka90} for more detail.

The universal enveloping algebras $U(\wh\mfg)$ and $U(\wt\mfg)$ can be q-deformed, yielding non-cocommutative Hopf algebras (Drinfeld-Jimbo quantum groups) $U_q(\wh\mfg)$ and $U_q(\wt\mfg)$, see e.g.~\cite{Dr85,Dr86,Ji86a,KT92,Lu94}.
The nondegenerate bilinear form $(\cdot,\cdot)$ lifts to $U_q(\wt{\mfg})$ inducing a pairing between the q-deformed Borel subalgebras and hence a quasitriangular structure.
On the other hand, the subalgebra $U_q(\wh{\mfg})$ has a rich finite-dimensional representation theory, see e.g.~\cite{CP94,CP95,Ch02,HJ12}. 
The grading-shifted universal R-matrix has a well-defined action on tensor products of finite-dimensional representations of $U_q(\wh{\mfg})$ as a formal power series, see e.g.~\cite{Dr86,FR92,KS95,EM03,He19}).
We now discuss the natural extension of this theory to level-0 representations of Borel subalgebras, including various infinite-dimensional representations.
We will restrict to the case $\mfg = \mfsl_2$ (but the theory generalizes to any quantum untwisted affine algebra).

\subsection{Quantum affine $\mfsl_2$}

Denoting the canonical Cartan generator of $\mfsl_2$ by $h_1$, $\wh \mfh$ is spanned by $h_0 = c-h_1$ and $h_1$. 
The bilinear form on $\wt\mfh$ is defined by 
\[
(h_0,h_0)=(h_1,h_1)=-(h_0,h_1)=2, \qq (h_0,d)=1, \qq (h_1,d)=(d,d)=0.
\]

Fix $\eps \in \C$ such that $q = \exp(\eps)$ is not a root of unity.
For all $\mu \in \C$ we will denote $\exp(\eps \mu)$ by $q^\mu$.
First, we define $U_q(\mfg)$ as the algebra generated over $\C$ by $e$, $f$ and invertible $k$ subject to the relations
\eq{
k e = q^2 e k, \qq k f = q^{-2} f k, \qq [e,f] = \frac{k-k^{-1}}{q-q^{-1}}.
}
The following assignments determine a coproduct $\Del: U_q(\mfg) \to U_q(\mfg) \ot U_q(\mfg)$:
\eq{ \label{Delta:def}
\Del(e) = e \ot 1 + k \ot e, \qq \Del(f) = f \ot k^{-1} + 1 \ot f, \qq \Del(k^{\pm 1}) = k^{\pm 1} \ot k^{\pm 1}.
}
It uniquely extends to a Hopf algebra structure on $U_q(\mfg)$.
Now the main algebra of interest, $U_q(\wh\mfg)$, arises as follows.
\begin{defn}[Quantum affine $\mfsl_2$] \label{def:Uqhatsl2} 
We denote by $U_q(\wh\mfg)$ the Hopf algebra generated by two triples $\{ e_i, f_i, k_i \}$ ($i \in \{0,1\}$), such that:
\begin{enumerate}
\item the following assignments for $i \in \{0,1\}$ define Hopf algebra embeddings from $U_q(\mfg)$ to $U_q(\wh\mfg)$:
\eq{ 
e \mapsto e_i, \qq f \mapsto f_i, \qq k \mapsto k_i;
}
\item the following cross relations are satisfied:
\begin{gather}
k_i k_j = k_j k_i, \qq k_i e_{j} = q^{-2} e_{j} k_i, \qq k_i f_{j} = q^2 f_{j} k_i, \qq [e_i,f_{j}] = 0, \\
[e_i,[e_i,[e_i,e_{j}]_{q^2}]_1]_{q^{-2}} = [f_i,[f_i,[f_i,f_{j}]_{q^2}]_1]_{q^{-2}} = 0,
\end{gather}
for $i \ne j$, where we have introduced the notation $[x,y]_p := xy-pyx$.\hfill\defnend
\end{enumerate}
\end{defn}

Consider the affine Cartan subalgebra $\wh\mfh = \C h_0 \oplus \C h_1$.
Note that its q-deformation $U_q(\wh\mfh) = \langle k_0^{\pm 1}, k_1^{\pm 1} \rangle$ is isomorphic to the group algebra of the affine co-root lattice 
\eq{
\wh Q^\vee = \Z h_0 + \Z h_1 \subset \wh\mfh.
}
The nontrivial diagram automorphism $\Phi$ of the affine Dynkin diagram, i.e. the nontrivial permutation of the index set $\{0,1\}$, lifts to a linear automorphism $\Phi$ of $\wh\mfh$ which preserves the lattice $\wh Q^\vee$.
Accordingly, it also lifts to an involutive Hopf algebra automorphism of $U_q(\wh \mfg)$, also denoted $\Phi$, via the assignments 
\eq{ \label{Phi:def}
\Phi(e_i) = e_{\Phi(i)}, \qq \Phi(f_i) = f_{\Phi(i)}, \qq \Phi(k_i^{\pm 1}) = k_{\Phi(i)}^{\mp 1} \qq \text{for } i \in \{0,1\}.
}

\subsection{Quantized Kac-Moody algebra}

To define the quantized Kac-Moody algebra $U_q(\wt\mfg)$, one chooses an extension $\wt Q^\vee$ of $\wh Q^\vee$ (a lattice of rank 3 contained in $\wt\mfh$) preserved by $\Phi$.

\begin{rmk}
The standard extension of the affine co-root lattice $\Z h_0 + \Z h_1 + \Z d$ is not so convenient for us, mainly in view of the construction of the universal K-matrix in Section \ref{sec:uniK}.
Namely, extensions of $\Phi$ to $\wt\mfh$ which are compatible with the bilinear form on $\wt\mfh$ do not preserve this lattice, see also \cite[Sec.~2.6]{Ko14} and \cite[Sec.~3.14]{AV22a}. \hfill \rmkend
\end{rmk}

The most convenient choice is to use the \emph{principal grading} and set 
\eq{
d_{\sf pr} := -\frac{1}{8} h_0 + \frac{3}{8} h_1 + 2 d \in \mfh,
}
so that
\[
(d_{\sf pr},h_0)=(d_{\sf pr},h_1)=1, \qq (d_{\sf pr},d_{\sf pr})=0.
\]
Now we set $\Phi(d_{\sf pr})=d_{\sf pr}$ and obtain a linear automorphism $\Phi$ of $\wt\mfh$ preserving the lattice 
\[
\wt Q^\vee := \Z h_0 + \Z h_1 + \Z d_{\sf pr}.
\]
The corresponding dual map on $\wt\mfh^*$, also denoted by $\Phi$, preserves the extended affine weight lattice 
\eq{
\wt P = \{ \la \in \wt\mfh^* \, | \, \la(\wt Q^\vee) \subseteq \Z \}.
}

Accordingly, we define $U_q(\wt\mfg)$ as the Hopf algebra obtained by extending $U_q(\wh\mfg)$ by a group-like element\footnote{
It is equal to $\exp(\eps d_{\sf pr})$ if we define $U_q(\wt\mfg)$ as a topological Hopf algebra over $\C[[\eps]]$.
} $g$ satisfying 
\eq{ \label{g:relations}
g e_i = q e_i g, \qq g f_i = q^{-1} f_i g, \qq g k_i = k_i g.
}
Hence, the assignment $\Phi(g)=g$ together with \eqref{Phi:def} defines an involutive Hopf algebra automorphism of $U_q(\wt\mfg)$.

\subsection{Co-opposite Hopf algebra structure}

For any $\C$-algebra $A$, denote by $\si$ the algebra automorphism of $A \ot A$ which sends $a \ot a'$ to $a' \ot a$ for all $a,a' \in A$.
If $X \in A \ot A$ we will also write $X_{21}$ for $\si(X)$.

If $A$ is a bialgebra with coproduct $\Del$, the \emph{co-opposite bialgebra}, denoted $A^{\sf cop}$, is the bialgebra with the same underlying algebra structure and counit as $A$ but with $\Del$ replaced by 
\eq{
\Del^{\sf op} := \sigma \circ \Del
}
(if $A$ is a Hopf algebra with invertible antipode $S$, then $A^{\sf cop}$ is also a Hopf algebra with antipode $S^{-1}$). 
The assignments
\eq{ \label{om:def}
\om(e_i) = f_i, \qq \om(f_i) = e_i, \qq \om(k_i^{\pm 1}) = k_i^{\mp 1} \qq \text{for } i \in \{0,1\}, \qq \qq \om(g) = g^{-1}
}
define a bialgebra isomorphism from $U_q(\wt\mfg)$ to $U_q(\wt\mfg)^{\sf cop}$ (in particular, $(\om \ot \om) \circ \Del = \Del^{\sf op} \circ \om$) which commutes with $\Phi$.

\subsection{Weight modules} 

We review some basic representation-theoretic notions for $U_q(\wt\mfg)$ by means of which its universal R-matrix can be described.
Consider the commutative subalgebra 
\eq{
U_q(\wt\mfh) = \langle k_0^{\pm 1}, k_1^{\pm 1}, g^{\pm 1} \rangle \subset U_q(\wt\mfg).
}
Call a $U_q(\wt\mfg)$-module $M$ a $U_q(\wt\mfh)$-weight module if 
\[
M = \bigoplus_{\la \in \wt P} M_\la, \qq M_\la = \{ m \in M \, | \, k_i \cdot m = q^{\la(h_i)} m \text{ for } i \in \{0,1\}, \, g \cdot m = q^{\la(d_{\sf pr})} m \}.
\]
Elements of $M_\la$ are said to have weight $\la$.
The adjoint action of $U_q(\wt\mfh)$ (with its generators acting by conjugation) endows $U_q(\wt\mfg)$ itself with a $U_q(\wt\mfh)$-weight module structure, with elements of $U_q(\wt\mfh)$ of weight 0.
More precisely, the weights of 
$U_q(\wt\mfg)$ are given by the affine root lattice 
\[
\wh Q := \Z \al_0 + \Z \al_1 \subset \wt P
\]
($e_i$ has weight $\al_i$, $f_i$ has weight $-\al_i$).
The adjoint action of $U_q(\wt\mfh)$ preserves the subalgebras 
\eq{
U_q(\wh\mfn^+)= \langle e_0,e_1 \rangle, \qq U_q(\wh\mfn^-)= \langle f_0,f_1 \rangle
}
and the corresponding weights are given by the monoids $\pm \wh Q^+$ respectively, where  $\wh Q^+ := \Z_{\ge 0} \al_0 + \Z_{\ge 0} \al_1$.

\subsection{Quasitriangularity}

The universal R-matrix for $U_q(\wt\mfg)$ is an element of a completion of $U_q(\wt\mfg) \ot U_q(\wt\mfg)$ satisfying
\begin{gather}
\label{R:axiom1} \cR \Del(u) = \Del^{\rm op}(u) \cR \qq \text{for all } u \in U_q(\wt\mfg), \\ 
\label{R:axiom2} (\Del \ot \id)(\cR) = \cR_{13} \cR_{23}, \qq \qq (\id \ot \Del)(\cR) = \cR_{13} \cR_{12}
\end{gather}
and hence
\eq{ \label{R:YBE}
\cR_{12} \cR_{13} \cR_{23} = \cR_{23} \cR_{13} \cR_{12}.
}
Consider the quantum analogues of the Borel subalgebras, which are the Hopf subalgebras
\[
U_q(\wt\mfb^\pm) = \langle U_q(\wt\mfh), U_q(\wh\mfn^\pm) \rangle.
\]
The element $\mc{R}$ arises as the canonical element of the bialgebra pairing between $U_q(\wt\mfb^+)$ and the algebra $U_q(\wt{\mfb}^-)^{\sf op}$ (the bialgebra isomorphic as a coalgebra to $U_q(\wt{\mfb}^-)$ but with the opposite multiplication), see \cite{Dr85,Lu94}.
In particular, $\cR$ lies in a completion of $U_q(\wt\mfb^+) \otimes U_q(\wt\mfb^-)$.
Further, invariance properties of the bialgebra pairing imply
\begin{align} 
\label{omega:R}
(\om \ot \om)(\cR) &= \cR_{21}, \\
\label{Phi:R}
(\Phi \ot \Phi)(\cR) &= \cR.
\end{align}
Also, this pairing has a non-degenerate restriction to $U_q(\wh\mfn^+)_\la \times U_q(\wh\mfn^-)_{-\la}$ for all $\la \in \wh Q^+$; denote the canonical element of this restricted pairing by $\Theta_\la$.
With our choice of the coproduct we have
\eq{ \label{R:factorization}
\cR = \Theta^{-1} \cdot \ka^{-1}, \qq \Theta = \sum_{\la \in \wh Q^+} \Theta_\la, \qq 
}
A priori, $\Theta$ acts naturally on $U_q(\wt\mfg)$-modules with a locally finite action of $U_q(\wh\mfn^+)$ or $U_q(\wh\mfn^-)$.
We briefly explain one possible definition\footnote{
Note that in the topological Hopf algebra setting one simply has $\ka = q^{c \ot d + d \ot c + h_1 \ot h_1/2}$.
} of the element $\ka$.
The non-degenerate bilinear form $(\cdot,\cdot)$ on $\wt\mfh$ induces one on $\wt\mfh^*$, which we denote by the same symbol.
If $M,M'$ are $U_q(\wt\mfh)$-weight modules we define a linear map $\ka_M: M \ot M' \to M \ot M'$ by stipulating that it acts on $M_\la \ot M'_{\la'}$ ($\la,\la' \in \wt P$) as multiplication by $q^{(\la,\la')}$. 
The family of these maps $\ka_M$, where $M$ runs through all $U_q(\wt\mfh)$-weight modules, is compatible with $U_q(\wt\mfh)$-intertwiners. 
Hence it gives rise to a well-defined weight-0 element $\ka$ of the corresponding completion of $U_q(\wt\mfg) \ot U_q(\wt\mfg)$ which we call here \emph{weight completion}.
Similarly, we will define weight-0 elements of the weight completion of $U_q(\wt\mfg)$ itself using functions from $\wt P$ to $\C$.
See also \cite[Sec.~4.8]{AV22a} for more detail. 

\subsection{Level-0 representations} \label{sec:level0}

Consider the following subalgebras of $U_q(\wh\mfg)$:
\eq{
U_q(\wh\mfb^\pm) = \langle U_q(\wh\mfh), U_q(\wh\mfn^\pm) \rangle = U_q(\wt\mfb^\pm) \cap U_q(\wh\mfg).
}
Then $U_q(\wh\mfb^+)$ is isomorphic to the algebra with generators $e_i$, $k_i$ ($i \in \{0,1\}$) subject to those relations in Definition \ref{def:Uqhatsl2} which do not involve the $f_i$ (the proof of e.g.~\cite[Thm.~4.21]{Ja96} applies).
We say that a $U_q(\wh\mfb^+)$-module $V$ is \emph{level-0} if it decomposes as 
\eq{ \label{V:decomposition}
V = \bigoplus_{\ga \in \C^\times} V(\ga), \qq V(\ga) = \{ v \in V \, | \, k_0 \cdot v = \ga^{-1} v, \qu k_1 \cdot v = \ga v \}
}
with each \emph{weight subspace} $V(\ga)$ finite-dimensional.
Note that the class of finitely generated level-0 modules is closed under tensor products.
By the $U_q(\wh\mfg)$-relations we have $e_0 \cdot V(\ga) \subseteq V(q^{-2}\ga)$, $e_1 \cdot V(\ga) \subseteq V(q^2\ga)$. 
It is convenient to call the subset $\{ \ga \in \C^\times \, | \, \dim(V(\ga)) \ne 0 \}$ the \emph{support} of $V$.
If $V$ is a finite-dimensional $U_q(\wh\mfg)$-module then it is level-0 with support contained in $\pm q^\Z$, see e.g.~\cite[Prop.~12.2.3]{CP95}.

\begin{rmk} \label{}
It is known that there are no nontrivial finite-dimensional $U_q(\wt\mfg)$-modules.
More generally (note \cite[Prop.~3.5]{HJ12}), if $V$ is an irreducible level-0 $U_q(\wh\mfb^+)$-module with $\dim(V)>1$, then the $U_q(\wh\mfb^+)$-action does not extend to a $U_q(\wt\mfb^+)$-action.
To see this, choose distinct $\ga,\ga' \in \C^\times$ in the support of $V$.
By irreducibility, for any nonzero $v \in V(\ga)$, $v' \in V(\ga')$ there exist $x,x' \in U_q(\wh\mfb^+)$ such that $x \cdot v = v'$, $x' \cdot v' = v$.
Without loss of generality, we may assume that both $x$ and $x'$ have no term in $U_q(\wh\mfh)$, so that $x'x$ is not a scalar.
Assume now that the $U_q(\wh\mfb^+)$-action extends to an action of $U_q(\wt\mfb^+)$; then the action of $g$ must preserve $V(\ga)$.
For any nonzero $v \in V(\ga)$, acting with $g$ on $(x'x) \cdot v = v$ now yields a contradiction with \eqref{g:relations}. \hfill \rmkend
\end{rmk}

Analogous definitions and comments can be made for $U_q(\wh\mfb^-)$-modules.

\subsection{The action of $\cR$ on tensor products of level-0 modules} \label{sec:R:action}

We wish to connect the quasitriangular structure of $U_q(\wt\mfg)$ with the level-0 representation theory of $U_q(\wh\mfg)$, i.e. let the universal R-matrix of $U_q(\wt\mfg)$ act on tensor products of level-0 modules.
To do this, we follow the ideas from \cite[Sec.~13]{Dr86} (also see \cite[Sec.~4]{FR92}, \cite[Sec.~1]{He19}). 
If we write the action of $k_1$ on an arbitrary level-0 module $V$ as $\exp(\eps H_V)$, then note that the factor $\ka$ naturally acts on tensor products $V \ot V'$ of level-0 modules as $\exp(\eps H_{V} \ot H_{V'}/2)$.

To let $\Theta$ act on such tensor products, we extend the field of scalars $\C$ to the Laurent polynomial ring $\C[z,z^{-1}]$, where $z$ is a formal parameter.
The action of $\Theta$ is particularly well-behaved if we use the principal grading.
That is, we define a Hopf algebra automorphism $\Sigma_z$ of $U_q(\wt\mfg)[z,z^{-1}]$ such that
\eq{
\Sigma_z(e_i) = z e_i, \qq \Sigma_z(f_i) = z^{-1} f_i, \qq \Sigma_z|_{U_q(\wt\mfh)} = \id.
}
Straightforwardly one sees that
\begin{align} \label{omega:Sigma} 
\om \circ \Sigma_z &= \Sigma_{z^{-1}} \circ \om, \\
\label{Phi:Sigma}
\Phi \circ \Sigma_z &= \Sigma_z \circ \Phi.
\end{align}
Let the height function $\h: \wh Q \to \Z$ be defined by $\h(m_0 \al_0 + m_1 \al_1)=m_0+m_1$ for all $m_0,m_1 \in \Z$ and note that the number of elements of $\wh Q^+$ of given height is finite.
The key observation is that 
\eq{ \label{Theta:keyobservation}
(\Sigma_z \ot \id)(\Theta) = (\id \ot \Sigma_{z^{-1}})(\Theta) = \sum_{r \ge 0} z^r \sum_{\la \in \wh Q^+, \, \h(\la)=r} \Theta_\la,
}
is a formal power series in $z$ whose coefficients are finite sums and hence lie in $U_q(\wh\mfn^+) \ot U_q(\wh\mfn^-)$. 
Hence $(\Sigma_z \ot \id)(\Theta)=(\id \ot \Sigma_{z^{-1}})(\Theta)$ has a well-defined action as a linear-operator-valued formal power series on a tensor product of any $U_q(\wh\mfn^+)$-representation with any $U_q(\wh\mfn^-)$-representation.
Consider now the \emph{grading-shifted universal R-matrix}:
\eq{ \label{R(z)uni:def}
\cR(z) := (\Sigma_z \ot \id)(\cR) = (\id \ot \Sigma_{z^{-1}})(\cR).
}
Note that by applying $\Sigma_z \ot \id$ to \eqref{R:axiom1} we deduce that $\cR(z)$ commutes with $\Del(k_1) = \Del^{\sf op}(k_1) = k_1 \ot k_1$.
We collect the results obtained thus far, writing 
\[
M[[z]] = M \otimes \C[[z]]
\]
for any complex vector space $M$ (in particular, any complex unital associative algebra).

\begin{thrm} \label{thm:R(z):action}
Consider a pair of level-0 representations $\pi^\pm: U_q(\wh\mfb^\pm) \to \End(V^\pm)$.
Then\footnote{Note that in Section \ref{sec:LandR} we will use the notation $\mc R_{\pi^+\pi^-}(z)$ for a rescaled version of the action of the grading-shifted universal R-matrix.}
\eq{ \label{R(z):def}
\cR_{\pi^+\pi^-}(z) := (\pi^+ \ot \pi^-)(\cR(z)) \in \End(V^+ \ot V^-)[[z]]
}
is well-defined and commutes with $(\pi^+ \ot \pi^-)(\Del(k_1)) = \pi^+(k_1) \ot \pi^-(k_1)$.
\end{thrm}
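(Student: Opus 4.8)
The plan is to reduce the statement to the two facts already established in the text: first, that by \eqref{R:factorization} and \eqref{R(z)uni:def} the grading-shifted universal R-matrix factorizes as $\cR(z) = \big((\Sigma_z\ot\id)(\Theta)\big)^{-1}\cdot\ka^{-1}$, where each factor acts in a controlled way; and second, the key observation \eqref{Theta:keyobservation} that $(\Sigma_z\ot\id)(\Theta)$ is a formal power series in $z$ whose coefficient of $z^r$ is the finite sum $\sum_{\h(\la)=r}\Theta_\la \in U_q(\wh\mfn^+)\ot U_q(\wh\mfn^-)$. So the content to verify is only that applying $\pi^+\ot\pi^-$ produces a well-defined element of $\End(V^+\ot V^-)[[z]]$, and that this element commutes with $\pi^+(k_1)\ot\pi^-(k_1)$.

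For well-definedness I would argue factor by factor. For the $\ka^{-1}$ factor: since $V^\pm$ are level-0, the $k_1$-action on each is semisimple with eigenvalues in $\C^\times$; writing $\pi^\pm(k_1)=\exp(\eps H_{V^\pm})$ as in Section~\ref{sec:R:action}, the element $\ka$ acts on $V^+(\ga)\ot V^-(\ga')$ as the scalar $q^{(\la,\la')}$ determined by the weights, i.e.\ as $\exp(\eps H_{V^+}\ot H_{V^-}/2)$, which is a genuine (z-independent) linear operator on $V^+\ot V^-$ because each weight space is finite-dimensional and the decomposition \eqref{V:decomposition} is a direct sum. Hence $(\pi^+\ot\pi^-)(\ka^{-1})\in\End(V^+\ot V^-)$, with no completion needed. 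For the $\Theta$-factor: by \eqref{Theta:keyobservation} we have $(\pi^+\ot\pi^-)\big((\Sigma_z\ot\id)(\Theta)\big)=\sum_{r\ge0} z^r\, A_r$ with $A_r=\sum_{\h(\la)=r}(\pi^+\ot\pi^-)(\Theta_\la)$; since there are finitely many $\la\in\wh Q^+$ of each height and each $\Theta_\la$ lies in $U_q(\wh\mfn^+)_\la\ot U_q(\wh\mfn^-)_{-\la}\subset U_q(\wh\mfb^+)\ot U_q(\wh\mfb^-)$, every $A_r\in\End(V^+\ot V^-)$ is a finite sum of honest operators, so $(\pi^+\ot\pi^-)\big((\Sigma_z\ot\id)(\Theta)\big)\in\End(V^+\ot V^-)[[z]]$. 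Its constant term is $A_0=(\pi^+\ot\pi^-)(\Theta_0)=\id$ (the empty-weight component is $1\ot1$), so this power series is invertible in the ring $\End(V^+\ot V^-)[[z]]$, and $\cR_{\pi^+\pi^-}(z):=(\pi^+\ot\pi^-)(\cR(z))= \big(\sum_r z^r A_r\big)^{-1}\cdot(\pi^+\ot\pi^-)(\ka^{-1})$ is a well-defined element of $\End(V^+\ot V^-)[[z]]$.

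For the commutation statement, I would apply $\Sigma_z\ot\id$ to the intertwining axiom \eqref{R:axiom1} with $u=k_1$: since $\Sigma_z$ fixes $U_q(\wt\mfh)$ and $\Del(k_1)=\Del^{\sf op}(k_1)=k_1\ot k_1$ is grouplike, this gives $\cR(z)(k_1\ot k_1)=(k_1\ot k_1)\cR(z)$ in the appropriate completion; applying $\pi^+\ot\pi^-$ and using that both sides now live in $\End(V^+\ot V^-)[[z]]$ yields $\cR_{\pi^+\pi^-}(z)\big(\pi^+(k_1)\ot\pi^-(k_1)\big)=\big(\pi^+(k_1)\ot\pi^-(k_1)\big)\cR_{\pi^+\pi^-}(z)$. (Equivalently one checks directly that $\ka$ commutes with $k_1\ot k_1$ since it is weight-0, and that each $\Theta_\la$ commutes with $k_1\ot k_1$ because $\Theta_\la$ has weight $(\la,-\la)$ which is annihilated by the $k_1$-grading; I would use whichever is cleaner.)

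The only genuinely delicate point is bookkeeping about completions: one must be careful that the passage from the weight completion of $U_q(\wt\mfg)\ot U_q(\wt\mfg)$, in which $\cR$ and $\cR(z)$ a priori live, to the ring $\End(V^+\ot V^-)[[z]]$ is compatible with the algebra operations used — in particular that applying $\pi^+\ot\pi^-$ to the product $\Theta^{-1}\ka^{-1}$ equals the product of the images. This follows because, order by order in $z$, only finitely many weight components of $\Theta$ contribute (by \eqref{Theta:keyobservation} and the finiteness of $\{\la:\h(\la)=r\}$) and $\ka$ is a single operator, so there are no convergence subtleties — all identities reduce to identities among finitely many finite-dimensional operators at each fixed order in $z$. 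I expect this to be the main place where care, rather than difficulty, is required; everything else is a direct transcription of \eqref{R:factorization} and \eqref{Theta:keyobservation} into the representation.
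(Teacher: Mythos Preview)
Your proposal is correct and follows essentially the same approach as the paper: the theorem is stated there as a summary of the preceding discussion, which establishes exactly the two ingredients you use (the action of $\ka$ on tensor products of level-0 modules as $\exp(\eps H_{V^+}\ot H_{V^-}/2)$, and the observation \eqref{Theta:keyobservation} that $(\Sigma_z\ot\id)(\Theta)$ is a formal power series with coefficients in $U_q(\wh\mfn^+)\ot U_q(\wh\mfn^-)$), together with the remark that applying $\Sigma_z\ot\id$ to \eqref{R:axiom1} for $u=k_1$ gives the commutation with $k_1\ot k_1$. Your write-up is in fact more explicit than the paper's, spelling out the invertibility of the $\Theta$-series (constant term $1$) and the completion bookkeeping, but the underlying argument is identical.
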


From now on we will use the standard convention that if $\pi$ is any level-0 representation then the corresponding grading-shifted representation is denoted by a subscript $z$:
\eq{ \label{reps:gradingshift}
\pi_z := \pi \circ \Sigma_z.
}
Hence we may write 
\[
\cR_{\pi^+\pi^-}(z) = (\pi^+_z \ot \pi^-)(\cR) = (\pi^+ \ot \pi^-_{1/z})(\cR).
\]
Consider two indeterminates $z_1,z_2$.
Applying, say, $\Sigma_{z_1} \ot \id \ot \Sigma_{1/z_2}$, to \eqref{R:YBE}, we obtain a $\C[[z_1,z_2]]$-version of the universal Yang-Baxter equation which can be evaluated on suitable triple tensor products. 

\begin{prop} \label{prop:R(z):YBE}
If $\pi^+: U_q(\wh\mfb^+) \to \End(V^+)$, $\pi: U_q(\wh\mfg) \to \End(V)$ and $\pi^-: U_q(\wh\mfb^-) \to \End(V^-)$ are level-0 representations, then we have the following identity of linear-operator-valued formal power series in two indeterminates:
\eq{ \label{R(z):YBE}
\cR_{\pi^+\pi}(z_1)_{12} \; \cR_{\pi^+\pi^-}(z_1z_2)_{13} \; \cR_{\pi\pi^-}(z_2)_{23} = \cR_{\pi\pi^-}(z_2)_{23} \; \cR_{\pi^+\pi^-}(z_1z_2)_{13} \; \cR_{\pi^+\pi}(z_1)_{12}.
}
\end{prop}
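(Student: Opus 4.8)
The plan is to deduce the evaluated Yang--Baxter equation \eqref{R(z):YBE} directly from the universal identity \eqref{R:YBE} by applying the grading-shift automorphisms and then the representations, taking care that every object we write down has a well-defined action in the sense of Theorem \ref{thm:R(z):action}. First I would apply the algebra automorphism $\Sigma_{z_1} \ot \id \ot \Sigma_{1/z_2}$ of $U_q(\wt\mfg)[z_1^{\pm 1},z_2^{\pm 1}]^{\ot 3}$ to both sides of \eqref{R:YBE}. On the left-hand side, the factor $\cR_{12}$ becomes $(\Sigma_{z_1}\ot\id\ot\id)(\cR_{12})$; since $\cR$ lies in a completion of $U_q(\wt\mfb^+)\ot U_q(\wt\mfb^-)$ and $\Sigma_{z_1}$ acts trivially on the $U_q(\wt\mfh)$-part, this equals $\cR(z_1)_{12}$. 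The factor $\cR_{13}$ becomes $(\Sigma_{z_1}\ot\id\ot\Sigma_{1/z_2})(\cR_{13})$; using \eqref{R(z)uni:def}, i.e.\ the fact that $\cR(z)=(\Sigma_z\ot\id)(\cR)=(\id\ot\Sigma_{1/z})(\cR)$, together with the obvious commuting of $\Sigma_{z_1}$ in the first tensor slot and $\Sigma_{1/z_2}$ in the third, this equals $\cR(z_1 z_2)_{13}$. The factor $\cR_{23}$ becomes $(\id\ot\id\ot\Sigma_{1/z_2})(\cR_{23})=\cR(z_2)_{23}$. The same bookkeeping applied to the right-hand side of \eqref{R:YBE} yields $\cR(z_2)_{23}\,\cR(z_1z_2)_{13}\,\cR(z_1)_{12}$, so we obtain the universal grading-shifted identity
\eq{
\cR(z_1)_{12}\,\cR(z_1z_2)_{13}\,\cR(z_2)_{23} = \cR(z_2)_{23}\,\cR(z_1z_2)_{13}\,\cR(z_1)_{12}
}
in the appropriate weight completion, now with each factor a formal power series in $z_1,z_2$ (using \eqref{Theta:keyobservation} and the fact that $\ka$ contributes only weight-$0$ terms).

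Next I would apply $\pi^+\ot\pi\ot\pi^-$ to this identity. The subtlety is that $\pi$ is a representation of the full $U_q(\wh\mfg)$, not just a Borel subalgebra, so the middle tensor slot receives the action of elements of $U_q(\wh\mfn^+)$ coming from $\cR(z_1)_{12}$ (where the $U_q(\wt\mfb^-)$-part of $\cR$ lands in slot $2$, hence $U_q(\wh\mfn^-)\subseteq U_q(\wh\mfg)$ acts via $\pi$) as well as of $U_q(\wh\mfn^-)$-type elements coming from $\cR(z_2)_{23}$; both are fine since $\pi$ is a bona fide $U_q(\wh\mfg)$-module. Each of the three factors $\cR_{\pi^+\pi}(z_1)_{12}$, $\cR_{\pi^+\pi^-}(z_1z_2)_{13}$, $\cR_{\pi\pi^-}(z_2)_{23}$ lies in $\End(V^+\ot V\ot V^-)[[z_1,z_2]]$ by Theorem \ref{thm:R(z):action} (one of the three spectral parameters is $z_1z_2$, but as a formal power series in the two indeterminates $z_1,z_2$ this is still a well-defined element of $\End(\cdot)[[z_1,z_2]]$, since substituting the monomial $z_1z_2$ into a power series in one variable produces a power series in two variables). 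Multiplying three such power series is well-defined because $\C[[z_1,z_2]]$ is a ring, so both sides of the resulting identity are honest elements of $\End(V^+\ot V\ot V^-)[[z_1,z_2]]$, and the identity \eqref{R(z):YBE} holds by functoriality of applying a homomorphism to an equation that already holds at the universal level.

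The one genuine point requiring care --- and the step I expect to be the main obstacle --- is justifying that applying $\pi^+\ot\pi\ot\pi^-$ commutes with the (infinite) sums and products involved, i.e.\ that no convergence or reordering issue arises when passing from the completed tensor product of $U_q(\wt\mfg)$ to $\End(V^+\ot V\ot V^-)[[z_1,z_2]]$. This is handled exactly as in the proof of Theorem \ref{thm:R(z):action} and Proposition \ref{prop:R(z):YBE}'s one-variable analogue: for each fixed total degree $r$ in $z_1$ and $s$ in $z_2$, only finitely many terms $\Theta_\la$ (those with bounded height, by \eqref{Theta:keyobservation}) contribute, so the coefficient of $z_1^r z_2^s$ on either side is a finite sum of operators, and the matching of these coefficients follows termwise from \eqref{R:YBE}. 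I would phrase the argument to emphasize that everything takes place degree-by-degree in $\C[[z_1,z_2]]$, so that the manipulation is purely formal and no analytic input is needed; the grading-shift identities \eqref{omega:Sigma}--\eqref{Phi:Sigma} are not needed here, only the two expressions \eqref{R(z)uni:def} for $\cR(z)$ and the height-grading property \eqref{Theta:keyobservation}.
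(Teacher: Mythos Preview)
Your proposal is correct and follows exactly the approach indicated in the paper, which simply says ``Applying, say, $\Sigma_{z_1} \ot \id \ot \Sigma_{1/z_2}$, to \eqref{R:YBE}, we obtain a $\C[[z_1,z_2]]$-version of the universal Yang-Baxter equation which can be evaluated on suitable triple tensor products'' and then states the proposition without further proof. You have spelled out the details of this one-line argument carefully, including the degree-by-degree finiteness that makes the evaluation well-defined, which is precisely the content the paper leaves implicit.
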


Given a pair of level-0 representations $\pi^\pm: U_q(\wh\mfb^\pm) \to \End(V^\pm)$ it is often convenient to have an explicit expression of $\cR_{\pi^+\pi^-}(z)$ which does not rely on computing the coefficients of the series $\cR(z)$.
Essentially following Jimbo's approach from \cite{Ji86b}, we may try to solve a linear equation for $\cR_{\pi^+\pi^-}(z)$.
To derive such a linear equation, it is convenient to assume that, say, $\pi^+$ extends to a representation of $U_q(\wh\mfg)$.
In this case\footnote{
One can of course apply $\pi^+_z \ot \pi^-$ to \eqref{R:axiom1} for arbitrary $U_q(\wh\mfb^\pm)$-representations $\pi^\pm$, yielding \eqref{R(z):intw} for all $u \in U_q(\wh\mfg)$ such that $\Del(u)$ and $\Del^{\sf op}(u)$ both lie in $U_q(\wh\mfb^+) \ot U_q(\wh\mfb^-)$. 
However, by applying counits this subalgebra is seen to be equal to $U_q(\wh\mfb^+) \cap U_q(\wh\mfb^-) = U_q(\wh\mfh)$.
Hence, one would just recover the second statement of Theorem \ref{thm:R(z):action}.
}, one directly obtains the following result.

\begin{prop} \label{prop:R(z):intw}
If $\pi^+$ is a level-0 $U_q(\wh\mfg)$-representation and $\pi^-$ a level-0 $U_q(\wh\mfb^-)$-representation, then for all $u \in U_q(\wh\mfb^-)$ we have
\eq{ \label{R(z):intw}
\cR_{\pi^+\pi^-}(z) \cdot (\pi^+_z \ot \pi^-)(\Del(u)) = (\pi^+_z \ot \pi^-)(\Del^{\sf op}(u)) \cdot \cR_{\pi^+\pi^-}(z).
}
\end{prop}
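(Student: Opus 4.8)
The plan is to deduce \eqref{R(z):intw} directly from the universal intertwining axiom \eqref{R:axiom1} by applying suitable homomorphisms. Since $U_q(\wh\mfb^-)$ is a subalgebra of $U_q(\wt\mfg)$, for every $u \in U_q(\wh\mfb^-)$ the identity $\cR \Del(u) = \Del^{\sf op}(u)\cR$ holds in the weight completion of $U_q(\wt\mfg) \ot U_q(\wt\mfg)$. First I would apply the algebra homomorphism $\Sigma_z \ot \id$ (extended to the weight completion, where it is well defined because $\Sigma_z$ is graded and acts as the identity on $U_q(\wt\mfh)$), which by the definition $\cR(z) = (\Sigma_z \ot \id)(\cR)$ turns \eqref{R:axiom1} into
\[
\cR(z) \cdot (\Sigma_z \ot \id)(\Del(u)) = (\Sigma_z \ot \id)(\Del^{\sf op}(u)) \cdot \cR(z),
\]
and then apply $\pi^+ \ot \pi^-$, using $\pi^+ \circ \Sigma_z = \pi^+_z$ (from \eqref{reps:gradingshift}) together with the fact that $\Sigma_z$ is trivial on the second tensor factor. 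Since $\cR_{\pi^+\pi^-}(z) = (\pi^+_z \ot \pi^-)(\cR)$, the left-hand side becomes $\cR_{\pi^+\pi^-}(z) \cdot (\pi^+_z \ot \pi^-)(\Del(u))$ and the right-hand side is treated identically, yielding \eqref{R(z):intw}.

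For this to make sense, two checks are needed. First, $\pi^+_z \ot \pi^-$ must be defined on $\Del(u)$ and $\Del^{\sf op}(u)$; from \eqref{Delta:def} one has $\Del(f_i) = f_i \ot k_i^{-1} + 1 \ot f_i$, $\Del^{\sf op}(f_i) = k_i^{-1} \ot f_i + f_i \ot 1$, and $\Del(k_i^{\pm 1}) = \Del^{\sf op}(k_i^{\pm 1}) = k_i^{\pm 1} \ot k_i^{\pm 1}$, so both $\Del$ and $\Del^{\sf op}$ map $U_q(\wh\mfb^-)$ into $U_q(\wh\mfb^-) \ot U_q(\wh\mfb^-)$; because by hypothesis $\pi^+$ extends to a representation of $U_q(\wh\mfg)$, the factor $\pi^+_z$ can indeed be evaluated on the first tensor component. (This is exactly the role of the hypothesis; as the footnote preceding the statement explains, without it the only $u$ for which $\Del(u), \Del^{\sf op}(u) \in U_q(\wh\mfb^+) \ot U_q(\wh\mfb^-)$ are those in $U_q(\wh\mfh)$, which merely reproduces the intertwining with $k_1$ from Theorem \ref{thm:R(z):action}.) Second, $(\pi^+_z \ot \pi^-)(\cR) = \cR_{\pi^+\pi^-}(z)$ has to be a well-defined linear-operator-valued formal power series — this is precisely Theorem \ref{thm:R(z):action}, resting on \eqref{Theta:keyobservation} and $\cR = \Theta^{-1}\ka^{-1}$.

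The only delicate point — and the reason the result follows ``directly'' rather than after real work — is the bookkeeping of completions and powers of $z$ that legitimizes pushing the homomorphisms through the completed product in \eqref{R:axiom1}. Concretely, after applying $\Sigma_z \ot \id$ the element $\Theta$ becomes an honest power series in $z$ whose coefficients act with finite rank in the given representation, by \eqref{Theta:keyobservation}, while $(\pi^+_z \ot \pi^-)(\Del(u))$ and $(\pi^+_z \ot \pi^-)(\Del^{\sf op}(u))$ are Laurent polynomials in $z$; hence both sides of the resulting identity lie in $\End(V^+ \ot V^-)[[z]][z^{-1}]$, where all products are finite in each $z$-degree, and the equality is then immediate from \eqref{R:axiom1}. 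No conceptual obstacle arises; this is the only step that needs attention.
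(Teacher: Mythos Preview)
Your proposal is correct and matches the paper's approach: the paper simply states that under the hypothesis that $\pi^+$ extends to $U_q(\wh\mfg)$, ``one directly obtains'' the result, i.e.\ \eqref{R(z):intw} is obtained by applying $\pi^+_z \ot \pi^-$ to the universal relation \eqref{R:axiom1}. Your careful bookkeeping of completions and $z$-degrees spells out exactly what is implicit in the word ``directly''.
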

Obviously there is a counterpart of Proposition \ref{prop:R(z):intw} with the role of $U_q(\wh\mfb^-)$ replaced by $U_q(\wh\mfb^+)$.

\begin{rmk} \label{rmk:R(z):YBE:define}
If the solution space of the linear equation \eqref{R(z):intw} is 1-dimensional, Proposition \ref{prop:R(z):intw} implies that any solution of \eqref{R(z):intw} must be a scalar multiple of $\cR_{\pi^+\pi^-}(z)$ and hence satisfy the Yang-Baxter equation.
This is well-known if both $V^\pm$ are finite-dimensional $U_q(\wh\mfg)$-modules.
In this case the existence of the universal R-matrix implies the existence of a solution of the intertwining condition \eqref{R(z):intw} depending rationally on $z$.
If $\pi^+$ and $\pi^-$ are also both irreducible then it is known, see e.g.~\cite[Sec.~4.2]{KS95} and \cite[Thm.~3]{Ch02}, that $V^+((z)) \ot V^-$ is irreducible as a representation of $U_q(\wh\mfg)((z))$ (extension of scalars to formal Laurent series); hence an application of Schur's lemma yields the 1-dimensionality of the solution space of \eqref{R(z):intw}.
In this case, the rational intertwiner is called \emph{trigonometric R-matrix}.
For more background and detail, see e.g.~\cite{He19} and \cite[Secs.~2.6 \& 2.7]{AV22b}. 

In the absence of a linear relation such as \eqref{R(z):intw}, one can use the Yang-Baxter equation \eqref{R(z):YBE} to determine an explicit expression for one of $\cR_{\pi^+\pi}(z)$, $\cR_{\pi^+\pi^-}(z)$, or $\cR_{\pi\pi^-}(z)$, provided the other two are known.
\hfill\rmkend
\end{rmk}

\subsection{Adjusting the grading} \label{sec:grading}

In this approach the use of the principal grading in Theorem \ref{thm:R(z):action} avoids further constraints on the representations (e.g.~local finiteness conditions). 
For completeness we briefly explain how to extend the results of Section \ref{sec:R:action} to arbitrary grading.
For nonnegative integers $s_0,s_1$ such that $s_0+s_1$ is nonzero, define a more general Hopf algebra automorphism $\Sigma^{s_0,s_1}_z$ of $U_q(\wt\mfg)[z,z^{-1}]$ by
\eq{
\Sigma^{s_0,s_1}_z(e_i) = z^{s_i} e_i, \qq \Sigma^{s_0,s_1}_z(f_i) = z^{-s_i} f_i, \qq \Sigma^{s_0,s_1}_z|_{U_q(\wt\mfh)} = \id
}
(note that the choice $s_0=0$, $s_1=1$ is used in in \cite[Eq.~(2.11)]{KT14}).

Rather than giving generalized versions of the main results above and of various statements in the remainder of this work, we make an observation which will allow the reader to generate these statements, as required. 
Recalling the decomposition \eqref{V:decomposition} and the associated terminology, suppose the level-0 $U_q(\wh\mfb^+)$-module $V$ is generated by a nonzero element of $V(\ga_0)$ for some $\ga_0 \in \C^\times$ (which includes all modules considered in this paper and all irreducible finite-dimensional $U_q(\wh\mfg)$-modules).
Then the support of $V$, see Section \ref{sec:level0}, is contained in $q^{2\Z} \ga_0$. 
Now for any indeterminate $y$ and any integer $m$, let $y^{mD}$ denote the linear map on $V$ which acts on $V(q^{-2m}\ga_0)[y,y^{-1}]$ as scalar multiplication by $y^m$. 

Writing the corresponding representation as $\pi: U_q(\wh\mfb^+) \to \End(V)$, the more general grading-shifted representation $\pi^{s_0,s_1}_z := \pi \circ \Sigma^{s_0,s_1}_z$ can be related to the representation shifted by the principal grading as follows.
Adjoining to the ring $\C[z,z^{-1}]$ a square root $Z$ of $z$, we have
\eq{
\pi^{s_0,s_1}_z = \Ad\big( Z^{(s_0-s_1)D} \big) \circ \pi_{Z^{s_0+s_1}},
}
where on the right-hand side $\Ad$ stands for `conjugation by'. 
See \cite[Sec.~5.2]{AV22b} for essentially the same point in the context of irreducible finite-dimensional $U_q(\wh\mfg)$-representations.

\section{The augmented q-Onsager algebra, its twist and its universal K-matrix} \label{sec:augmentedqOns}

In parallel with the previous section, we consider a particular subalgebra of $U_q(\wh\mfg)$ and extend some recent results on universal K-matrices \cite{AV22a,AV22b} in the context of (possibly infinite-dimensional) level-0 representations of Borel subalgebras of quantum affine $\mfsl_2$. 
For a related approach tailored to evaluation representations involving essentially the same subalgebra, see \cite{BT18}. 

\subsection{The twist map $\psi$}

We consider the following algebra automorphism and coalgebra antiautomorphism of $U_q(\wt\mfg)$:
\eq{ \label{psi:def}
\psi := \om \circ \Phi.
}
From \eqrefs{omega:R}{Phi:R} and \eqrefs{omega:Sigma}{Phi:Sigma}, respectively, we immediately deduce
\begin{align} 
\label{psi:R}
(\psi \ot \psi)(\cR) &= \cR_{21}, \\
\label{psi:Sigma}
\psi \circ \Sigma_z &= \Sigma_{z^{-1}} \circ \psi.
\end{align}

By the following result, P-symmetric R-matrices ($\cR(z)_{21} = \cR(z)$) naturally arise in tensor products of representations of the upper and lower Borel subalgebras on the same vector space, provided they are related through $\psi$ and the principal grading is used in the definition of grading-shifted universal R-matrix $\cR(z)$, see \eqref{R(z)uni:def}.

\begin{lemma} \label{lem:R:Psymmetry}
Consider two pairs of level-0 representations $\pi^\pm, \vrho^\pm: U_q(\wh\mfb^\pm) \to \End(V)$ such that 
\eq{ \label{pi:psi:relation}
\vrho^\mp = \pi^\pm \circ \psi.
}
Then $\cR_{\pi^+\pi^-}(z)_{21} = \cR_{\vrho^+\vrho^-}(z)$.
\end{lemma}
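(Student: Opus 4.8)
The plan is to deduce the identity from the defining property \eqref{psi:R} of the twist map $\psi$, namely $(\psi\ot\psi)(\cR)=\cR_{21}$, together with the compatibility \eqref{psi:Sigma} of $\psi$ with the grading shift $\Sigma_z$. First I would apply $\psi\ot\psi$ to the grading-shifted universal R-matrix. Using $\cR(z)=(\Sigma_z\ot\id)(\cR)$ from \eqref{R(z)uni:def} and the fact that $\psi$ is an algebra automorphism, we get
\[
(\psi\ot\psi)(\cR(z)) = (\psi\Sigma_z\ot\psi)(\cR) = (\Sigma_{z^{-1}}\psi\ot\psi)(\cR) = (\Sigma_{z^{-1}}\ot\id)\big((\psi\ot\psi)(\cR)\big) = (\Sigma_{z^{-1}}\ot\id)(\cR_{21}),
\]
where the second equality is \eqref{psi:Sigma} and the last is \eqref{psi:R}. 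Now $(\Sigma_{z^{-1}}\ot\id)(\cR_{21}) = \big((\id\ot\Sigma_{z^{-1}})(\cR)\big)_{21} = \cR(z)_{21}$ by the second equality in \eqref{R(z)uni:def} and the observation that $\si$ intertwines $\Sigma_{z^{-1}}\ot\id$ with $\id\ot\Sigma_{z^{-1}}$. Hence at the universal level $(\psi\ot\psi)(\cR(z)) = \cR(z)_{21}$.

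Next I would evaluate this universal identity in the relevant representations. Apply $\pi^+\ot\pi^-$ to both sides. On the left, $(\pi^+\ot\pi^-)\big((\psi\ot\psi)(\cR(z))\big) = (\pi^+\circ\psi\ot\pi^-\circ\psi)(\cR(z)) = (\vrho^-\ot\vrho^+)(\cR(z))$ by the hypothesis \eqref{pi:psi:relation}. One must be slightly careful here: $\cR(z)$ lies in a completion of $U_q(\wt\mfb^+)\ot U_q(\wt\mfb^-)$, whereas $\vrho^-$ is a $U_q(\wh\mfb^-)$-representation and $\vrho^+$ a $U_q(\wh\mfb^+)$-representation, so the factors land in the "wrong" tensor slots; applying $\si$ to reorder gives $(\vrho^+\ot\vrho^-)(\cR(z)_{21})_{21}$, and since by the universal identity $\cR(z)_{21}=(\psi\ot\psi)(\cR(z))$ sits in (a completion of) $U_q(\wt\mfb^-)\ot U_q(\wt\mfb^+)$, everything is consistent with Theorem \ref{thm:R(z):action}. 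On the right, $(\pi^+\ot\pi^-)(\cR(z)_{21}) = \big((\pi^-\ot\pi^+)(\cR(z))\big)_{21}$. Matching the two sides and using the definition \eqref{R(z):def} yields
\[
\cR_{\vrho^+\vrho^-}(z) = \cR_{\pi^+\pi^-}(z)_{21},
\]
which is the claim.

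The main obstacle is bookkeeping rather than substance: one must track carefully in which completion of which tensor product each expression lives, so that "$\cR_{\vrho^+\vrho^-}(z)$" and "$\cR_{\pi^+\pi^-}(z)_{21}$" are well-defined elements of $\End(V\ot V)[[z]]$ in the sense of Theorem \ref{thm:R(z):action}, and so that the flip $\si$ is applied consistently on the algebra side and the representation side. Once the identity $(\psi\ot\psi)(\cR(z))=\cR(z)_{21}$ is established at the universal level — which is the real content and follows purely formally from \eqref{psi:R} and \eqref{psi:Sigma} — evaluating in representations and matching tensor legs is routine. It is worth remarking that well-definedness on both sides is guaranteed by Theorem \ref{thm:R(z):action} applied to the pairs $(\pi^+,\pi^-)$ and $(\vrho^+,\vrho^-)$ respectively, so no extra convergence argument is needed.
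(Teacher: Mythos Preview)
Your approach is the same as the paper's: deduce the universal identity $(\psi\ot\psi)(\cR(z))=\cR(z)_{21}$ from \eqref{psi:R} and \eqref{psi:Sigma}, then evaluate in representations. The paper simply runs the computation in the other direction, starting from $\cR_{\pi^+\pi^-}(z)_{21}$ and manipulating towards $\cR_{\vrho^+\vrho^-}(z)$.

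There is, however, a genuine slip in your evaluation step. Both sides of $(\psi\ot\psi)(\cR(z))=\cR(z)_{21}$ lie in (a completion of) $U_q(\wh\mfb^-)\ot U_q(\wh\mfb^+)$, so the correct map to apply is $\pi^-\ot\pi^+$, not $\pi^+\ot\pi^-$. With that one change the argument becomes clean:
\[
(\pi^-\ot\pi^+)\big((\psi\ot\psi)(\cR(z))\big)=\big((\pi^-\circ\psi)\ot(\pi^+\circ\psi)\big)(\cR(z))=(\vrho^+\ot\vrho^-)(\cR(z))=\cR_{\vrho^+\vrho^-}(z)
\]
on the left, and
\[
(\pi^-\ot\pi^+)(\cR(z)_{21})=\big((\pi^+\ot\pi^-)(\cR(z))\big)_{21}=\cR_{\pi^+\pi^-}(z)_{21}
\]
on the right, with no further flips required. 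As written, your expressions $(\pi^+\ot\pi^-)((\psi\ot\psi)(\cR(z)))$, $(\vrho^-\ot\vrho^+)(\cR(z))$, $(\vrho^+\ot\vrho^-)(\cR(z)_{21})$ and $(\pi^+\ot\pi^-)(\cR(z)_{21})$ are all ill-defined for the same reason (the first tensor leg carries a $U_q(\wh\mfb^+)$-representation but receives an element of $U_q(\wh\mfb^-)$, or vice versa), so your attempted fix by inserting extra copies of $\si$ does not help; it just reproduces the same mismatch on the other side. Once you apply $\pi^-\ot\pi^+$ instead, the ``careful'' paragraph becomes unnecessary.
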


\begin{proof}
Unpacking the definitions \eqref{R(z):def} and \eqref{R(z)uni:def}, we have
\[
\cR_{\pi^+\pi^-}(z)_{21} = \Big( \big( (\pi^+ \ot \pi^-) \circ (\Sigma_z \ot \id)  \big)(\cR) \Big)_{21} = \big( (\pi^- \ot \pi^+) \circ (\id \ot \Sigma_z) \big)\big(\cR_{21}\big).
\]
Now using \eqrefs{psi:R}{psi:Sigma} we deduce 
\[
\cR_{\pi^+\pi^-}(z)_{21} = \big( (\pi^- \ot \pi^+) \circ (\psi \ot \psi) \circ (\id \ot \Sigma_{z^{-1}}) \big)(\cR).
\]
Applying \eqref{pi:psi:relation} and using \eqref{R(z):def} and \eqref{R(z)uni:def} once again, we obtain $\cR_{\vrho^+\vrho^-}(z)$ as required. 
\end{proof}

\subsection{The augmented q-Onsager algebra}

The map $\psi$ plays an important role in the theory of diagonal matrix solutions with a free parameter of the reflection equation in $U_q(\wh\mfg)$-modules.
Namely, fix a parameter $\xi \in \C^\times$ and consider the following subalgebra of $U_q(\wh\mfg)$, also called the \emph{(embedded) augmented q-Onsager algebra}:
\eq{ \label{Uqk:def}
U_q(\mfk) := \C\big\langle e_0 - q^{-1} \xi^{-1} k_0 f_1, \, e_1 - q^{-1} \xi k_1 f_0, \, k_0 k_1^{-1}, \, k_0^{-1} k_1 \big\rangle.
}
This is a left coideal: 
\eq{ 
\Del(U_q(\mfk)) \subseteq U_q(\wh\mfg) \ot U_q(\mfk).
}
The automorphism $\psi$ is the trivial q-deformation of a Lie algebra automorphism of $\wh\mfg$, also denoted $\psi$, and $U_q(\mfk)$ is the ($\xi$-dependent) coideal q-deformation of the universal enveloping algebra of the fixed-point subalgebra $\mfk = \wh\mfg^\psi$, in the style of \cite{Ko14} but with opposite conventions.

\begin{rmk}
See \cite[Rmk.\ 2.3]{VW20} for more background on this subalgebra.
Note that the definition of $U_q(\mfk)$ in \emph{loc.~cit.}~has a misprint: $\xi$ should be replaced by $\xi^{-1}$. \hfill \rmkend
\end{rmk}

To connect with the universal K-matrix formalism of \cite{AV22a,AV22b}, let $\wt S$ be the bialgebra isomorphism\footnote{In particular, $\wt S$, like the antipode $S$ itself, is an algebra antiautomorphism and a coalgebra antiautomorphism.} from $U_q(\wt\mfg)$ to $U_q(\wt\mfg)^{\sf op,cop}$ (also known as the \emph{unitary antipode}) defined by the assignments 
\eq{ \label{unitaryantipode}
\wt S(e_i) = -q k_i^{-1} e_i, \qq \wt S(f_i) = -q^{-1} f_i k_i, \qq \wt S(k_i^{\pm 1}) = k_i^{\mp 1}, \qq \wt S(g^{\pm 1}) = g^{\mp 1}.
}
Note that $\wt S^2=\id$.
Now consider\footnote{
In general, each element or map in the right coideal setting of \cite{Ko14,AV22a,AV22b} is denoted by a prime on the corresponding object in the current left coideal setting.
} 
the right coideal subalgebra 
\[
U_q(\mfk)' = \wt S(U_q(\mfk)) = \C \langle f_0 - q \xi^{-1} e_1 k_0^{-1}, f_1 - q \xi e_0 k_1^{-1}, k_0 k_1^{-1}, k_0^{-1} k_1 \rangle \]
which is the subalgebra considered in \cite[Sec.~9.7]{AV22a}, forming part of a more general family of right coideal subalgebras (quantum symmetric pair subalgebras) of quantum affine algebras as considered in \cite{Ko14,AV22a,AV22b}.

\subsection{Universal K-matrix} \label{sec:uniK}

By \cite[Thm.~8.5]{AV22a}, $U_q(\wt\mfg)$ is endowed with a so-called \emph{standard} universal K-matrix, which is an invertible element in a completion of $U_q(\wt\mfb^+)$ satisfying a twisted $U_q(\mfk)$-intertwining property and a twisted coproduct formula involving the universal R-matrix\footnote{
Note that our convention for the coproduct is as in \cite{AV22a}, but the ordering of the tensor product of the two Borel subalgebras is opposite. 
Hence the R-matrix in \cite{AV22a}, denoted here by $\cR'$, is equal to $\cR_{21}^{-1}$.
} 
\eq{
\cR' = \cR_{21}^{-1}.
} 
There is an action of invertible elements of a completion of $U_q(\wt\mfg)$, gauge-transforming the universal K-matrix and the twisting operator simultaneously, see \cite[Sec.~3.6]{AV22b}.
For the case under consideration, there exists a gauge transformation (a `Cartan correction', see \cite[Sec.~8.8]{AV22a}) that brings both the intertwining property and the coproduct formula for the universal K-matrix into a particularly nice form.
Moreover, the gauge-transformed universal K-matrix still resides in a completion of $U_q(\wt\mfb^+)$ and, when shifted by the principal grading, acts as a linear-operator-valued formal power series for all level-0 $U_q(\wh\mfb^+)$-modules.

To make this more precise, let $\Om: \wt P \to \C^\times$ be any group homomorphism such that $\Om(\al_0)=-\xi$ and $\Om(\al_1)=-\xi^{-1}$. 
Now define a function $G': \wt P \to \C^\times$ by
\eq{ \label{Gprime:def}
G'(\la) = \Om(\la) q^{-(\Phi(\la),\la)/2}.
}
Note that this is not a group homomorphism.
Define the corresponding linear operator acting on $U_q(\wt\mfh)$-weight modules as follows:
\eq{
G' \cdot v = G'(\la) v, \qq v \in V_\la, \qq \la \in \wt P.
}
Analogously to our definition of the factor $\ka$ of the universal R-matrix, we thus obtain an invertible element $G'$ of the weight completion of $U_q(\wt\mfg)$. 
Finally, let $\del=\al_0+\al_1$ be the basic imaginary root of $\wh\mfg$.
Then the following result is a special case of \cite[Sec.~9.7]{AV22a}, with the coproduct formula a direct consequence of \cite[(8.21)]{AV22a}.

\begin{prop}\label{prop:uniK:rightcoideal}
There exists an invertible element 
\eq{
\Upsilon' = \sum_{\la \in \Z_{\ge 0} \del} \Upsilon'_\la, \qq \Upsilon'_\la \in U_q(\wh{\mfn}^+)_\la,
}
such that the invertible element
\eq{
\cK' := G' \cdot \Upsilon'
}
satisfies
\begin{align}
\label{K':axiom1} \cK' \cdot u &= \psi(u) \cdot \cK' \qq \text{for all } u \in U_q(\mfk)', \\
\label{K':axiom2} \Del(\cK') &= (1 \ot \cK') \cdot (\psi \ot \id)(\cR') \cdot (\cK' \ot 1).
\end{align}
\end{prop}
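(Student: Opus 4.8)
The plan is to derive Proposition \ref{prop:uniK:rightcoideal} directly from the general theory of standard universal K-matrices in \cite{AV22a}, by transporting the statements there through the gauge (Cartan correction) encoded in the function $G'$ and keeping careful track of the conventions (left versus right coideals, the ordering of the tensor factors of the Borel subalgebras). The backbone is \cite[Thm.~8.5]{AV22a}, which provides a standard universal K-matrix for the right coideal subalgebra $U_q(\mfk)'$, together with \cite[Sec.~8.8, (8.21)]{AV22a}, which records how a Cartan correction acts on the intertwining property and the coproduct formula, and \cite[Sec.~9.7]{AV22a}, which identifies the relevant quantum symmetric pair datum. So the proof is essentially a bookkeeping argument: I would state precisely which universal K-matrix the general theory supplies, apply the gauge transformation given by $G'$, and check that the result lands in a completion of $U_q(\wh\mfb^+)$ with the displayed coproduct.

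\textbf{Step 1: locate the datum.} First I would recall from \cite[Sec.~9.7]{AV22a} the quantum symmetric pair datum corresponding to $U_q(\mfk)'$: this fixes the generalized Satake diagram, the parameters (matching $\Om(\al_0)=-\xi$, $\Om(\al_1)=-\xi^{-1}$), and the associated involution, which on the Cartan is exactly $\Phi$ (the diagram flip) so that the twist $\psi = \om\circ\Phi$ of \eqref{psi:def} is the one appearing in the abstract intertwining property. I would note that since $U_q(\mfk)'$ has no ``frozen'' nodes of the first kind ($\tau=\Phi$ acts freely on $\{0,1\}$), the Satake diagram is of a particularly simple type and the general construction of \cite[Thm.~8.5]{AV22a} applies without the extra subtleties that occur for other coideals.

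\textbf{Step 2: the abstract K-matrix and its factorization.} From \cite[Thm.~8.5]{AV22a} there is a standard universal K-matrix $\cK'_{\sf std}$ lying in a completion of $U_q(\wt\mfb^+)$ that intertwines $U_q(\mfk)'$ up to the twist and obeys a coproduct formula of the form $\Del(\cK'_{\sf std}) = (1\ot\cK'_{\sf std})\cdot(\psi\ot\id)(\cR')\cdot(\cK'_{\sf std}\ot 1)$ modulo a correction by a group-like/weight factor. The general structure theory writes $\cK'_{\sf std}$ as a product of a Cartan part and a part $\Upsilon'$ supported on $\Z_{\ge 0}\del$ inside $U_q(\wh\mfn^+)$ — here $\del=\al_0+\al_1$ is the imaginary root and is fixed by $\Phi$, which is precisely why the nontrivial weights that can appear are multiples of $\del$. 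I would then invoke the gauge-transformation freedom of \cite[Sec.~3.6, Sec.~8.8]{AV22a}: the Cartan correction by $G'(\la)=\Om(\la)q^{-(\Phi(\la),\la)/2}$ is chosen exactly so that (a) the twisted intertwining property becomes \eqref{K':axiom1} with no residual scalars, and (b) the coproduct formula becomes \eqref{K':axiom2}. The identity $(8.21)$ of \cite{AV22a} gives (b) directly once the Cartan part is normalized by $G'$; I would check the two defining values $G'(\al_0), G'(\al_1)$ reproduce the parameters in the abstract formula, using $(\Phi(\al_0),\al_0)=(\al_1,\al_0)=-2$, hence $q^{-(\Phi(\al_0),\al_0)/2}=q$, and similarly for $\al_1$.

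\textbf{Step 3: well-definedness / completion.} Finally I would confirm that $\cK' = G'\cdot\Upsilon'$ genuinely lies in the stated completion of $U_q(\wh\mfb^+)$: $G'$ is a weight-0 element of the weight completion by the same argument used for $\ka$ just above \eqref{R(z)uni:def}, and $\Upsilon' = \sum_\la \Upsilon'_\la$ is a sum over $\la\in\Z_{\ge 0}\del$ of elements of $U_q(\wh\mfn^+)_\la$, so the product lies in the appropriate completion of $U_q(\wh\mfb^+)=\langle U_q(\wh\mfh),U_q(\wh\mfn^+)\rangle$; in particular no generators $f_i$ or $g$ appear. \textbf{The main obstacle} is not any computation but the conventions: \cite{AV22a} works with a right coideal and with the opposite ordering of the two Borel tensor factors, so one must carefully translate $\cR'=\cR_{21}^{-1}$ (as flagged in the footnote) and apply $\wt S$ to pass between $U_q(\mfk)$ and $U_q(\mfk)'$, making sure the twist that comes out is $\psi$ and not $\psi^{-1}$ or $\Phi\circ\om$ in some other order, and that the $(\psi\ot\id)$ placement in \eqref{K':axiom2} is on the correct tensor leg. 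Once that dictionary is pinned down, the proposition is a direct specialization of \cite[Sec.~9.7]{AV22a}.
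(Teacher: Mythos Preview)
Your proposal is correct and takes essentially the same approach as the paper: the paper does not give an independent proof but simply states that the proposition is a special case of \cite[Sec.~9.7]{AV22a}, with the coproduct formula \eqref{K':axiom2} a direct consequence of \cite[(8.21)]{AV22a}. Your write-up is a more detailed unpacking of exactly that citation, correctly identifying \cite[Thm.~8.5]{AV22a} for the existence of the standard universal K-matrix, the Cartan correction of \cite[Sec.~8.8]{AV22a} encoded in $G'$, and the conventions issue around $\cR'=\cR_{21}^{-1}$ and the left/right coideal translation via $\wt S$.
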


\begin{rmk}
In general, a universal K-matrix $\cK'$ satisfying the simple 3-factor coproduct formula \eqref{K':axiom2} is called \emph{semi-standard}, see \cite[Sec.~8.10]{AV22a} and cf.\cite[Ex.~3.6.3 (2)]{AV22b}. 
It corresponds to a particular choice of a \emph{twist pair} $(\psi,J)$ where $\psi$ is a \emph{bialgebra isomorphism} from $U_q(\wt\mfg)$ to $U_q(\wt\mfg)^{\sf cop}$ (essentially the composition of $\omega$ with a diagram automorphism determined by the coideal subalgebra) and $J$ is the trivial Drinfeld twist $1 \ot 1$, see \cite[Sec.~2.4 and 2.5]{AV22a}. 
For any coideal q-deformation as considered by \cite{Ko14} (of the fixed-point subalgebra of a suitable Lie algebra automorphism $\theta: \wh\mfg \to \wh\mfg$) there exists a semi-standard K-matrix. 
Whenever $\theta$ is the composition of $\om$ and a diagram automorphism, as is the case here, this semi-standard universal K-matrix is a standard universal K-matrix, i.e., lies in a completion of $U_q(\wt\mfb^+)$, 
\hfill \rmkend
\end{rmk}

Now we transform this formalism \cite{AV22a} for the right coideal subalgebra $U_q(\mfk)'$, expressed in terms of the universal R-matrix $\cR'$, to a formalism for the left coideal subalgebra $U_q(\mfk) = \wt S(U_q(\mfk)')$, expressed in terms of the universal R-matrix $\cR$ as used in this paper. 
To do this, note that, when going from a $U_q(\wt\mfg)$-weight module to its dual, weights transform as $\la \mapsto -\la$. 
This defines the extension of $S$ and $\wt S$ to a map on the weight completion of $U_q(\wt\mfg)$.
Therefore $\wt S(\Om) = \Om^{-1}$ but the non-group-like factor of $G'$ is fixed by $\wt S$.
We define $G: \wt P \to \C^\times$ by
\eq{ \label{G:def}
G(\la) := \Om(\la) q^{(\Phi(\la),\la)/2}
}
so that $G = \wt S(G')^{-1}$.
Also, we set
\eq{\Upsilon := \wt S(\Upsilon')^{-1} = \sum_{\la \in \Z_{\ge 0} \del} \Upsilon_\la, \qq \Upsilon_\la \in \wt S(U_q(\wh\mfn^+)_\la) \subset U_q(\wh\mfh) \cdot U_q(\wh\mfn^+)_\la. 
}

\begin{prop} \label{prop:uniK:leftcoideal}
The element 
\eq{
\cK := \wt S(\cK')^{-1} = G \cdot \Upsilon
}
satisfies
\begin{align}\label{K:axiom1} \cK \cdot u &= \psi(u) \cdot \cK \qq \text{for all } u \in U_q(\mfk), \\
\label{K:axiom2} \Del(\cK) &= (\cK \ot 1) \cdot (\id \ot \psi)(\cR) \cdot (1 \ot \cK).
\end{align}
\end{prop}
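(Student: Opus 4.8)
The statement says that $\cK := \wt S(\cK')^{-1}$ satisfies the intertwining property \eqref{K:axiom1} and the coproduct formula \eqref{K:axiom2}. The plan is to obtain both by applying $\wt S$ (extended to the weight completion) to the corresponding statements \eqref{K':axiom1}, \eqref{K':axiom2} in Proposition \ref{prop:uniK:rightcoideal} and to track carefully how $\wt S$ interacts with products, coproducts, the twist map $\psi$, and the universal R-matrix. The factorization $\cK = G \cdot \Upsilon$ with $\Upsilon_\la \in \wt S(U_q(\wh\mfn^+)_\la)$ has essentially already been recorded just before the statement, using $G = \wt S(G')^{-1}$ and $\Upsilon = \wt S(\Upsilon')^{-1}$, together with the fact that $\wt S$ is an algebra antiautomorphism, so $\wt S(G' \cdot \Upsilon')^{-1} = \wt S(\Upsilon')^{-1} \cdot \wt S(G')^{-1}$; one needs the elementary commutation of $G$ past $\Upsilon$ (both are built from weight data and nilpotent parts) to rewrite this as $G \cdot \Upsilon$, which is routine.

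For the intertwining property \eqref{K:axiom1}, I would start from \eqref{K':axiom1}: for $u' \in U_q(\mfk)'$ we have $\cK' u' = \psi(u') \cK'$. Apply $\wt S$, using that it is an algebra antiautomorphism: $\wt S(u') \wt S(\cK') = \wt S(\cK') \wt S(\psi(u'))$. Since $U_q(\mfk) = \wt S(U_q(\mfk)')$, writing $u = \wt S(u')$ this reads $u \, \wt S(\cK') = \wt S(\cK') \, \wt S(\psi \wt S^{-1}(u))$. The key algebraic input needed here is that $\wt S$ intertwines the twist maps in the expected way, i.e. $\wt S \circ \psi = \psi \circ \wt S$ (both $\om$ and $\Phi$ commute with $\wt S$, as is clear on generators from \eqref{om:def}, \eqref{Phi:def} and \eqref{unitaryantipode}; recall $\wt S^2 = \id$). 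Granting this, one gets $u \, \wt S(\cK') = \wt S(\cK') \, \psi(u)$, and inverting $\wt S(\cK')$ (which is invertible) yields $\wt S(\cK')^{-1} u = \psi(u) \, \wt S(\cK')^{-1}$, i.e. $\cK u = \psi(u) \cK$, which is \eqref{K:axiom1}.

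For the coproduct formula \eqref{K:axiom2}, I would apply $\Del \circ \wt S$ to $\cK'$ and use that $\wt S$ is a coalgebra antiautomorphism, so $\Del \circ \wt S = (\wt S \ot \wt S) \circ \Del^{\sf op} = (\wt S \ot \wt S) \circ \sigma \circ \Del$. Feeding in \eqref{K':axiom2}, $\Del(\cK') = (1 \ot \cK') (\psi \ot \id)(\cR') (\cK' \ot 1)$, and using that $\wt S \ot \wt S$ is an algebra antihomomorphism (so it reverses the three factors and the tensor-leg order inside each), one should land on an expression of the form $(\wt S(\cK') \ot 1) \cdot (\text{something from }\cR') \cdot (1 \ot \wt S(\cK'))$. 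Then invert both sides — using $(AB C)^{-1} = C^{-1} B^{-1} A^{-1}$ — to convert $\wt S(\cK')$ into $\cK = \wt S(\cK')^{-1}$ in the right positions, giving $\Del(\cK) = (\cK \ot 1) \cdot (\cdots) \cdot (1 \ot \cK)$. The remaining task is to identify the middle factor as $(\id \ot \psi)(\cR)$. This is where the bulk of the care goes: starting from $(\psi \ot \id)(\cR')$ with $\cR' = \cR_{21}^{-1}$, one applies $(\wt S \ot \wt S) \circ \sigma$ and inverts, and must show the outcome equals $(\id \ot \psi)(\cR)$. The needed facts are the standard identity $(\wt S \ot \wt S)(\cR) = \cR$ (equivalently $(\wt S \ot \id)(\cR) = \cR^{-1} = (\id \ot \wt S)(\cR)$, from the antipode axioms for a quasitriangular Hopf algebra), $\cR_{21} = (\psi \ot \psi)(\cR)$ from \eqref{psi:R}, and again $\wt S \circ \psi = \psi \circ \wt S$; combining these should collapse the composite to $(\id \ot \psi)(\cR)$.

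\textbf{Main obstacle.} The genuinely delicate step is the middle-factor computation in \eqref{K:axiom2}: bookkeeping the interplay of $\sigma$ (tensor-leg swap), the antihomomorphism property of $\wt S \ot \wt S$ (which also swaps legs on each tensor factor and reverses word order), the overall inversion, and the conversion $\cR' = \cR_{21}^{-1} = (\psi\ot\psi)(\cR)_{21}^{\,}$-type identities. The signs/orderings are easy to get wrong, and one must also be slightly careful that all manipulations are legitimate in the relevant completion (the element $\cK'$ and the R-matrix live in completions of Borel subalgebras, and $\wt S$, $\Del$, inversion must be checked to be well-defined there — but this is exactly the setting in which \cite{AV22a} works, so it can be cited). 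The intertwining property \eqref{K:axiom1} and the factorization $\cK = G\cdot\Upsilon$, by contrast, are short once $\wt S\circ\psi = \psi\circ\wt S$ is in hand.
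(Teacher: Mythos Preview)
Your approach is essentially identical to the paper's: apply $\wt S$ to \eqref{K':axiom1} and $(\wt S \ot \wt S)\circ\sigma$ to \eqref{K':axiom2}, then use $\wt S\circ\psi=\psi\circ\wt S$ and $(\wt S\ot\wt S)(\cR)=\cR$. One small correction: since $\wt S$ and inversion each reverse order, $\wt S(G'\Upsilon')^{-1}=\wt S(G')^{-1}\wt S(\Upsilon')^{-1}=G\,\Upsilon$ directly, so no commutation of $G$ past $\Upsilon$ is needed.
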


\begin{proof}
This follows straightforwardly from Proposition \ref{prop:uniK:rightcoideal}.
Namely, we apply $\wt S$ to \eqref{K':axiom1} and $(\wt S \ot \wt S) \circ \sigma$ to \eqref{K':axiom2}, and use the fact that $\wt S \circ \psi = \psi \circ \wt S$ and $(\wt S \ot \wt S)(\cR) = \cR$.
\end{proof}

Note that $U_q(\wh\mfb^+)$ is a bialgebra and, as expected, the right-hand side of \eqref{K:axiom2} lies in a completion of $U_q(\wh\mfb^+) \ot U_q(\wh\mfb^+)$, since $\psi$ interchanges the two Borel subalgebras $U_q(\wh\mfb^\pm)$.
The reflection equation satisfied by the universal element $\cK$ is as follows:
\eq{ \label{K:RE}
\cR \cdot (\cK \ot 1) \cdot (\id \ot \psi)(\cR) \cdot (1 \ot \cK) = (1 \ot \cK) \cdot (\id \ot \psi)(\cR) \cdot (\cK \ot 1) \cdot \cR.
}
This is a consequence of the linear relation \eqref{R:axiom1} for $\cR$ and the coproduct formula \eqref{K:axiom2} for $\cK$, alongside \eqref{psi:R} and $\psi^2=\id$.

\subsection{The action of the universal K-matrix on level-0 representations}

To deduce that $\cK$ has a well-defined action on level-0 representations of, say, $U_q(\wh\mfb^+)$, we can proceed in a similar way to the case of the R-matrix. 
This builds on the finite-dimensional theory for more general quantum symmetric pair subalgebras in \cite[Sec.~4]{AV22b}.

First note that if $\pi$ is a level-0 representation, $\pi$ and the twisted representation $\pi \circ \psi$ coincide on $U_q(\wh\mfh)$.
Now let $z$ once again be a formal variable.
Note that by \eqref{G:def} the function $G$ sends the basic imaginary root $\del$ to 1.
Hence the proof of \cite[Prop.~4.3.1 (3)]{AV22b} implies that the corresponding factor $G$ of the universal K-matrix descends to level-0 modules. 
Furthermore, the argument that shows $\Sigma_z(\Theta)$ is a $U_q(\wh\mfn^+) \ot U_q(\wh\mfn^-)$-valued formal power series can be easily adapted to $\Upsilon$; it yields a formal power series with coefficients in $\wt S(U_q(\wh\mfn^+)) \subset U_q(\wh\mfb^+)$:
\[
\Sigma_z(\Upsilon) = \sum_{r \ge 0} z^r \sum_{\la \in \Z_{\ge 0} \del, \, \h(\la)=r} \Upsilon_\la.
\]\
Now consider the grading-shifted universal K-matrix:
\eq{ \label{K(z):def}
\cK(z) = \Sigma_z(\cK).
}

Noting that the form of $\Upsilon$ implies that $\cK$ commutes with $k_1$, we arrive at the following main result, which is a boundary analogue of Theorem \ref{thm:R(z):action}.

\begin{thrm} \label{thm:K(z):action}
Consider a level-0 representation $\pi: U_q(\wh\mfb^+) \to \End(V)$. 
Then\footnote{In Section \ref{sec:K} we will use this notation for a rescaled version of the action of the grading-shifted universal K-matrix.}
\eq{
\cK_{\pi}(z) := \pi(\cK(z)) \in \End(V)[[z]]
}
is well-defined and commutes with $\pi(k_1)$.\end{thrm}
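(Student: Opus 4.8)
The plan is to mirror the argument that established Theorem~\ref{thm:R(z):action} for the universal R-matrix, now applied to the universal K-matrix $\cK = G \cdot \Upsilon$ from Proposition~\ref{prop:uniK:leftcoideal}. First I would recall that $\cK$ factors through a Cartan part $G$ and a nilpotent part $\Upsilon = \sum_{\la \in \Z_{\ge 0}\del} \Upsilon_\la$ with $\Upsilon_\la \in \wt S(U_q(\wh\mfn^+)_\la) \subset U_q(\wh\mfh)\cdot U_q(\wh\mfn^+)_\la$. The well-definedness of $\cK_\pi(z)$ then splits into handling these two factors separately and checking that their product is still well-defined.

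For the Cartan factor $G$: since $\pi$ is level-0, by \eqref{V:decomposition} the operator $\pi(k_1)$ acts on each finite-dimensional weight subspace $V(\ga)$ as a scalar, and $\pi(k_0) = \pi(k_1)^{-1}$ on that subspace. Although $G: \wt P \to \C^\times$ is defined on the full extended affine weight lattice, what matters is that on a level-0 module only the "classical" part of the weight is seen; I would invoke the proof of \cite[Prop.~4.3.1~(3)]{AV22b}, which applies precisely because $G$ sends the basic imaginary root $\del$ to $1$ (immediate from \eqref{G:def} since $\Om(\del) = \Om(\al_0)\Om(\al_1) = (-\xi)(-\xi^{-1}) = 1$ and $(\Phi(\del),\del) = (\del,\del) = 0$). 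This means $G$ descends to a well-defined linear operator on any level-0 $U_q(\wh\mfb^+)$-module, and in particular $\pi(G) \in \End(V)$ (no formal power series needed for this factor).

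For the nilpotent factor $\Upsilon$: applying $\Sigma_z$ and using the height function $\h$, one gets $\Sigma_z(\Upsilon) = \sum_{r \ge 0} z^r \sum_{\la \in \Z_{\ge 0}\del,\ \h(\la)=r} \Upsilon_\la$, exactly as stated in the excerpt just before \eqref{K(z):def}. The key point, identical in spirit to \eqref{Theta:keyobservation}, is that for each fixed $r$ there are only finitely many $\la \in \Z_{\ge 0}\del$ with $\h(\la) = r$ (in fact at most one, namely $\la = (r/2)\del$ when $r$ is even), so each coefficient of $z^r$ is a finite sum lying in $\wt S(U_q(\wh\mfn^+)) \subset U_q(\wh\mfb^+)$. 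Since each such coefficient is a genuine element of the algebra, $\pi$ can be applied term-by-term, yielding $\pi(\Sigma_z(\Upsilon)) \in \End(V)[[z]]$. Multiplying by the constant-in-$z$ operator $\pi(G)$ and using $\cK(z) = \Sigma_z(\cK) = \Sigma_z(G)\cdot\Sigma_z(\Upsilon) = G \cdot \Sigma_z(\Upsilon)$ (as $\Sigma_z$ is the identity on $U_q(\wt\mfh)$, hence fixes $G$), we conclude $\cK_\pi(z) = \pi(\cK(z)) \in \End(V)[[z]]$ is well-defined.

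Finally, commutation with $\pi(k_1)$: as noted in the excerpt, the form of $\Upsilon$ — its summands lie in weight spaces indexed by multiples of $\del = \al_0 + \al_1$, and $\del(h_1) = \al_0(h_1) + \al_1(h_1) = -2 + 2 = 0$ — implies each $\Upsilon_\la$ commutes with $k_1$, and $G$ commutes with $k_1$ since $U_q(\wt\mfh)$ is commutative. Hence $\cK$ commutes with $k_1$, and therefore $\cK_\pi(z)$ commutes with $\pi(k_1)$. The main obstacle, such as it is, is the bookkeeping for the Cartan factor: one must be careful that although $G$ is a priori only defined on weight modules for the full $U_q(\wt\mfg)$, the vanishing $G(\del) = 1$ is exactly what allows it to be pushed down to level-0 $U_q(\wh\mfb^+)$-modules where the $d_{\sf pr}$-grading is not available — this is the content borrowed from \cite[Prop.~4.3.1~(3)]{AV22b} and is the one genuinely non-formal ingredient.
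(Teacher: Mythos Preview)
Your proposal is correct and follows essentially the same approach as the paper: the paper's argument, given in the discussion immediately preceding the theorem, likewise splits $\cK = G \cdot \Upsilon$, invokes $G(\del)=1$ together with \cite[Prop.~4.3.1~(3)]{AV22b} to handle the Cartan factor, adapts the $\Sigma_z(\Theta)$ argument to $\Upsilon$ to obtain a formal power series with coefficients in $U_q(\wh\mfb^+)$, and deduces commutation with $k_1$ from the fact that $\Upsilon$ is supported on $\Z_{\ge 0}\del$. Your write-up is in fact more explicit (e.g.\ the computation of $G(\del)$ and the observation that at most one $\la$ contributes at each height), but the structure and ingredients are identical.
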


We will also need boundary counterparts of Propositions \ref{prop:R(z):YBE} and \ref{prop:R(z):intw}.
Consider two indeterminates $z_1,z_2$.
Applying $\Sigma_{z_1} \ot \Sigma_{z_2}$ to \eqref{K:RE} and using \eqref{psi:Sigma}, we obtain the following reflection equation for the grading-shifted universal operators:
\eq{ \label{K(z)univ:RE}
\begin{aligned}
& \cR(z_1/z_2) \cdot (\cK(z_1) \ot 1) \cdot (\id \ot \psi)(\cR(z_1z_2)) \cdot (1 \ot \cK(z_2)) = \qq \\
&\qq = (1 \ot \cK(z_2)) \cdot (\id \ot \psi)(\cR(z_1z_2)) \cdot (\cK(z_1) \ot 1) \cdot \cR(z_1/z_2).
\end{aligned}
}
Recalling that the universal R-matrix $\cR$ lies in a completion of $U_q(\wh\mfb^+) \ot U_q(\wh\mfb^-)$ and applying a tensor product of suitable representations to \eqref{K(z)univ:RE}, one obtains the following \emph{right reflection equation} with multiplicative spectral parameters.

\begin{prop} \label{prop:K(z):RE}
Consider level-0 representations $\pi^+: U_q(\wh\mfb^+) \to \End(V^+)$ and $\pi: U_q(\wh\mfg) \to \End(V)$ such that $\pi \circ \psi = \pi$.
Then
\eq{ \label{K(z):RE} 
\begin{aligned}
& \cR_{\pi^+\pi}(z_1/z_2) (\cK_{\pi^+}(z_1) \ot \Id_V) \cR_{\pi^+\pi}(z_1z_2) (\Id_{V^+} \ot \cK_\pi(z_2)) = \qq \\
&\qq = (\Id_{V^+} \ot \cK_\pi(z_2)) \cR_{\pi^+\pi}(z_1z_2) (\cK_{\pi^+}(z_1) \ot \Id_V) \cR_{\pi^+\pi}(z_1/z_2).
\end{aligned}
}
\end{prop}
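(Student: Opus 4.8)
The plan is to derive Proposition \ref{prop:K(z):RE} from the universal reflection equation \eqref{K(z)univ:RE} by applying a suitable tensor product of representations, exactly as suggested in the paragraph preceding the statement. First I would recall that the universal R-matrix $\cR$ lies in a completion of $U_q(\wh\mfb^+) \ot U_q(\wh\mfb^-)$, so that $(\id \ot \psi)(\cR)$ lies in a completion of $U_q(\wh\mfb^+) \ot U_q(\wh\mfb^+)$ (since $\psi$ interchanges the two Borel subalgebras), while $\cR$ itself pairs an upper-Borel factor with a lower-Borel factor. Both $\cK(z_1)$ and $\cK(z_2)$ lie in (a completion of) $U_q(\wh\mfb^+)$ by Proposition \ref{prop:uniK:leftcoideal} and the discussion preceding Theorem \ref{thm:K(z):action}. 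Hence every tensor factor appearing in \eqref{K(z)univ:RE} can be evaluated: the first tensor slot always sees an element of $U_q(\wh\mfb^+)$, and the second tensor slot sees elements of either $U_q(\wh\mfb^+)$ (from the $\cK$'s and from $(\id\ot\psi)(\cR)$) or $U_q(\wh\mfb^-)$ (from the bare $\cR$ factors).

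The key point is then to choose representations that can absorb all of these. For the first slot, take a level-0 representation $\pi^+ : U_q(\wh\mfb^+) \to \End(V^+)$. For the second slot, take a level-0 $U_q(\wh\mfg)$-representation $\pi : U_q(\wh\mfg) \to \End(V)$; this restricts to both $U_q(\wh\mfb^+)$ and $U_q(\wh\mfb^-)$, so it can evaluate every factor hitting the second slot. The hypothesis $\pi \circ \psi = \pi$ ensures that applying $\pi$ to a $\psi$-twisted factor, such as the second leg of $(\id \ot \psi)(\cR)$, yields the same operator as applying $\pi$ to the untwisted factor; concretely, $(\pi^+ \ot \pi)\big((\id \ot \psi)(\cR(z_1z_2))\big) = (\pi^+ \ot \pi)(\cR(z_1z_2)) = \cR_{\pi^+\pi}(z_1z_2)$ after the appropriate rescaling conventions of Section \ref{sec:LandR} are taken into account. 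Applying $\pi^+ \ot \pi$ to \eqref{K(z)univ:RE} termwise, using $(\pi^+ \ot \pi)(\cR(z_1/z_2)) = \cR_{\pi^+\pi}(z_1/z_2)$ and $\pi^+(\cK(z_i)) = \cK_{\pi^+}(z_i)$ in the first slot, $\pi(\cK(z_2)) = \cK_\pi(z_2)$ in the second slot, and noting that $\cK_\pi(z_1)$ does not appear because the universal $\cK(z_1)$ sits only in the first tensor factor, one obtains precisely \eqref{K(z):RE}.

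The one genuine subtlety — and the step I expect to be the main obstacle — is the well-definedness of this evaluation as an identity of formal power series, i.e.\ checking that $\pi^+ \ot \pi$ may be applied coefficient-by-coefficient in $z_1, z_2$ to each of the (infinitely many) terms in \eqref{K(z)univ:RE} and that the resulting series converge in $\End(V^+ \ot V)[[z_1,z_2]]$. This is handled by the same mechanism underlying Theorems \ref{thm:R(z):action} and \ref{thm:K(z):action}: after shifting by the principal grading, $\Sigma_z(\Theta)$ and $\Sigma_z(\Upsilon)$ are formal power series in $z$ whose coefficients are \emph{finite} sums lying in $U_q(\wh\mfn^\pm)$ respectively in $\wt S(U_q(\wh\mfn^+)) \subset U_q(\wh\mfb^+)$, by \eqref{Theta:keyobservation} and its analogue for $\Upsilon$; the Cartan factors $\ka$ and $G$ act diagonally on weight spaces (which are finite-dimensional for level-0 modules). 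Hence each side of \eqref{K(z)univ:RE}, shifted by $\Sigma_{z_1} \ot \Sigma_{z_2}$, is a product of finitely many operator-valued formal power series in $z_1, z_2$, so the product and its evaluation under $\pi^+ \ot \pi$ are well-defined in $\End(V^+ \ot V)[[z_1,z_2]]$. With well-definedness in hand, the identity \eqref{K(z):RE} is an immediate consequence of the universal identity \eqref{K(z)univ:RE}, and no further computation is needed — this is exactly the ``laborious linear-algebraic computations are avoided'' philosophy emphasized in the introduction.
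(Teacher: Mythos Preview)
Your proposal is correct and follows essentially the same approach as the paper: apply $\pi^+ \ot \pi$ to the grading-shifted universal reflection equation \eqref{K(z)univ:RE}, using that the second tensor slot receives elements of both Borel subalgebras (hence needs a full $U_q(\wh\mfg)$-representation) and that $\pi \circ \psi = \pi$ collapses $(\id \ot \psi)(\cR)$ to $\cR$ upon evaluation. One small remark: the reference to ``rescaling conventions of Section \ref{sec:LandR}'' is unnecessary here, since in this section $\cR_{\pi^+\pi}(z)$ and $\cK_\pi(z)$ denote the unrescaled images of the universal objects (cf.\ the footnotes to Theorems \ref{thm:R(z):action} and \ref{thm:K(z):action}); in any case scalar rescalings would not affect the identity.
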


The use of linear relations to find explicit solutions of reflection equations was proposed in \cite{MN98,DG02,DM03}.
As before, we assume that $\pi$ extends to a $U_q(\wh\mfg)$-representation\footnote{
Analogous to the case of the R-matrix, we can observe that the intersection of $U_q(\mfk)$ and $U_q(\wh\mfb^+)$ is contained in $U_q(\wh\mfh)$.
Therefore, applying a level-0 representation $\pi$ to \eqref{K:axiom1} just recovers the second part of Theorem \ref{thm:K(z):action}.
}, in which case it restricts to a $U_q(\mfk)$-representation and we obtain the following result as a consequence of \eqref{psi:Sigma}.

\begin{prop} \label{prop:K(z):intw}
If $\pi: U_q(\wh\mfg) \to \End(V)$ is a level-0 representation such that $\pi \circ \psi = \pi$, then 
\eq{ \label{K(z):intw}
\cK_\pi(z) \cdot \pi_z(u) = \pi_{1/z}(u) \cdot \cK_\pi(z) \qq \text{for all } u \in U_q(\mfk).
}
\end{prop}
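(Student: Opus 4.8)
The plan is to derive \eqref{K(z):intw} directly from the universal intertwining property \eqref{K:axiom1} by applying the grading-shifted representation $\pi_z = \pi \circ \Sigma_z$ and then using the compatibility \eqref{psi:Sigma} of $\psi$ with the grading shift. First I would recall that, by hypothesis, $\pi$ extends to a $U_q(\wh\mfg)$-representation, so $\pi$ restricts to a representation of the coideal subalgebra $U_q(\mfk) \subset U_q(\wh\mfg)$; hence it makes sense to evaluate $\pi_z$ on $U_q(\mfk)$. Moreover $U_q(\mfk)$ is generated by the four elements listed in \eqref{Uqk:def}, each of which is $\wh Q$-homogeneous (the generators $e_0 - q^{-1}\xi^{-1} k_0 f_1$ and $e_1 - q^{-1}\xi k_1 f_0$ have weights $\al_0$ and $\al_1$ respectively, while $k_0 k_1^{\pm 1}$ have weight $0$), so on each of them $\Sigma_z$ acts as multiplication by a monomial in $z$. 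Thus it suffices to establish \eqref{K(z):intw} on these generators, and by multiplicativity it then holds on all of $U_q(\mfk)$.

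Next I would take the universal identity \eqref{K:axiom1}, namely $\cK \cdot u = \psi(u) \cdot \cK$ for $u \in U_q(\mfk)$, and apply the algebra homomorphism $\Sigma_z$ (extended to the relevant completion) to both sides. Since $\cK(z) = \Sigma_z(\cK)$ by \eqref{K(z):def} and $\Sigma_z$ is an algebra map, the left side becomes $\cK(z) \cdot \Sigma_z(u)$. For the right side, the key point is \eqref{psi:Sigma}: $\psi \circ \Sigma_z = \Sigma_{z^{-1}} \circ \psi$, so $\Sigma_z(\psi(u)) = \psi(\Sigma_{z^{-1}}(u))$; but here I must be a little careful about which direction I want, so instead I apply $\Sigma_z$ to the identity written as $\psi(u)\cdot \cK = \cK \cdot u$ for $u$ replaced by $\Sigma_{z^{-1}}(u)$ --- equivalently, start from $\cK \cdot \Sigma_{z^{-1}}(u) = \psi(\Sigma_{z^{-1}}(u)) \cdot \cK$ and apply $\Sigma_z$, using $\Sigma_z \circ \Sigma_{z^{-1}} = \id$ on the left and $\Sigma_z \circ \psi \circ \Sigma_{z^{-1}} = \Sigma_z \circ \Sigma_z \circ \psi$... this bookkeeping needs to be done cleanly. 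The cleanest route: apply $\Sigma_z$ to $\cK \cdot u = \psi(u) \cdot \cK$ to get $\cK(z) \cdot \Sigma_z(u) = \Sigma_z(\psi(u)) \cdot \cK(z) = \psi(\Sigma_{z^{-1}}(u)) \cdot \cK(z)$. Now I specialize $u$ to a homogeneous generator of weight $\la \in \wh Q$: then $\Sigma_z(u) = z^{\h(\la)} u$ and $\Sigma_{z^{-1}}(u) = z^{-\h(\la)} u$, where $\h$ is the height. The powers of $z$ on the two sides are $z^{\h(\la)}$ on the left (from $\Sigma_z(u)$) versus $z^{-\h(\la)}$ coming from $\Sigma_{z^{-1}}(u)$ inside $\psi$ --- but $\psi$ itself is weight-reversing on root spaces since it is $\om \circ \Phi$ and $\om$ negates weights while $\Phi$ permutes $\al_0 \leftrightarrow \al_1$ preserving height, so $\psi(u)$ has weight $-\Phi(\la)$ of height $-\h(\la)$, and applying $\Sigma_z$ to $\psi(u)$ directly gives $z^{-\h(\la)}\psi(u)$, consistent. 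Cancelling the scalar $z^{\h(\la)}$ and rewriting in representation language yields exactly $\cK_\pi(z) \cdot \pi_z(u) = \pi_{1/z}(\psi(u)) \cdot \cK_\pi(z)$. Finally, using the hypothesis $\pi \circ \psi = \pi$, which forces $\pi_{1/z} \circ \psi = \pi \circ \Sigma_{1/z} \circ \psi = \pi \circ \psi \circ \Sigma_z = \pi \circ \Sigma_z$... wait, that gives $\pi_z$, not $\pi_{1/z}$; the correct manipulation is $\pi_{1/z}(\psi(u)) = \pi(\Sigma_{1/z}(\psi(u))) = \pi(\psi(\Sigma_z(u)))$ by \eqref{psi:Sigma}, and then $\pi \circ \psi = \pi$ gives $\pi(\Sigma_z(u)) = \pi_z(u)$ --- hmm, that collapses both sides. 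So in fact the correct reading of the displayed identity must be that after applying $\pi$ and using $\pi \circ \psi = \pi$ one gets $\pi_{1/z}(u)$ on the right directly; I would present it as: applying $\pi$ to $\cK(z)\cdot \Sigma_z(u) = \Sigma_z(\psi(u))\cdot\cK(z)$, the right-hand factor is $\pi(\Sigma_z(\psi(u))) = (\pi \circ \psi)(\Sigma_z(\psi(u)))$... the honest statement is that $\pi(\Sigma_z\psi(u)) = \pi(\psi(\Sigma_{1/z}u)) = \pi(\Sigma_{1/z}u) = \pi_{1/z}(u)$ using \eqref{psi:Sigma} and $\pi\circ\psi=\pi$, and this is exactly the claimed right-hand side.

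The whole argument is short once the direction of the grading shift is pinned down; the only genuinely delicate point --- and the step I expect to cause the most friction --- is keeping track of whether $\psi$ conjugates $\Sigma_z$ to $\Sigma_z$ or $\Sigma_{1/z}$, i.e. using \eqref{psi:Sigma} in the correct orientation so that the $z$ and $1/z$ land on the correct sides; a sign error there would give a meaningless identity. Everything else --- that $U_q(\mfk)$ lies in $U_q(\wh\mfg)$ so $\pi_z$ is defined on it, that $\cK(z) = \Sigma_z(\cK)$ is a well-defined operator by Theorem \ref{thm:K(z):action}, and that $\Sigma_z$ is an algebra homomorphism so it passes through products --- is routine. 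I would therefore structure the proof as: (i) note $\pi$ restricts to $U_q(\mfk)$; (ii) apply $\Sigma_z$ to \eqref{K:axiom1}; (iii) use \eqref{psi:Sigma} to rewrite $\Sigma_z \circ \psi$ as $\psi \circ \Sigma_{1/z}$; (iv) apply $\pi$ and invoke $\pi \circ \psi = \pi$ to collapse $\pi \circ \psi \circ \Sigma_{1/z}$ into $\pi_{1/z}$, obtaining \eqref{K(z):intw}.
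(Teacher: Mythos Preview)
Your proposal is correct and follows exactly the approach the paper intends: the paper's own proof is the single clause ``as a consequence of \eqref{psi:Sigma}'', and your steps (ii)--(iv) are precisely the unpacking of that clause --- apply $\Sigma_z$ to \eqref{K:axiom1}, use $\Sigma_z \circ \psi = \psi \circ \Sigma_{1/z}$ (which follows from \eqref{psi:Sigma} and $\psi^2=\id$), then apply $\pi$ and use $\pi \circ \psi = \pi$. Your final clean chain $\pi(\Sigma_z(\psi(u))) = \pi(\psi(\Sigma_{1/z}(u))) = \pi(\Sigma_{1/z}(u)) = \pi_{1/z}(u)$ is the whole argument; the earlier detours through homogeneous generators and height functions are unnecessary but harmless.
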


We close this section with some comments parallel to Remark \ref{rmk:R(z):YBE:define}.

\begin{rmk} 
If the solution space of \eqref{K(z):intw} is 1-dimensional, Proposition \ref{prop:K(z):intw} implies that any solution must be a scalar multiple of $\mc K(z)$ and hence automatically satisfy the reflection equation \eqref{K(z):RE}.
In the case that $\pi: U_q(\wh\mfb^+) \to \End(V)$ extends to a representation and $V$ is finite-dimensional, there is an analogue to Remark \ref{rmk:R(z):YBE:define}.
Namely, the solution space of \eqref{K(z):intw} for irreducible representations is 1-dimensional and the existence of a solution of the intertwining condition \eqref{K(z):intw} depending rationally on $z$ leads to a \emph{trigonometric K-matrix}. 
See \cite[Secs.~5 and 6]{AV22b} for more detail.

To explicitly determine $\cK_{\pi^+}(z)$ in the cases where $\pi^+: U_q(\wh\mfb^+) \to \End(V)$ does not extend to a $U_q(\wh\mfg)$-representation, we will use the reflection equation \eqref{K(z):RE}, with the other K-matrix $\cK_\pi(z)$ determined using Proposition \ref{prop:K(z):intw}.
\hfill \rmkend
\end{rmk}

\section{Borel representations in terms of the q-oscillator algebra} \label{sec:Borelreps}

\subsection{The infinite-dimensional vector space $W$}

The countably-infinite-dimensional vector space plays a central role in the  theory of Baxter's Q-operators. 
We may define it as the free $\C$-module over a given set $\{ w_j \}_{j \in \Z_{\ge 0}}$:
\eq{
W = \bigoplus_{j \ge 0} \C w_j.
}
Given this distinguished basis, elements of $\End(W)$ naturally identify with infinite-by-infinite matrices with the property that all but finitely many entries of each column are zero. 

It is convenient to work with a particular subalgebra of $\End(W)$ depending on the deformation parameter $q$.
More precisely, consider the $\C$-linear maps $a$, $a^\dag$ on $W$ defined by
\eq{
a \cdot w_{j+1} = w_j, \qq a \cdot w_0 = 0, \qq a^\dag \cdot w_j = \big( 1-q^{2(j+1)} \big) w_{j+1}
}
for all $j \in \Z_{\ge 0}$.
For the description of L-operators associated to $U_q(\wh\mfg)$ acting on $W \otimes \C^2$ (particular solutions of the Yang-Baxter equation), it is convenient to consider a linear operator $q^D$ which is a square root of $1-a^\dag a$, i.e. $q^D \cdot w_j = q^j w_j$ for $j \in \Z_{\ge 0}$.
Note that $q^D$ is invertible and we let $q^{-D}$ denote its inverse.

\begin{rmk}
Often the q-oscillator algebra is defined as an abstract algebra, generated by $a$, $a^\dag$ and $q^{\pm D}$ subject to certain relations, which naturally embeds into $\End(W)$.
This version of the q-oscillator algebra appeared in the guise of a topological algebra for instance in \cite[Sec. 2.3]{BGKNR10} and with slightly different conventions in \cite{KT14}\footnote{
The two vector spaces $W_1$ and $W_2$ introduced in \cite[Sec. 2.3]{KT14} are naturally isomorphic, so that the two algebras ${\rm Osc}_1$ and ${\rm Osc}_2$ can be identified with the same subalgebra of $\End(W_1) \cong \End(W_2)$.}. \hfill \rmkend
\end{rmk}

\subsection{Diagonal operators from functions and an extended q-oscillator algebra} \label{sec:extqosc}

To accommodate the action of the universal R and K-matrices on certain level-0 modules, we will need an extension of the commutative subalgebra $\langle q^{\pm D} \rangle$ and work over the commutative ring $\C[[z]]$. 

Denote by $\cF$ the commutative algebra of functions from $\Z_{\ge 0}$ to $\C[[z]]$.
For any $f \in \cF$ we define $f(D) \in \End(W)[[z]]$ via 
\eq{
f(D) \cdot w_j = f(j) w_j.
}
Thus, we obtain an algebra embedding $\cF \to \End(W)[[z]]$. 
Now we combine this with the maps $a$, $a^\dag$ defined above (viewed as maps on $W \ot \C[[z]]$, acting trivially on the second factor).

\begin{defn}
The \emph{(extended) q-oscillator algebra} is the subalgebra $\cA \subset \End(W)[[z]]$ generated by $a^\dag$, $a$ and $\cF(D)$.
\hfill \defnend 
\end{defn}

As can be verified on basis vectors, in $\cA$ one has the relations
\eq{ \label{A:basicrels2}
a a^\dag = 1-q^{2(D+1)}, \qq a^\dag a = 1-q^{2D}, \qq a f(D) = f(D+1) a, \qq f(D) a^\dag = a^\dag f(D+1).
}
One straightforwardly verifies that the subalgebras $\cF(D)$, $\langle a^\dag \rangle$ and $\langle a \rangle$ are self-centralizing.
Note that the operator 
\eq{
\ba^\dag := -q^{-2D} a^\dag \in \End(W)
}
sends $w_j$ to $(1-q^{-2(j+1)})w_{j+1}$.
Clearly, $\cA$ is also generated by $\ba^\dag$, $a$ and $\cF(D)$.
The transformation $q \mapsto q^{-1}$ defines an algebra automorphism of $\cA$, preserving the subalgebra $\cF(D)$, fixing the generator $a$ and interchanging the generators $\adag$ and $\badag$.

\subsection{Endomorphisms of $W \ot W$}

The linear maps
\[
a_1:=a \ot \Id_W, \qq a^\dag_1 := a^\dag \ot \Id_W, \qq
a_2 := \Id_W \ot a, \qq a^\dag_2 := \Id_W \ot a^\dag 
\]
together with $\cF(D_1) \cup \cF(D_2)$ generate $\cA \ot \cA$ over $\C[[z]]$.
We will need a larger subalgebra of $\End(W \ot W)$: we will allow all functions of two nonnegative integers as well as formal power series in certain locally nilpotent endomorphisms.

Denote by $\cF^{(2)}$ the commutative algebra of functions from $\Z_{\ge 0} \times \Z_{\ge 0}$ to $\C[[z]]$.
For any $f \in \cF^{(2)}$ we define $f(D_1,D_2) \in \End(W \ot W)[[z]]$ via 
\eq{
f(D_1,D_2) \cdot (w_j \ot w_k) = f(j,k) w_j \ot w_k,
}
yielding an algebra embedding $\cF^{(2)} \to \End(W \ot W)[[z]]$.
Now note that $a_1 a^\dag_2$ and $a^\dag_1 a_2$ are locally nilpotent endomorphisms of $W \ot W$.
Hence, for any $g_{k,\ell}, h_{k,\ell} \in \cF^{(2)}$ series of the form
\eq{ \label{series}
\sum_{k,\ell \ge 0} (a^\dag_2)^\ell g_{k,\ell}(D_1,D_2) a_1^k, \qq \sum_{k,\ell \ge 0} (a^\dag_1)^k h_{k,\ell}(D_1,D_2) a_2^\ell
}
truncate when applied to any basis vector $w_j \ot w_{j'}$. 
We obtain a class of well-defined elements of $\End(W \ot W)[[z]]$.
We denote by $\cA^{(2)}$ the $\C[[z]]$-span of the operator-valued formal series \eqref{series}, which is easily seen to be a subalgebra of $\End(W \ot W)[[z]]$.

\subsection{The Borel representations} \label{sec:reps:plus} \mbox{}
We introduce four level-0 representations of $U_q(\wh\mfb^+)$. 
First of all, let $\mu \in \C$ be a free parameter. 
It is straightforward to check that the following assignments define a representation $\ups$ of $U_q(\wh\mfg)$ on $W$:
\eq{ \label{hom:sigma}
\begin{aligned}
\ups(e_0) &= \ups(f_1) = \frac{1}{1-q^2} a^\dag, && \ups(k_0) = q^{-\mu+1+2D}, \\
\ups(e_1) &= \ups(f_0) = \frac{q^2}{1-q^2} a (q^{-\mu} - q^{\mu-2D}), \qq && \ups(k_1) = q^{\mu-1-2D}, 
\end{aligned}
}
The module structure on $W$ defined by $\ups$ is the evaluation Verma module: affinizations of finite-dimensional irreducible $U_q(\mfsl_2)$-modules arise as quotients if $\mu \in \Z_{>0}$ (also see \cite[Sec.~2.2]{KT14}).

We will in addition consider three $U_q(\wh{\mfb}^+)$-representations which do not extend to representations of $U_q(\wh{\mfg})$.
A useful reducible representation $\phi: U_q(\wh{\mfb}^+) \to \End(W)$ is given by
\eq{
\phi(e_0) = 0, \qq 
\phi(e_1) = \frac{q}{1-q^2} a, \qq
\phi(k_0) = q^{\mu+1+2D}, \qq
\phi(k_1) = q^{-\mu-1-2D}
}
which is closely related to the special evaluation homomorphism defined in \cite[Eq.~(4.6)]{KT14}.
The following representations $\vrho, \brho: U_q(\wh{\mfb}^+) \to \End(W)$ play an essential role in the definition of Baxter Q-operators:\eq{ \label{homs:plus}
\begin{aligned}
\vrho(e_0) &= \frac{1}{1-q^2} a^\dag, 
& \vrho(e_1) &= \frac{q^2}{1-q^2} a, 
& \vrho(k_0) &= q^{2D}, 
& \vrho(k_1) &=  q^{-2D}, \\
\brho(e_0) &= \frac{q^2}{1-q^2} \ba^\dag, \qu
& \brho(e_1) &= \frac{1}{1-q^2} a, \qu
& \brho(k_0) &= q^{2(D+1)}, \qu
& \brho(k_1) &= q^{-2(D+1)}.
\end{aligned}
}
They correspond to the representations $L^\pm_{1,a}$ introduced in \cite[Def.~3.7]{HJ12} (for suitable $a \in \C^\times$) called \emph{prefundamental} representations in \cite{FH15}, in which their role in the construction of Q-operators for closed chains is studied.

We will henceforth repeatedly denote grading-shifted representations by the notation \eqref{reps:gradingshift}.
Note that the grading-shifted representation $\ups_z$ is an algebra homomorphism from $U_q(\wh\mfg)$ to $\End(W)[z,z^{-1}]$.
Furthermore, the grading-shifted representations $\ups_z|_{U_q(\wh{\mfb}^+)}$, $\phi_z$, $\vrho_z$, $\brho_z$ are algebra homomorphisms from $U_q(\wh{\mfb}^+)$ to $\End(W)[z] \subset \End(W)[[z]]$.
Finally, note that $\vrho_z$, $\brho_z$ correspond to the representations defined by \cite[Eq.~(3.5)]{KT14}.

\begin{rmk} \label{rmk:signdifference}
The grading-shifted representation in \cite[Eq.~(2.9)]{VW20} is related to $\vrho_z$ by a factor of $-1$ in the actions of $e_0$ and $e_1$: in other words it is equal to $\vrho_{-z}$. 
Since the Baxter Q-operators only depend on $z^2$, see \cite[Lem.~4.5]{VW20}, there are no serious discrepancies. 
The benefit of the current choice is its consistency across the relevant level-0 representations, with $\ups$ having the same sign convention as finite-dimensional representations such as $\Pi$, see Section \ref{sec:LandR}. \hfill \rmkend
\end{rmk}

\subsection{The $U_q(\wh\mfb^+)$-intertwiner $\cO$}\label{sec:O+}

The tensor products $\vrho_{q^{-\mu/2} z} \ot \brho_{q^{\mu/2}z}$ and $\ups_{z} \ot \phi_{z}$ of shifted representations are closely related in the following sense: the two induced $U_q(\wh\mfb^+)$-actions on $W \ot W$ are conjugate by an element in $\cA^{(2)}$ which is independent of $z$.
More precisely, consider the deformed exponential
\eq{ \label{qexp:def}
e_{q^2}(x) = \sum_{k=0}^\infty \frac{x^k}{(q^2;q^2)_k}.
}
We refer to Appendix \ref{app:qexp} for more details on this formal series.
We define the following element of $\GL(W \ot W)$:
\eq{ \label{Oplus:def}
\cO = e_{q^2}(q^2 a_1 \ba^\dag_2)^{-1} q^{\mu (D_1-D_2)/2}.
}
The following statement is \cite[Eq.~(4.4)]{KT14} and connects to \cite[Thm.~3.8]{FH15}; for completeness we provide a proof in the present conventions.

\begin{thrm}\label{thm:O:plus}
The $U_q(\wh\mfb^+)$-representations $\vrho_{q^{-\mu/2} z} \ot \brho_{q^{\mu/2} z}$ and $\ups_{z} \ot \phi_{z}$ are intertwined by $\cO$:
\eq{ \label{O:plus:intw}
\cO \, \big( \vrho_{q^{-\mu/2} z} \ot \brho_{q^{\mu/2} z} \big)(\Del(u))  = \big( \ups_{z} \ot \phi_{z} \big)(\Del(u)) \,\, \cO \qq \text{for all } u \in U_q(\wh\mfb^+).
}
\end{thrm}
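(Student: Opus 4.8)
The plan is to reduce the claimed intertwining relation to a finite check on the algebra generators $e_0, e_1, k_0, k_1$ of $U_q(\wh\mfb^+)$. Since $\cO$, both sides of \eqref{O:plus:intw}, and the various grading-shifted representations are all algebra homomorphisms into well-behaved subalgebras of $\End(W \ot W)[[z]]$ (specifically $\cA^{(2)}$), and since $\Del$ is an algebra homomorphism, it suffices to verify \eqref{O:plus:intw} for $u \in \{e_0, e_1, k_0, k_1\}$. For the Cartan generators $k_0, k_1$ this is immediate: $\Del(k_i) = k_i \ot k_i$ acts diagonally in the $w_j \ot w_k$ basis on both sides, and the exponents match because $\ups$ and $\vrho$ (resp.\ $\phi$ and $\brho$) agree on $U_q(\wh\mfh)$ after the $q^{\mu(D_1-D_2)/2}$-twist built into $\cO$ — one checks directly that $q^{\mu(D_1-D_2)/2}$ conjugates $\vrho_z(k_i) \ot \brho_z(k_i)$ (with the $q^{\mp\mu/2}$ shifts) to $\ups_z(k_i) \ot \phi_z(k_i)$, and the exponential factor $e_{q^2}(q^2 a_1 \ba^\dag_2)$ commutes with $\Del(k_1)$ and with $\Del(k_0)$ since $a_1 \ba^\dag_2$ has weight zero for $k_0 k_1$ (it lowers $D_1$ by one and raises $D_2$ by one, and the relevant Cartan action pairs these oppositely).

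The substance is in the two generators $e_0$ and $e_1$. Here the strategy is to use the conjugation form of $\cO$: writing $\cO = E^{-1} \, q^{\mu(D_1-D_2)/2}$ with $E = e_{q^2}(q^2 a_1 \ba^\dag_2)$, the relation \eqref{O:plus:intw} for $u = e_i$ becomes, after absorbing the easy Cartan twist, a statement that $E$ conjugates $(\vrho \ot \brho)(\Del(e_i))$ (grading-shifted and $\mu$-shifted appropriately) into $(\ups \ot \phi)(\Del(e_i))$. Using the coproduct formulas $\Del(e_0) = e_0 \ot 1 + k_0 \ot e_0$ and $\Del(e_1) = e_1 \ot 1 + k_1 \ot e_1$, and the explicit formulas \eqref{hom:sigma}, \eqref{homs:plus} for the four representations, each side is an explicit element of $\cA^{(2)}$ involving $a_1^\dag, a_1, a_2^\dag, a_2$ and diagonal operators. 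The computation then rests on the commutation identity between $E$ and the oscillator generators: one shows $E a_1 E^{-1}$, $E \ba_2^\dag E^{-1}$, etc., using the relations \eqref{A:basicrels2} together with the functional equation for $e_{q^2}$, i.e.\ that $e_{q^2}(q^2 x) (1-x) = e_{q^2}(x)$ (equivalently $e_{q^2}(x)^{-1}$ intertwines multiplication operators by a shift) — see Appendix \ref{app:qexp}. Concretely, $a_1 \ba_2^\dag$ commutes with itself, $a_1$ commutes with $a_1 \ba_2^\dag$, while $\ba_1^\dag$ and $a_2$ pick up the nontrivial conjugation that produces exactly the difference between the $\vrho \ot \brho$ presentation and the $\ups \ot \phi$ presentation.

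I expect the main obstacle to be the bookkeeping in the $e_1$ computation: this is the generator where $\vrho$ acts by $a$ in the first slot and $\brho$ acts by $a$ in the second slot, so $\Del(e_1)$ mixes both tensor factors nontrivially, and conjugation by $E$ must convert the term $q^{-2D_1} \ot a$ (coming from $k_1 \ot e_1$) together with $a \ot 1$ into the single term $\propto a(q^{-\mu} - q^{\mu - 2D}) \ot 1$ appearing in $\ups(e_1)$ — i.e.\ the exponential must collapse a two-term operator into a one-term one. Verifying this requires carefully commuting $E$ past $a_2$, which by \eqref{A:basicrels2} shifts $D_2 \mapsto D_2 + 1$ inside the argument of $e_{q^2}$, generating a geometric-type resummation; the identity $e_{q^2}(x)^{-1}(1 - q^2 x \cdot (\text{shift})) = \text{(shifted) } e_{q^2}(x)^{-1}$ is what makes the collapse happen. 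For $e_0$ the analogous step is milder since $\vrho(e_0) = \tfrac{1}{1-q^2}a^\dag$ and $\ups(e_0)$ have the same form, so only the $k_0 \ot e_0$ term needs massaging. Once these two generator identities are established, the homomorphism property closes the proof; I would also remark that since all operators involved lie in $\cA^{(2)}$ and act by truncating series on each basis vector, there are no convergence subtleties and equality of formal power series in $z$ can be checked coefficient-wise if desired.
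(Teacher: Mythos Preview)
Your proposal is correct and follows essentially the same approach as the paper: reduce to the generators $\{e_0,e_1,k_0,k_1\}$, dispose of the Cartan generators by a weight-zero argument, and handle $e_0,e_1$ via commutation identities between the q-exponential $E=e_{q^2}(q^2 a_1\ba^\dag_2)$ and the oscillator operators. The paper packages the needed commutators as Lemma~\ref{lem:qexp:rels1} (in particular \eqref{qexp:adag} and \eqref{qexp:Dba}, specialized at $y=q^2$), proving them by a short induction on powers rather than via the functional equation for $e_{q^2}$ that you invoke; either route yields the same identities, and your diagnosis that the $e_1$ case is the one requiring the nontrivial shift identity (the paper's \eqref{qexp:Dba}) is on target.
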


\begin{proof} 
The relations \eqrefs{qexp:commute}{qexp:Dba} can be evaluated at $y=q^2$, yielding
\begin{align*}
q^{\mu (D_2-D_1)/2} e_{q^2}(q^2a_1 \ba^\dag_2) \ba^\dag_2 &= \big( q^{-\mu/2} a^\dag_1 + q^{2(D_1+1)+\mu/2} \ba^\dag_2 \big) q^{\mu (D_2-D_1)/2} e_{q^2}(q^2a_1 \ba^\dag_2), \\
\multicolumn{2}{c}{$q^{\mu(D_2-D_1)/2} e_{q^2}(q^2a_1 \ba^\dag_2) \big( a_1 (q^{-2\mu} - q^{- 2D_1}) + q^{-2(D_1+1)} a_2 \big)= \qq \qq$} \\
\multicolumn{2}{c}{$\qq \qq = \big( a_1 q^{-3\mu/2} + q^{-\mu/2 - 2(D_1+1)} a_2 \big) q^{\mu (D_2-D_1)/2} e_{q^2}(q^2a_1 \ba^\dag_2)$,} \\
q^{\mu(D_2-D_1)/2} e_{q^2}(q^2a_1 \ba^\dag_2) q^{2(D_1+D_2+1)} &= q^{2(D_1+D_2+1)} q^{\mu(D_2-D_1)/2} e_{q^2}(q^2 a_1 \ba^\dag_2) ,\\
q^{\mu (D_2-D_1)/2} e_{q^2}(q^2 a_1 \ba^\dag_2) q^{-2(D_1+D_2+1)}  &= q^{-2(D_1+D_2+1)} q^{\mu (D_2-D_1)/2} e_{q^2}(q^2 a_1 \ba^\dag_2).
\end{align*}
These directly imply \eqref{O:plus:intw} for $u \in \{ e_0,e_1,k_0,k_1\}$.
\end{proof}

\subsection{Formalism for $U_q(\wh\mfb^-)$}

Recall from \eqref{psi:def} the automorphism $\psi$ which interchanges the two Borel subalgebras.
Note that the representation $\ups: U_q(\wh\mfg) \to \End(W)$ satisfies
\eq{ \label{sigma:psi:symmetry}
\ups = \ups \circ \psi.
}
Hence, it is natural to define representations of $U_q(\wh\mfb^-)$ corresponding to $\vrho$, $\brho$ and $\phi$, as follows:
\eq{ \label{maps:minus}
\vrho^- := \vrho\circ \psi, \qq
\brho^{\, -} := \brho \circ \psi, \qq
\phi^- := \phi \circ \psi.
}
Explicitly, we have
\eq{ \label{homs:minus}
\begin{aligned}
\vrho^-(f_0) &= \frac{q^2}{1-q^2} a, \qu & 
\vrho^-(f_1) &= \frac{1}{1-q^2} a^\dag, \qu
& \vrho^-(k_0) &= q^{2D}, & 
\vrho^-(k_1) &= q^{-2D}, \\
\brho^{\, -}(f_0) &= \frac{1}{1-q^2} a, &
\brho^{\, -}(f_1) &= \frac{q^2}{1-q^2} \ba^\dag, &
\brho^{\, -}(k_0) &= q^{2(D+1)}, & 
\brho^{\, -}(k_1) &= q^{-2(D+1)}, \\
\phi^-(f_0) &= \frac{q}{1-q^2} a, & 
\phi^-(f_1) &= 0, &
\phi^-(k_0) &= q^{\mu+1+2D}, \qu &
\phi^-(k_1) &= q^{-\mu-1-2D}.
\end{aligned}
}
By \eqref{psi:Sigma}, whereas the grading-shifted representations $\vrho_z$, $\brho_z$, $\phi_z$ take values in $\End(W) \ot \C[z]$, their negative counterparts $\vrho^-_z$, $\brho^{\, -}_z$, $\phi^-_z$ take values in $\End(W) \ot \C[z^{-1}]$.

Since $\psi$ is a coalgebra antiautomorphism, using \eqref{psi:Sigma} we immediately deduce the following characterization of the tensorial opposite of the intertwiner $\cO$.

\begin{crl}\label{crl:O:minus}
The linear map 
\eq{ \label{Omin:def}
\cO_{21} = e_{q^2}(q^2\ba^\dag_1 a_2)^{-1} q^{\mu (D_2-D_1)/2}  \in \End(W \ot W).
}
intertwines the $U_q(\wh{\mfb}^-)$-representations $\brho^{\, -}_{q^{-\mu/2} z} \ot \vrho^-_{q^{\mu/2} z}$ and $\phi^-_z \ot \ups_z$, viz. 
\eq{ \label{O:min:intw}
\cO_{21} \, \big( \brho^{\, -}_{q^{-\mu/2} z} \ot \vrho^-_{q^{\mu/2} z} \big)(\Del(u)) = \big( \phi^-_z \ot \ups_z \big)(\Del(u)) \, \, \cO_{21} \qq \text{for all } u \in U_q(\wh{\mfb}^-).
}
\end{crl}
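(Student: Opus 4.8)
The plan is to obtain Corollary~\ref{crl:O:minus} as a purely formal consequence of Theorem~\ref{thm:O:plus}, by applying the tensor-flip $(\cdot)_{21}$ to the plus-side intertwining identity \eqref{O:plus:intw} and precomposing with the twist map $\psi$. The two facts that make this work are that $\psi$ is an \emph{involutive coalgebra antiautomorphism} of $U_q(\wt\mfg)$ interchanging the Borel subalgebras $U_q(\wh\mfb^+)$ and $U_q(\wh\mfb^-)$, so $(\psi\ot\psi)\circ\Del = \Del^{\sf op}\circ\psi$, and that $\psi$ intertwines principal grading shifts via $\psi\circ\Sigma_z = \Sigma_{1/z}\circ\psi$, see \eqref{psi:Sigma}.

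First I would record the bookkeeping for the grading-shifted Borel representations. Combining \eqref{psi:Sigma} with the definitions \eqref{maps:minus} of $\vrho^-,\brho^{\,-},\phi^-$ and the symmetry \eqref{sigma:psi:symmetry} of $\ups$, one gets
\[
\brho^{\,-}_{q^{-\mu/2}z} = \brho_{q^{\mu/2}/z}\circ\psi, \qq \vrho^-_{q^{\mu/2}z} = \vrho_{q^{-\mu/2}/z}\circ\psi, \qq \phi^-_z = \phi_{1/z}\circ\psi, \qq \ups_z = \ups_{1/z}\circ\psi.
\]
Next, for $v \in U_q(\wh\mfb^-)$ I would write $v = \psi(u)$ with $u = \psi(v) \in U_q(\wh\mfb^+)$; then $(\psi\ot\psi)(\Del(v)) = \Del^{\sf op}(u) = \sigma(\Del(u))$. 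Feeding this together with the four identities above into $\big(\brho^{\,-}_{q^{-\mu/2}z}\ot\vrho^-_{q^{\mu/2}z}\big)(\Del(v))$ and $\big(\phi^-_z\ot\ups_z\big)(\Del(v))$, and using that $(\cdot)_{21}$ (the swap of tensor factors in $\End(W)\ot\End(W)$) is an algebra automorphism, rewrites both quantities as $\big(\big(\vrho_{q^{-\mu/2}/z}\ot\brho_{q^{\mu/2}/z}\big)(\Del(u))\big)_{21}$ and $\big(\big(\ups_{1/z}\ot\phi_{1/z}\big)(\Del(u))\big)_{21}$, respectively.

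Finally I would invoke Theorem~\ref{thm:O:plus} with spectral parameter $1/z$, in the form $\big(\vrho_{q^{-\mu/2}/z}\ot\brho_{q^{\mu/2}/z}\big)(\Del(u)) = \cO^{-1}\,\big(\ups_{1/z}\ot\phi_{1/z}\big)(\Del(u))\,\cO$, and apply $(\cdot)_{21}$ to this identity of endomorphisms of $W\ot W$. Since $(\cO)_{21}$ is exactly the operator \eqref{Omin:def} --- immediate from the explicit formula \eqref{Oplus:def}, as $(\cdot)_{21}$ swaps the labels $1\leftrightarrow 2$ and so sends $q^2 a_1\ba^\dag_2$ to $q^2 a_2\ba^\dag_1 = q^2\ba^\dag_1 a_2$ and $q^{\mu(D_1-D_2)/2}$ to $q^{\mu(D_2-D_1)/2}$ --- this yields precisely \eqref{O:min:intw}.

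The computation is entirely mechanical, and the corollary is indeed an immediate consequence of Theorem~\ref{thm:O:plus}; the only place where care is needed is tracking the $z\mapsto 1/z$ and $q^{\pm\mu/2}$ replacements as they pass through the flip and the twist. For this reason, in practice I would carry out the last two steps by verifying \eqref{O:min:intw} directly on the algebra generators $f_0,f_1,k_0^{\pm1},k_1^{\pm1}$ of $U_q(\wh\mfb^-)$, mirroring the closing lines of the proof of Theorem~\ref{thm:O:plus}.
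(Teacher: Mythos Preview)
Your proposal is correct and follows exactly the approach indicated in the paper: the corollary is obtained from Theorem~\ref{thm:O:plus} by using that $\psi$ is an involutive coalgebra antiautomorphism interchanging the Borel subalgebras together with the compatibility \eqref{psi:Sigma}, then applying the tensor flip. You have simply spelled out in detail what the paper compresses into one sentence; the bookkeeping of the spectral shifts and the identification of $(\cO)_{21}$ with \eqref{Omin:def} are all handled correctly.
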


\section{L-operators and R-operators} \label{sec:LandR}

In order to define L-operators, we recall the standard 2-dimensional representation
$\Pi : U_q(\wh{\mfg}) \rightarrow \End(\C^2)$ determined by
\eq{
\begin{aligned}
\Pi(e_0) = \Pi(f_1) &= \begin{pmatrix} 0 & 0 \\ 1 & 0 \end{pmatrix}, \qu & \qu 
\Pi(k_0) &= \begin{pmatrix} q^{-1} & 0 \\ 0 & q \end{pmatrix}, \\
\Pi(e_1) = \Pi(f_0) &= \begin{pmatrix} 0 & 1 \\ 0 & 0 \end{pmatrix}, \qu & \qu
\Pi(k_1) &= \begin{pmatrix} q & 0 \\ 0 & q^{-1} \end{pmatrix}.
\end{aligned}
}

In analogy with \eqref{sigma:psi:symmetry}, we have
\eq{
\label{Pi:psi:symmetry} \Pi = \Pi \circ \psi.
}

\subsection{L-operators for $U_q(\wh\mfb^+)$-modules} \label{sec:L:plus}

We will now obtain explicit formulas for certain scalar multiples of the four different actions of the grading-shifted universal R-matrix on $W \ot \C^2$.
In these cases both Theorem \ref{thm:R(z):action} and Proposition \ref{prop:R(z):intw} apply.
It turns out that the relevant linear equations all have 1-dimensional solution spaces over $\C[[z]]$.
The following linear operators are convenient scalar multiples.
\begin{align} 
\cL_\vrho(z) &= \begin{pmatrix} 
q^D & a^\dag q^{-D-1} z \\ 
a q^{D+1} z & q^{-D}-q^{D+2} z^2 
\end{pmatrix},\\[2mm]
\label{Lbar:plus:def} \cL_\brho(z) &= \begin{pmatrix} 
q^{D+1} - q^{-D+1} z^2 & \ba^\dag q^{-D} z \\ 
a q^{D} z & q^{-D-1} 
\end{pmatrix},\\[2mm]
\label{Lsigma:plus:def} \cL_\ups(z) &= \begin{pmatrix} 
q^D - q^{-D+\mu} z^2 & \adag q^{-D-2+\mu} z \\ 
a q \big( q^{D-\mu} - q^{-D+\mu} \big) z & q^{-D-1+\mu} - q^{D+1} z^2 
\end{pmatrix},\\[2mm]
\cL_\phi(z) &= \begin{pmatrix} 
q^{D+1} & 0 \\ a q^{D+1} z & q^{-D-\mu} 
\end{pmatrix}.
\end{align}

\begin{rmk}
We have abused notation by representing linear operators on $\End(W \ot \C^2)$ as $2 \times 2$ matrices with coefficients in $\End(W)$ (as opposed to the conventional usage that realizes operators on $\End(\C^2\ot W)$ in this way).  \hfill \rmkend
\end{rmk}

The following result is \cite[Cor.\ 4.2]{KT14}.

\begin{thrm}\label{thm:fund}  
The above L-operators satisfy the following relation in $\End(W \ot W \ot \C^2)[[z]]$:
\eq{ \label{fund:bulk:plus}
\cO_{12}  \cL_\vrho(q^{-\mu/2} z)_{13} \cL_\brho(q^{\mu/2} z)_{23} =  
\cL_\ups(z)_{13} \cL_\phi(z)_{23} \cO_{12} .
}
\end{thrm}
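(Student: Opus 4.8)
The identity \eqref{fund:bulk:plus} is exactly the image of the universal intertwining relation \eqref{O:plus:intw} under a third tensor factor carrying the 2-dimensional representation $\Pi$, so the plan is to deduce it from Theorem \ref{thm:O:plus} together with the characterization of the L-operators as actions of the universal R-matrix. First I would recall that, by Theorem \ref{thm:R(z):action} and Proposition \ref{prop:R(z):intw}, each of $\cL_\vrho(z)$, $\cL_\brho(z)$, $\cL_\ups(z)$, $\cL_\phi(z)$ is (a fixed scalar multiple of) the action $(\pi \ot \Pi)(\cR(z))$ of the grading-shifted universal R-matrix for the appropriate level-0 $U_q(\wh\mfb^+)$-representation $\pi \in \{\vrho,\brho,\ups|_{U_q(\wh\mfb^+)},\phi\}$ in the first factor and $\Pi$ in the second; the stated matrices are the convenient normalizations. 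One should check that the normalizations are consistent, i.e. that the scalar prefactors on the two sides of \eqref{fund:bulk:plus} agree — this is a short computation comparing, say, the highest-weight matrix entries.

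Next I would apply $\id \ot \id \ot \Pi$ to the appropriate image of the coproduct relation \eqref{R:axiom2}. Concretely, the grading-shifted version of $(\Del \ot \id)(\cR) = \cR_{13}\cR_{23}$, evaluated in $\vrho_{q^{-\mu/2}z} \ot \brho_{q^{\mu/2}z} \ot \Pi$ on the first two factors and using that $\cO$ intertwines $\vrho_{q^{-\mu/2}z}\ot\brho_{q^{\mu/2}z}$ with $\ups_z\ot\phi_z$ as in \eqref{O:plus:intw}, produces precisely \eqref{fund:bulk:plus} up to scalars. In more detail: write $R_{123}$ for the action of the universal R-matrix between the (coproduct of the) first two Borel factors and the third $\Pi$-factor; by \eqref{R:axiom2} this factorizes as a product of the action on factor $1\!-\!3$ times that on factor $2\!-\!3$, and by the intertwining property \eqref{R:axiom1} of $\cR$ conjugating by $\cO\ot\id$ in the first two slots turns the $(\vrho,\brho)$-version into the $(\ups,\phi)$-version. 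Collecting terms yields $\cO_{12}\cL_\vrho(q^{-\mu/2}z)_{13}\cL_\brho(q^{\mu/2}z)_{23} = \cL_\ups(z)_{13}\cL_\phi(z)_{23}\cO_{12}$ after the scalar check.

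Alternatively — and this is likely the route the authors intend, since all the explicit matrices are on the table — one can simply verify \eqref{fund:bulk:plus} by direct matrix computation in $\End(W\ot W\ot\C^2)[[z]]$. Both sides are $2\times 2$ matrices over $\cA^{(2)}$; expanding the products gives four identities in $\End(W\ot W)[[z]]$, each a relation between $\cO$ (a deformed-exponential operator in $a_1\ba^\dag_2$ times $q^{\mu(D_1-D_2)/2}$) and monomials in $a_i,\ba^\dag_i,q^{\pm D_i}$. These reduce, via the commutation rules \eqref{A:basicrels2} and the $q$-exponential identities \eqrefs{qexp:commute}{qexp:Dba} from Appendix \ref{app:qexp}, to the same computations already used in the proof of Theorem \ref{thm:O:plus}; indeed \eqref{fund:bulk:plus} is essentially a repackaging of \eqref{O:plus:intw} for $u\in\{e_0,e_1,k_0,k_1\}$ into matrix form.

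The main obstacle is bookkeeping rather than conceptual: in the first approach one must be careful about the direction of the grading shift (the $z$ versus $z^{-1}$ issue and the $q^{\pm\mu/2}$ shifts distributing over the tensor factors according to \eqref{reps:gradingshift} and \eqref{psi:Sigma}) and about pinning down the scalar normalizations so that no spurious $z$-dependent factor survives; in the second approach the obstacle is organizing the $q$-exponential manipulations so that the off-diagonal entries — which mix $a^\dag_1$, $\ba^\dag_2$ and $a_1,a_2$ — collapse correctly. Since Theorem \ref{thm:O:plus} has already done the hard analytic work on $\cO$, I would present the proof as an application of that theorem together with \eqref{R:axiom2}, relegating the normalization check to a one-line verification on highest-weight vectors.
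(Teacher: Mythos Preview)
Your first approach is exactly the paper's proof: use \eqref{R:axiom2} to write each product of L-operators as the image of $(\Del\ot\id)(\cR)$ under the appropriate tensor product representation, then invoke Theorem \ref{thm:O:plus} (the $U_q(\wh\mfb^+)$-intertwining property of $\cO$) to conclude the identity up to a scalar, and fix the scalar by evaluating on $w_0\ot w_0\ot\binom{1}{0}$. One small correction: the passage from the $(\vrho,\brho)$-side to the $(\ups,\phi)$-side does not use \eqref{R:axiom1} but only the fact that $\cO$ intertwines the two coproduct representations on the first two factors---$(\Del\ot\id)(\cR)$ lies in (a completion of) $U_q(\wh\mfb^+)\ot U_q(\wh\mfb^+)\ot U_q(\wh\mfb^-)$, so $\cO\ot\Id$ carries its image under one representation to its image under the other.
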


\begin{proof} 
From the coproduct formula \eqref{R:axiom2} one deduces
\begin{align*}
 \cL_\vrho(q^{-\mu/2} z)_{13} \cL_\brho(q^{\mu/2} z)_{23} \; &\propto \; (\vrho_{q^{-\mu/2} z} \ot \brho_{q^{\mu/2} z} \ot \Pi)\big((\Delta \ot \id)(\cR)\big), \\
\cL_\ups(z)_{13} \cL_\phi(z)_{23} \; &\propto \; (\ups_{z} \ot \phi_{z} \ot \Pi)\big((\Delta \ot \id)(\cR)\big).
\end{align*}
Now Theorem \ref{thm:O:plus} implies \eqref{fund:bulk:plus} up to a scalar. 
By applying both sides to $w_0 \ot w_0 \ot ({1 \atop 0})$ one observes that the scalar is 1.
\end{proof}

Given the L-operators for the various $U_q(\wh\mfb^+)$-representations, Lemma \ref{lem:R:Psymmetry} provides us with L-operators for the corresponding $U_q(\wh\mfb^-)$-representations: $\cL^-_\pi(z) = \cL_\pi(z)_{21}$ for $\pi \in \{ \vrho, \brho, \ups, \phi \}$.
These are scalar multiples of $\cR_{\Pi \vrho^-}(z)$, $\cR_{\Pi \brho^{\, -}}(z)$, $\cR_{\Pi \ups}(z)$ and $\cR_{\Pi \phi^-}(z)$, respectively.
Theorem \ref{thm:fund} immediately yields the following result:

\begin{crl}\label{thm:fund:minus}  
The following relation in $\End(\C^2 \ot W \ot W)[[z]]$ is satisfied:
\eq{ \label{fund:bulk:minus}
\cO_{32} \cL^-_{\vrho}(q^{-\mu/2} z)_{13} \cL^-_{\brho}(q^{\mu/2} z)_{12} = \cL^-_{\ups}(z)_{13} \cL^-_{\phi}(z)_{12} \cO_{32} .
}
\end{crl}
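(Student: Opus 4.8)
The plan is to deduce Corollary \ref{thm:fund:minus} from Theorem \ref{thm:fund} purely formally, by applying the automorphism $\psi$ and then flipping the tensor factors, in parallel with the way Lemma \ref{lem:R:Psymmetry} produces $U_q(\wh\mfb^-)$-L-operators from $U_q(\wh\mfb^+)$-ones. First I would recall that $\cL^-_\pi(z) = \cL_\pi(z)_{21}$ for $\pi \in \{\vrho,\brho,\ups,\phi\}$, by definition (these being scalar multiples of $\cR_{\Pi\pi\circ\psi}(z)$, with $\Pi\circ\psi=\Pi$ by \eqref{Pi:psi:symmetry} and $\ups\circ\psi=\ups$ by \eqref{sigma:psi:symmetry}). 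Then I would start from \eqref{fund:bulk:plus}, viewed as an identity of operator-valued power series in $\End(W\ot W\ot\C^2)[[z]]$, where the three tensor legs carry $\vrho_{q^{-\mu/2}z}$, $\brho_{q^{\mu/2}z}$, $\Pi$ on the left and $\ups_z$, $\phi_z$, $\Pi$ on the right, with $\cO$ acting on legs $1,2$.

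The key step is to apply the permutation of tensor factors $1\ot 2\ot 3 \mapsto 3\ot 2\ot 1$ (i.e.\ conjugate the whole identity by the flip $P_{13}$ in $W\ot W\ot\C^2 \to \C^2\ot W\ot W$). Under this relabelling, leg $3$ (the $\C^2$-factor) moves to position $1$, the old leg $2$ stays in the middle, and the old leg $1$ moves to position $3$. A subscript-$ij$ operator becomes a subscript-$(\sigma i)(\sigma j)$ operator where $\sigma$ is this transposition; so $\cO_{12}\mapsto\cO_{32}$, the product $\cL_\vrho(\cdots)_{13}\cL_\brho(\cdots)_{23}\mapsto \cL_\vrho(\cdots)_{31}\cL_\brho(\cdots)_{21}$, and likewise on the right-hand side. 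Now $\cL_\pi(z)_{31} = (\cL_\pi(z)_{13})_{\text{flip }1\leftrightarrow 3}$ acting on the $W$ in slot $3$ and the $\C^2$ in slot $1$; but by the very definition $\cL^-_\pi(z) = \cL_\pi(z)_{21}$ this is precisely $\cL^-_\pi(z)_{13}$ with the $\C^2$ now in slot $1$ — here I must be a little careful that the $21$-flip in the definition of $\cL^-_\pi$ refers to the two tensor legs $W$ and $\C^2$ that $\cL_\pi$ acts on, and matching these up correctly with the ambient slots $1,2,3$ of $\C^2\ot W\ot W$ is the bookkeeping I need to get right. Similarly $\cL_\brho(\cdots)_{23}$ (acting on $W$ in slot $2$, $\C^2$ in slot $3$) maps to an operator on $W$ in slot $2$, $\C^2$ in slot $1$, which is $\cL^-_\brho(\cdots)_{12}$.

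Carrying this through, the permuted left-hand side of \eqref{fund:bulk:plus} becomes $\cO_{32}\,\cL^-_\vrho(q^{-\mu/2}z)_{13}\,\cL^-_\brho(q^{\mu/2}z)_{12}$ and the permuted right-hand side becomes $\cL^-_\ups(z)_{13}\,\cL^-_\phi(z)_{12}\,\cO_{32}$, which is exactly \eqref{fund:bulk:minus}. Alternatively, and perhaps more cleanly, I could argue at the universal level: apply $(\vrho_{q^{-\mu/2}z}\ot\brho_{q^{\mu/2}z}\ot\Pi)\circ(\psi\ot\psi\ot\psi)\circ(\si_{12}\text{-type flip})$ to the coproduct identity underlying Theorem \ref{thm:fund}, using $(\psi\ot\psi)(\cR)=\cR_{21}$ from \eqref{psi:R}, $\psi\circ\Sigma_z=\Sigma_{z^{-1}}\circ\psi$ from \eqref{psi:Sigma}, the intertwining property of $\cO_{21}$ from Corollary \ref{crl:O:minus}, and the $\psi$-invariance of $\ups$ and $\Pi$; this reproduces the same relation without ever writing matrices. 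The main obstacle is not any real mathematical content — it is purely the index/flip bookkeeping: keeping straight which of the three ambient slots each factor lives in after the $1\leftrightarrow 3$ transposition, and checking that the $21$ in the definition $\cL^-_\pi(z)=\cL_\pi(z)_{21}$ is being applied to the correct pair of legs. Once the relabelling dictionary $\cO_{12}\mapsto\cO_{32}$, $(-)_{13}\mapsto(-)_{13}$, $(-)_{23}\mapsto(-)_{12}$ is fixed, the identity \eqref{fund:bulk:minus} drops out of \eqref{fund:bulk:plus} immediately, with no separate verification needed (the scalar being $1$ is inherited from Theorem \ref{thm:fund}).
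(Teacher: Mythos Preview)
Your proposal is correct and matches the paper's approach: the paper simply states that Theorem \ref{thm:fund} ``immediately yields'' the corollary, and the mechanism is exactly the $1\leftrightarrow 3$ relabelling you describe, using the definition $\cL^-_\pi(z)=\cL_\pi(z)_{21}$. Your bookkeeping dictionary $\cO_{12}\mapsto\cO_{32}$, $(\cL_\pi)_{13}\mapsto(\cL^-_\pi)_{13}$, $(\cL_\pi)_{23}\mapsto(\cL^-_\pi)_{12}$ is right, and the alternative universal-level route via $\psi$ and Corollary \ref{crl:O:minus} you sketch is a valid (if heavier) reformulation of the same content.
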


\subsection{Actions of $\cR(z)$ on tensor products of infinite-dimensional Borel representations}

By Theorem \ref{thm:R(z):action}, the grading-shifted universal R-matrix has well-defined actions on the tensor product of the level-0 modules $(\ups,W)$ and $(\phi^-,W)$ and on the tensor product of the level-0 modules $(\vrho,W)$ and $(\brho^{\, -},W)$ as $\End(W \ot W)$-valued formal power series.
Note that, using the terminology of Section \ref{sec:level0}, $\C w_0 \ot w_0 \subset W \ot W$ is the subspace of weight $q^{-2}$ and hence $w_0 \ot w_0$ is an eigenvector of both actions of the universal R-matrix with invertible eigenvalues.
It is convenient to use rescaled linear-operator-valued formal power series
\eq{
\cR_{\vrho\brho}(z), \cR_{\ups\phi}(z) \in \End(W \ot W)[[z]],
}
uniquely defined by the condition that they fix $w_0 \ot w_0$:
\eq{ \label{Rrhobrho:Rsitau:def}
\begin{aligned}
\cR_{\vrho \brho}(z) \, &\propto \, (\vrho \ot \brho^{\, -})(\cR(z)), & \cR_{\vrho \brho}(z) \cdot (w_0 \ot w_0) = w_0 \ot w_0, \\
\cR_{\ups \phi}(z) \, &\propto \, (\ups \ot \phi^-)(\cR(z)), & \qq \cR_{\ups \phi}(z) \cdot (w_0 \ot w_0) = w_0 \ot w_0.
\end{aligned}
}
These power series will appear in the boundary factorization identity.
In appendix \ref{app:R-operators} we obtain explicit expressions for $\cR_{\vrho \brho}(z)$ and $\cR_{\ups\phi}(z)$, although we will not need these for the proof of the boundary factorization identity using the universal K-matrix formalism of Section \ref{sec:augmentedqOns}.

\section{K-operators} \label{sec:K}

In this section we consider solutions of reflection equations associated to the subalgebra $U_q(\mfk)$. 

\subsection{Right K-operators}

By Theorem \ref{thm:K(z):action}, applying any of the level-0 $U_q(\wh\mfb^+)$-representations $\vrho$, $\brho$, $\ups$, $\phi$ to the grading-shifted universal K-matrix associated to $U_q(\mfk)$ we obtain $\End(W)$-valued formal power series, satisfying the reflection equation \eqref{prop:K(z):RE}.
Moreover, in terms of the terminology of Section \ref{sec:level0}, the weight subspaces of all four actions are all 1-dimensional and therefore $w_0$ is an eigenvector of each action with invertible eigenvalue.  
We will consider the scalar multiples of these linear operators which fix $w_0$:
\eq{
\label{K:def}
\cK_\pi(z) \, \propto \, \pi(\cK(z)), \qq \cK_\pi(z) \cdot w_0 = w_0.
}
for $\pi \in \{ \vrho, \brho, \ups, \phi \}$.
It is convenient to obtain explicit expressions by applying Propositions \ref{prop:K(z):RE} and \ref{prop:K(z):intw}. 
These could be found independently of the universal K-matrix formalism, either by solving the reflection equations directly in all cases or by following the approach outlined in \cite{DG02,DM03,RV16} (this relies on the irreducibility of certain tensor products as $U_q(\mfk)((z))$-modules; otherwise the reflection equation must be verified directly).

First of all, the linear operator
\eq{ \label{eq:RightBoundary:V}
K_{\Pi}(z) = 
\begin{pmatrix} 
\xi z^2 - 1 & 0 \\ 0 & \xi - z^2
\end{pmatrix} 
\in \End({\C^2}) [[z]]
}
is, up to a scalar, the unique solution of the $U_q(\mfk)$-intertwining condition 
\eq{
K_\Pi(z) \Pi_z(u) = \Pi_{1/z}(u) K_\Pi(z) \qq \text{for all } u \in U_q(\mfk).
}
By Theorem \ref{thm:K(z):action}, it is proportional to the action of the grading-shifted universal K-matrix in the representation $(\Pi,\C^2)$.

Recall that $\Pi \circ \psi = \Pi$.
Hence, motivated by Proposition \ref{prop:K(z):RE}, we consider the right reflection equation for $\pi \in \{\vrho, \brho, \ups, \phi \}$:
\eq{
\label{eq:RightBoundary:W}
\cL_\pi(\tfrac{y}{z}) \cK_{\pi}(y) \cL_{\pi}(yz) K_{\Pi}(z) = K_{\Pi}(z) \cL_\pi(y z) \cK_{\pi}(y) \cL_{\pi}(\tfrac{y}{z}) \in \End (W \ot {\C^2})[[y/z,z]].
}

\begin{lemma} \label{lem:K:explicit}
We have
\eq{ \label{eq:KMatrices}
\begin{aligned}
\cK_\vrho(z) &= (-q^{-D} \xi)^{D} (q^{2}\xi^{-1}z^2;q^2)_{D}, & \cK_\brho(z) &= (q z^2)^{-D} (q^{2} \xi^{-1} z^{-2};q^2)_D^{-1}, \\
\cK_\ups(z) &= z^{-2D} \frac{(q^{2-\mu} \xi^{-1} z^2;q^2)_D}{(q^{2-\mu} \xi^{-1} z^{-2};q^2)_D}, & 
\qq \cK_\phi(z) &= (-q^{-\mu - D-1} \; \xi)^D.
\end{aligned}
}
\end{lemma}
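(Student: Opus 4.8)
The plan is to exploit that every one of the four operators $\cK_\pi(z)$ is diagonal in the basis $\{w_j\}$, so that the statement reduces to pinning down a single scalar function in each case. Indeed, by Theorem \ref{thm:K(z):action} the operator $\cK_\pi(z)$ commutes with $\pi(k_1)$, and in each of the level-0 modules $\ups,\brho,\vrho,\phi$ the $k_1$-eigenspaces are the lines $\C w_j$; hence $\cK_\pi(z)=\kappa_\pi(D)$ for some $\kappa_\pi\colon\Z_{\ge0}\to\C[[z]]$, with $\kappa_\pi(0)=1$ by the normalization \eqref{K:def}. So it suffices to determine each $\kappa_\pi$, and since $\cK_\pi(z)$ is a scalar multiple of the action of the universal K-matrix, I would read $\kappa_\pi$ off the characterizing relations: the twisted intertwining property of Proposition \ref{prop:K(z):intw} for the representation that extends to $U_q(\wh\mfg)$ compatibly with $\psi$, and the reflection equation of Proposition \ref{prop:K(z):RE} against the already-known $2\times2$ solution $K_\Pi(z)$ of \eqref{eq:RightBoundary:V} for the remaining three.

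First I would treat $\ups$, which by \eqref{sigma:psi:symmetry} satisfies $\ups\circ\psi=\ups$. Proposition \ref{prop:K(z):intw} gives $\cK_\ups(z)\,\ups_z(u)=\ups_{1/z}(u)\,\cK_\ups(z)$ for $u\in U_q(\mfk)$. Taking $u=e_0-q^{-1}\xi^{-1}k_0f_1$ and substituting \eqref{hom:sigma}, one finds $\ups_z(u)=\tfrac1{1-q^2}\big(z-\xi^{-1}z^{-1}q^{-\mu+2D}\big)a^\dag$, while the Cartan generators $k_0k_1^{-1},k_0^{-1}k_1$ act as scalars and impose nothing. Commuting $\kappa_\ups(D)$ past $a^\dag$ via \eqref{A:basicrels2} turns the intertwining identity into the first-order recursion
\[
\frac{\kappa_\ups(j+1)}{\kappa_\ups(j)}=z^{-2}\,\frac{1-q^{2-\mu}\xi^{-1}z^2q^{2j}}{1-q^{2-\mu}\xi^{-1}z^{-2}q^{2j}},
\]
whose solution with $\kappa_\ups(0)=1$ is exactly $z^{-2D}(q^{2-\mu}\xi^{-1}z^2;q^2)_D/(q^{2-\mu}\xi^{-1}z^{-2};q^2)_D$; the other generator $e_1-q^{-1}\xi k_1f_0$ yields the reciprocal recursion, confirming consistency. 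The computation of $K_\Pi(z)$ itself is the same kind of (easier) exercise, with the $2$-dimensional $\Pi$ in place of $\ups$.

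For $\vrho$, $\brho$ and $\phi$, which do not extend to $U_q(\wh\mfg)$, I would instead substitute the diagonal ansatz $\cK_\pi(y)=\kappa_\pi(D)$ into the right reflection equation \eqref{eq:RightBoundary:W}, using the explicit L-operators from Section \ref{sec:L:plus} and the $K_\Pi(z)$ just found. Comparing the four $\End(W)$-valued matrix entries and simplifying with \eqref{A:basicrels2} again collapses to a single first-order recursion for $\kappa_\pi$, and normalizing at $j=0$ produces the three claimed $q$-Pochhammer expressions. (For $\phi$ there is a shortcut: since $\phi(e_0)=0$ annihilates $U_q(\wh\mfn^+)_{m\del}$ for $m\ge1$, one has $\phi(\Upsilon)=1$, so $\cK_\phi(z)\propto\phi(G)$ and the formula follows directly from \eqref{G:def}.) Finally, to know that the diagonal operators so obtained really are the normalized actions $\pi(\cK(z))$ and not merely some solutions, it is enough to observe that among diagonal operators fixing $w_0$ the pertinent linear or reflection equation has a unique solution; since by Theorem \ref{thm:K(z):action} the operator $\pi(\cK(z))$ is such a solution, the two coincide.

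I expect the hard part to be bookkeeping rather than anything conceptual: extracting the correct first-order recursion from a $2\times2$ reflection equation whose entries live in $\End(W)$, keeping careful track of the relations \eqref{A:basicrels2} among $a,a^\dag,q^{\pm D}$ and of the grading-shift conventions, and reconciling the several normalizations in play (the $w_0$-fixing rescaling versus the universal $\cK$, and $\cR_{\pi\Pi}(z)$ versus $\cL_\pi(z)$, including the flip $(\,\cdot\,)_{21}$). Verifying that the recursions coming from the four entries and from the two $U_q(\mfk)$-generators are mutually consistent — automatic in the abstract by Theorem \ref{thm:K(z):action} — is the one spot that needs care. A completely explicit alternative, checking that the closed forms in \eqref{eq:KMatrices} satisfy all the required identities directly via the $q$-Pochhammer product formulas of Lemma \ref{lem:qexp:auxeqns}, is carried out in Appendix \ref{app:altproof}.
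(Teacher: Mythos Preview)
Your approach is essentially the paper's: diagonality from Theorem \ref{thm:K(z):action}, then the intertwining condition of Proposition \ref{prop:K(z):intw} for $\ups$ (yielding exactly the paper's recursion) and the reflection equation \eqref{eq:RightBoundary:W} against $K_\Pi$ for $\vrho,\brho,\phi$. Two small remarks: your parenthetical shortcut for $\phi$ via $\phi(e_0)=0\Rightarrow\phi(\Upsilon)=1$ is a nice observation the paper does not make (the paper instead notes that the reflection-equation solution space for $\phi$ is infinite-dimensional but becomes one-dimensional once restricted to $\cF(D)$, which is precisely your diagonal ansatz); and your closing reference to Appendix \ref{app:altproof} is inaccurate, as that appendix verifies the boundary factorization identity \eqref{keyrelation-init:right}, not the formulas \eqref{eq:KMatrices} themselves.
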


Note that these expressions were already given in \cite{BT18} in different conventions. 
For completeness we sketch a proof relying on the universal K-matrix formalism.

\begin{proof}[Proof of Lemma \ref{lem:K:explicit}.]
For $\cK_\ups(z)$, by a straightforward check, the intertwining condition
\eq{ \label{Ksigma:intertwine}
\cK_\ups(z) \ups_{z}(u) = \ups_{1/z}(u) \cK_\ups(z) \qq \text{for all } u \in U_q(\mfk)
}
can be solved to find $\cK_{\ups}(z)$, making use of Proposition \ref{prop:K(z):intw}.  
Since $\cK(z)$ commutes with the action of $k_1$ it follows that $\cK_\ups(z) = f(D)$ for some $f \in \cF$.
Now imposing \eqref{Ksigma:intertwine} for the generators $e_0-q^{-1}\xi^{-1}k_0f_1$ and $e_1-q^{-1}\xi k_1f_0$ yields the recurrence relation
\[
\frac{f(D+1)}{f(D)} = \frac{1-q^{2(D+1)-\mu} \xi^{-1} z^2}{z^2-q^{2(D+1)-\mu} \xi^{-1}}.
\]
In particular, the linear relation \eqref{Ksigma:intertwine} has a 1-dimensional solution space.
Together with the constraint $f(0)=1$ it yields the formula given in \eqref{eq:KMatrices}.

For $\pi \in \{ \vrho, \brho, \phi \}$, it is convenient to consider the linear space
\eq{ \label{eq:REsolspace}
{\sf RE}_\pi := \{ \cK_\pi(y) \in \cF(D) \, | \, \eqref{eq:RightBoundary:W} \text{ is satisfied} \}
}
and use Proposition \ref{prop:K(z):RE} to find the explicit expression, relying on the second part of Theorem \ref{thm:K(z):action} for the fact that $\cK_\pi(y)$ lies in $\cF(D)$. 
Indeed, the operator $K_{\vrho}(z)$ was obtained in \cite[Sec.~2.4]{VW20} as the unique element of the 1-dimensional linear space ${\sf RE}_{\vrho}$ which fixes $w_0$.
In an analogous way we obtain the result for $K_{\brho}(z)$.

Note that $\phi$ is a reducible representation.
Indeed, the solution space of \eqref{eq:RightBoundary:W} with $\pi=\phi$ is infinite-dimensional:~the general solution $\cK_\phi(z)$ is of the form $(- q^{-\mu - D -1} \; \xi)^D p$ with $p$ in the centralizer of $a$ in $\cA$,
i.e.~a polynomial in $a$ with coefficients in $\C[[z]]$. 
Since $\cK_\phi(z) \in \cF(D)$, $p$ is a scalar. 
The requirement that $w_0$ is fixed forces $p=1$.
\end{proof}

\subsection{Left K-operators} \label{sec:leftK}

We now obtain linear-operator-valued power series satisfying a reflection equation for the left boundary by using a well-established bijection, see \cite[Eq.~(15)]{Sk88}, between its solution set and the solution set of the right reflection equation.
For fixed $\txi \in \C^\times$ we define 
\eq{ \label{TildeK:V}
\wt K_{\Pi}(z) := (1-q^2 \txi^{-1} z^2)^{-1} (1-q^2 \txi z^2)^{-1} \big(
 K_{\Pi}(q z)^{-1}|_{\xi \mapsto \txi^{-1}} \big) = 
\begin{pmatrix}
q^2 \txi z^2 -1 & 0\\ 0 & \txi - q^2z^2
\end{pmatrix}.
}
Also, for $\pi \in \{ \vrho, \brho, \ups, \phi \}$  we define
\eq{  \label{TildeK:W}
\tcK_{\pi}(z) := \cK_{\pi}(q z)^{-1}|_{\xi \mapsto \txi^{-1}}.
}
Note that $\cL_\pi(\ga z)$ is invertible in $\End(W \ot \C^2)[[z]]$ for all $\ga \in \C$.
We define 
\eq{ \label{tildeL:def}
\tcL_\pi(z) = \cL_\pi(q^2z)^{-1}.
}

\begin{lemma}
For all $\pi \in \{\vrho, \brho, \ups, \phi \}$ the \emph{left reflection equation} holds:
\eq{ \label{LeftBoundary}
\tcK_\pi(y) \tcL_\pi(yz) \wt K_{\Pi}(z) \cL_\pi(\tfrac{y}{z}) = \cL_\pi(\tfrac{y}{z}) \wt K_{\Pi}(z)\tcL_\pi(yz)\tcK_\pi(y) \qu \in \End (W \ot {\C^2})[[y/z,z]].
}
\end{lemma}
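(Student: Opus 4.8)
The plan is to derive \eqref{LeftBoundary} directly from the right reflection equation \eqref{eq:RightBoundary:W} by a purely formal manipulation, exploiting the fact that, by the very definitions \eqref{TildeK:V}, \eqref{TildeK:W} and \eqref{tildeL:def}, the tilde-decorated operators are obtained from their undecorated counterparts by inversion, by a rescaling of the multiplicative spectral parameter by a power of $q$, and — for the $\C^2$-space K-operator — by the substitution $\xi \mapsto \txi^{-1}$ up to an invertible central scalar. In other words, \eqref{LeftBoundary} is nothing but \eqref{eq:RightBoundary:W} ``turned inside out''; this is the multiplicative-spectral-parameter incarnation of Sklyanin's classical bijection between solutions of the two reflection equations \cite[Eq.~(15)]{Sk88}. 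Since \eqref{eq:RightBoundary:W} holds uniformly for all $\pi \in \{\vrho,\brho,\ups,\phi\}$ (by Proposition \ref{prop:K(z):RE} together with the explicit K-operators of Lemma \ref{lem:K:explicit}), the same computation will prove \eqref{LeftBoundary} for all four $\pi$ simultaneously, with no case distinction.

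Concretely, I would proceed in four steps. First, in \eqref{eq:RightBoundary:W} substitute $(y,z) \mapsto (qy,qz)$; since the ratio $y/z$ is unaffected, this is merely a relabelling of an identity that already holds. Second, note that each of the four factors on either side (the two L-operators, $\cK_\pi$ and $K_\Pi$) is invertible in the relevant ring of formal power series — for the L-operators this is recorded just before \eqref{tildeL:def}, and for $\cK_\pi$ and $K_\Pi$ it is immediate from \eqref{eq:KMatrices} and \eqref{eq:RightBoundary:V} since their spectral-parameter-constant terms act invertibly on $W$ resp.\ $\C^2$ — so invert both sides, which reverses the order of the four factors. Third, apply the substitution $\xi \mapsto \txi^{-1}$; as $\cL_\pi$ is $\xi$-independent, only $\cK_\pi$ and $K_\Pi$ are affected, and by \eqref{TildeK:W} and \eqref{tildeL:def} one has $\cK_\pi(qy)^{-1}|_{\xi \mapsto \txi^{-1}} = \tcK_\pi(y)$ and $\cL_\pi(q^2yz)^{-1} = \tcL_\pi(yz)$, while by \eqref{TildeK:V} the factor $K_\Pi(qz)^{-1}|_{\xi \mapsto \txi^{-1}}$ equals $\wt K_\Pi(z)$ up to an invertible scalar in $\C[[z]]$, which therefore cancels from the equation. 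At this point one is left with
\eq{
\wt K_\Pi(z)\, \tcL_\pi(yz)\, \tcK_\pi(y)\, \cL_\pi(\tfrac{y}{z})^{-1} = \cL_\pi(\tfrac{y}{z})^{-1}\, \tcK_\pi(y)\, \tcL_\pi(yz)\, \wt K_\Pi(z).
}
Fourth, multiply this identity on the left and on the right by $\cL_\pi(\tfrac{y}{z})$: on each side, one of the two new factors cancels the $\cL_\pi(\tfrac{y}{z})^{-1}$ already present while the other becomes the new outermost factor, and interchanging the two sides one reads off exactly \eqref{LeftBoundary}.

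I do not expect a genuine obstacle, since the argument is formal once \eqref{eq:RightBoundary:W} is in hand; the part that needs care is the bookkeeping of spectral parameters. The rescaling in the first step must be arranged so that, after inverting, the argument $q^2 yz$ of the middle L-operator matches the shift built into $\tcL_\pi(yz) = \cL_\pi(q^2yz)^{-1}$ while simultaneously the argument of the outer L-operator remains $y/z$ (untouched by the substitution); this is precisely what makes the fourth step a clean two-sided conjugation by $\cL_\pi(\tfrac{y}{z})$ — rather than by a $q$-shifted operator — and hence what makes the result land on \eqref{LeftBoundary} on the nose. The other point to verify is that the scalar relating $\wt K_\Pi(z)$ to $K_\Pi(qz)^{-1}|_{\xi \mapsto \txi^{-1}}$ is genuinely central (a unit in $\C[[z]]$), so that it drops out. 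Finally, it is worth noting that — in contrast to the usual transpose-based form of Sklyanin's passage between the two reflection equations — no transposition or crossing-unitarity of the L-operators is needed here: the role of crossing is already played by the $q^2$-shift built into the definition \eqref{tildeL:def} of $\tcL_\pi$.
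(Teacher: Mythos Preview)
Your proposal is correct and is essentially the same argument as the paper's proof, just run in the forward direction: the paper rewrites \eqref{LeftBoundary} by conjugating with the inverses of $\tcK_\pi(y)$, $\tcL_\pi(yz)$ and $\wt K_\Pi(z)$ and then observes that the result is exactly \eqref{eq:RightBoundary:W} with $y \mapsto qy$, $z \mapsto qz$, $\xi \mapsto \txi^{-1}$, which is precisely your steps 1--4 read backwards. Your additional care about invertibility and the cancellation of the scalar in \eqref{TildeK:V} is appropriate and matches what the paper leaves implicit.
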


\begin{proof}
The desired equation \eqref{LeftBoundary} can be rewritten as 
\[
\wt K_{\Pi}(z)^{-1}\tcL_\pi(yz)^{-1} \tcK_\pi(y)^{-1} \cL_\pi(\tfrac{y}{z}) = \cL_\pi(\tfrac{y}{z}) \tcK_\pi(y)^{-1}\tcL(yz)^{-1}\wt K_{\Pi}(z)^{-1}.
\]
By \eqrefs{TildeK:V}{tildeL:def}, this is equivalent to the right-reflection equation \eqref{eq:RightBoundary:W} with $y \mapsto qy$, $z \mapsto qz$ and $\xi \mapsto \txi^{-1}$.
\end{proof}

Using the explicit formulas \eqref{eq:RightBoundary:V} and \eqref{eq:RightBoundary:W} we obtain that the solutions of the left reflection equations \eqref{TildeK:W} are the following $\End(W)$-valued formal power series in $z$:
\begin{equation}
\label{K-tilde}
\begin{aligned}
\tcK_\vrho(z) &= (-q^D \txi)^D (q^4 \txi z^2;q^2)_{D}^{-1}, &
\tcK_\brho(z) &= (q^3 z^2)^D (\txi z^{-2} ;q^2)_{D}, \\
\tcK_{\ups}(z) &= (q z)^{2D} \frac{(q^{-\mu} \txi  z^{-2};q^2)_{D}}{(q^{4-\mu} \txi z^2;q^2)_{D}}, &
\qq \tcK_{\phi}(z) &= (-q^{\mu + D + 1} \txi)^D.
\end{aligned}
\end{equation}

\section{Fusion intertwiners revisited} \label{sec:fusionintw}

In this short intermezzo we explain how the universal K-matrix formalism naturally leads to relations involving K-operators and $U_q(\mfb^+)$-intertwiners called \emph{fusion intertwiners} which play a key role in the short exact sequence approach to the Q-operator.
These intertwiners were discussed in \cite{VW20} and the relevant relations with K-matrices were shown by a linear-algebraic computation relying on the explicit expressions of the various constituent factors, see \cite[Lem.~3.2]{VW20}.
In other words, the representation-theoretic origin of these relations was unclear, which we now remedy.

Level-0 representations of $U_q(\wh\mfb^+)$ are amenable to scalar modifications of the action of $U_q(\mfh) = \langle k_1^{\pm 1} \rangle$, see also \cite[Rmk.~2.5]{HJ12}.
In particular, for $r \in \C^\times$, define a modified Borel representation $\vrho$ as follows:
\eq{
\vrho_r(e_i) = \vrho(e_i), \qq \vrho_r(k_0) = r \vrho(k_0), \qq \vrho_r(k_1) = r^{-1} \vrho(k_1)
}
and consider the grading-shifted representation $\vrho_{r,z} := (\vrho_r)_z$.
There exist $U_q(\wh\mfb^+)$-intertwiners
\begin{align*}
\iota(r) &: (\vrho_{qr,qz},W) \to (\vrho_{r,z} \ot \Pi_z,W \ot \C^2), \\
\tau(r) &: (\vrho_{r,z} \ot \Pi_z,W \ot \C^2) \to (\vrho_{q^{-1}r,q^{-1}z},W),
\end{align*}
called \emph{fusion intertwiners}, which take part in the following short exact sequence:
\begin{equation} \label{SES:plus}
\begin{tikzcd}
0 \arrow[r] & (\vrho_{qr,qz},W) \arrow[r,"\iota(r)"] & (W \ot \C^2,\vrho_{r,z} \ot \pi_z) \arrow[r,"\tau(r)"]  & (\vrho_{q^{-1}r,q^{-1}z},W) \arrow[r] & 0 
\end{tikzcd}
\end{equation}
Explicitly\footnote{The sign mismatch with \cite[Eq.~(3.1)]{VW20} is explained in Remark \ref{rmk:signdifference}.}, we have
\eq{
\iota(r) = \begin{pmatrix} q^{-D} a^\dag \\ -q^{D+1} r \end{pmatrix}, \qq
\tau(r) = \begin{pmatrix} q^D, & q^{-D} r^{-1} a^\dag \end{pmatrix}.
}
Analogously to Theorem \ref{thm:fund}, fusion relations for the L-operators $\cL(r,z)$, defined as suitable scalar multiples of $(\vrho_{r,z} \ot \Pi)(\cR)$, now follow from these intertwining properties and the coproduct formulas for $\cR$ \eqref{R:axiom2}, see \cite[Eqns.~(3.8) and (3.9)]{VW20}.

Recalling the universal object $\cK$ and Theorem \ref{thm:K(z):action}, we define the corresponding K-operator $\cK_\vrho(r,z)$ as the unique scalar multiple of $\vrho_{r,z}(\cK)$ which fixes $w_0$ (cf.~\cite[Prop.~2.5]{VW20}).
Then
\eq{
(\vrho_{r,z} \ot \Pi_z)(\Del(\cK))  \qq \propto \qq \cK_\vrho(r,z)_1
\cL(r,z^2) K_\Pi(z)_2
}
as a consequence of \eqref{K:axiom2}.
Since $\cK$ lies in a completion of $U_q(\wh\mfb^+)$, the intertwining properties of $\iota(r)$ and $\tau(r)$ now directly yield the following fusion relation for the K-operator:
\begin{align*}
\cK_\vrho(r,z)_1 \cL(r,z^2) K_\Pi(z)_2 \iota(r) \qq &\propto \qq \iota(r) \cK_\vrho(qr,qz) \\
\tau(r) \cK_\vrho(r,z)_1 \cL(r,z^2) K_\Pi(z)_2 \qq &\propto \qq \cK_\vrho(q^{-1}r,q^{-1}z) \tau(r),
\end{align*}
with the scalar factors determined by applying the two sides of the equation to $w_0$, say. 
We will be able to prove a boundary counterpart of the factorization identity \eqref{fund:bulk:plus} using similar ideas.

We recover, with a much smaller computational burden, the key result \cite[Lemma 3.2]{VW20} (a similar relation for left K-operators can easily be deduced from this, as explained in the last sentence of \cite[Proof of Lemma 3.2]{VW20}).
In the approach to Baxter's Q-operator using short exact sequences, the fusion relations for L and K-operators induce fusion relations for 2-boundary monodromy operators, see \cite[Lem.~4.2]{VW20} from which Baxter's relation \eqref{eq:TQ1} follows by taking traces, see \cite[Sec.~5.2]{VW20}.

\section{Boundary factorization identity} \label{sec:boundaryfactorization}

In motivating and presenting the key boundary relations, it is very useful to introduce a graphical representation of spaces and operators. 
Let us introduce the following pictures for the different representations introduced in Sections \ref{sec:Borelreps} and \ref{sec:LandR}:\\
\begin{center}
  \begin{tikzpicture}[scale=0.6]
\draw(0,0) node[left]{$\vrho_z=\quad z$};
 \draw[blueline=0.5](0,0) -- (2,0); 
\draw(7,0) node[left]{$\brho_z=\quad z $};
\draw[redline=0.5] (7,0) -- (9,0);
\draw(14,0) node[left]{$\phi_{z}=\quad z$};
\draw[greenline=0.5] (14,0) -- (16,0);
 \draw(0,-2) node[left]{$\vrho^-_z=\quad z$};
 \draw[dashblueline=0.5](0,-2) -- (2,-2); 
 \draw(7,-2) node[left]{$\brho^{\, -}_z=\quad z$};
 \draw[dashredline=0.5] (7,-2) -- (9,-2);
 \draw(14,-2) node[left]{$\phi^-_{z}=\quad z$};
 \draw[dashgreenline=0.5] (14,-2) -- (16,-2);
 \draw(3,-4) node[left]{$\ups_{z}=\quad z$};
 \draw[wavy=0.5](3,-4) -- (5,-4);
 \draw(10,-4) node[left]{$\Pi_z=\quad z $};
 \draw[aline=0.5] (10,-4) -- (12,-4);
\end{tikzpicture}
\end{center}
\vspace{5mm}
For any vector spaces $V$, $V'$, denote by $\cP$ the linear map from $V \ot V'$ to $V' \ot V$ such that $\cP(v \ot v') = v' \ot v$ for all $v \in V$, $v' \in V'$.
Also set $z = z_1/z_2$.
We then have the following pictures for L-operators and R-operators:

\[
\begin{array}{ll}
\begin{tikzpicture}[scale=0.6]
\draw(-1.7,0) node[left]{$\cP \cL_\vrho(z) =$}; 
\draw[aline=0.3] (0,1)--(0,-1);
\draw[blueline=0.3] (-1,0)--(1,0); 
\draw (0,1) node[above]{$z_2$}; 
\draw (-1,0) node[left]{$z_1$}; 
\end{tikzpicture}
\qq
&
\qq
\begin{tikzpicture}[scale=0.6]
\draw(-1.7,0) node[left]{$\cP \cL_\ups(z)=$}; 
\draw[aline=0.3] (0,1)--(0,-1);
\draw[wavy=0.3] (-1,0)--(1,0);
\draw (0,1) node[above]{$z_2$}; 
\draw (-1,0) node[left]{$z_1$}; 
\end{tikzpicture}
\\
\begin{tikzpicture}[scale=0.6]
\draw(-1.7,0) node[left]{$\cP \cL_\brho(z)=$};
\draw[aline=0.3] (0,1)--(0,-1);
\draw[redline=0.3] (-1,0)--(1,0);
\draw (0,1) node[above]{$z_2$};
\draw (-1,0) node[left]{$z_1$}; 
\end{tikzpicture}
\qq
&
\qq
\begin{tikzpicture}[scale=0.6]
\draw(-1.7,0) node[left]{$\cP \cL_\phi(z)=$};
\draw[aline=0.3] (0,1)--(0,-1);
\draw[greenline=0.3] (-1,0)--(1,0);
\draw (0,1) node[above]{$z_2$};
\draw (-1,0) node[left]{$z_1$}; 
\end{tikzpicture}
\\
\begin{tikzpicture}[scale=0.6]
\draw(-1.7,0) node[left]{$\cP \cL_{\vrho}^-(z) =$}; 
\draw[aline=0.3] (0,1)--(0,-1);
\draw[dashblueline=0.3] (1,0)--(-1,0); 
\draw (0,1) node[above]{$z_1$}; 
\draw (1,0) node[right]{$z_2$}; 
\end{tikzpicture}
\qq
&
\qq
\begin{tikzpicture}[scale=0.6]
\draw(-1.7,0) node[left]{$\cP \cL^-_{\ups}(z)=$}; 
\draw[aline=0.3] (0,1)--(0,-1);
\draw[wavy=0.3] (1,0)--(-1,0);
\draw (0,1) node[above]{$z_1$};
\draw (1,0) node[right]{$z_2$};  
\end{tikzpicture}
\\
\begin{tikzpicture}[scale=0.6]
\draw(-1.7,0) node[left]{$\cP \cL^-_{\brho}(z)=$}; 
\draw[aline=0.3] (0,1)--(0,-1);
\draw[dashredline=0.3] (1,0)--(-1,0);
\draw (0,1) node[above]{$z_1$};
\draw (1,0) node[right]{$z_2$};
\end{tikzpicture}
\qq
&
\qq
\begin{tikzpicture}[scale=0.6]
\draw(-1.7,0) node[left]{$\cP \cL^-_{\phi}(z)=$};
\draw[aline=0.3] (0,1)--(0,-1);
\draw[dashgreenline=0.3] (1,0)--(-1,0);
\draw (0,1) node[above]{$z_1$};
\draw (1,0) node[right]{$z_2$}; 
\end{tikzpicture}
\\
\hspace{-4pt}
\begin{tikzpicture}[scale=0.6]
\draw(-1.7,0) node[left]{$\cP \cR_{\vrho\brho}(z) =$};
\draw[dashredline=0.3] (0,1)--(0,-1);
\draw[blueline=0.3] (-1,0)--(1,0);
\draw (0,1) node[above]{$z_2$};
\draw (-1,0) node[left]{$z_1$}; 
\end{tikzpicture}
\qq
&
\qq
\hspace{-4pt}
\begin{tikzpicture}[scale=0.6]
\draw(-1.7,0) node[left]{$\cP \cR_{\ups\phi}(z)=$};
\draw[dashgreenline=0.3] (0,1)--(0,-1);
\draw[wavy=0.3] (-1,0)--(1,0);
\draw (0,1) node[above]{$z_2$};
\draw (-1,0) node[left]{$z_1$}; 
\end{tikzpicture}
\end{array}
\]
We now make the following definitions\footnote{These are the modified forms of the R-matrices that appear in the corresponding left reflection equations, see \cite[Eq.~(13)]{Sk88}.}:
\eq{
\label{RsitauRrhobrho:tilde} 
\wt{\cR}_{\vrho\brho}(z) := \cR_{\vrho\brho}(q^2z)^{-1}, \qq
\wt{\cR}_{\ups\phi}(z) := \cR_{\ups\phi}(q^2z)^{-1}.
}
and represent these modified R-matrices by the following pictures:
\[
\begin{tikzpicture}[scale=0.6]
\draw(-1,0) node[left]{$\wt{\cR}_{\vrho\brho}(z) \cP=$}; 
\draw[dashredline=0.3] (0,1)--(0,-1);
\draw[blueline=0.3] (1,0)--(-1,0); 
\draw (0,0) node[bblob]{};
\draw (0,1) node[above]{$z_2$};
\draw (1,0) node[right]{$z_1$}; 
\end{tikzpicture}  
\qq
\qq
\qq
\begin{tikzpicture}[scale=0.6]
\draw(-1,0) node[left]{$\wt{\cR}_{\ups\phi}(z)\cP=$}; 
\draw[dashgreenline=0.3] (0,1)--(0,-1);
\draw[wavy=0.3] (1,0)--(-1,0);
\draw (0,0) node[bblob]{};
\draw (0,1) node[above]{$z_2$};
\draw (1,0) node[right]{$z_1$}; 
\end{tikzpicture}  
\]

The various right-boundary K-matrices are represented as follows:

\[
\begin{tikzpicture}[scale=0.6]  
\begin{scope} [rotate=-45]
\draw (0,-2) node[left]{$\cK_{\rho}(z)=$};
\draw[blueline=0.5](0,0)  -- (2,0);
\draw (-0.3,-0.3) node[left]{$z$};
\draw[dashblueline=0.5](2,0)  -- (2,-2);
\draw (2,-2) node[left]{$z^{-1}$};
\end{scope}
\end{tikzpicture}  
\qq
\begin{tikzpicture}[scale=0.6]  
\begin{scope} [rotate=-45,xshift=4cm,yshift=4cm]
\draw (0,-2) node[left]{$\cK_{\brho}(z)=$};
\draw[redline=0.5](0,0)  -- (2,0);  
\draw (-0.3,-0.3) node[left]{$z$};
\draw[dashredline=0.5](2,0)  -- (2,-2);
\draw (2,-2) node[left]{$z^{-1}$};
\end{scope}
\end{tikzpicture}  
\qq
\begin{tikzpicture}[scale=0.6]  
\begin{scope} [rotate=-45,xshift=8cm,yshift=8cm]
\draw (0,-2) node[left]{$\cK_{\ups}(z)=$};
\draw[wavy=0.5](0,0)  -- (2,0);
\draw (-0.3,-0.3) node[left]{$z$};
\draw[wavy=0.5](2,0)  -- (2,-2);
\draw (2,-2) node[left]{$z^{-1}$};
\end{scope}
\end{tikzpicture}  
\qq
\begin{tikzpicture}[scale=0.6]  
\begin{scope} [rotate=-45,xshift=12cm,yshift=12cm]
\draw (0,-2) node[left]{$\cK_{\phi}(z)=$};
\draw[greenline=0.5](0,0)  -- (2,0);
\draw (-0.3,-0.3) node[left]{$z$};
\draw[dashgreenline=0.5](2,0)  -- (2,-2);
\draw (2,-2) node[left]{$z^{-1}$};
\end{scope}
\end{tikzpicture}  
\]

The left-boundary K-matrices defined in Section \ref{sec:leftK} are represented by the natural analogues of these pictures. For example:\begin{center}
\begin{tikzpicture}[scale=0.6]
  \begin{scope} [rotate=-45,xshift=4cm,yshift=4cm];
  \draw (-.5,-2.5) node[left]{$\wt\cK_{\rho}(z)=$};
  \draw[blueline=0.5](0,-2)  -- (0,0);  \draw (0,0) node[right]{$z$};
 \draw[dashblueline=0.5](2,-2)  -- (0,-2);  \draw (2,-2) node[right]{$z^{-1}$};
 \end{scope}
 \end{tikzpicture}  
\end{center}

Making use of these pictures, we see that Theorem \ref{thm:fund} and Corollary \ref{thm:fund:minus} are represented by 
\begin{align*}
\begin{tikzpicture}[scale=0.5]  
\draw[redline=0.2](1,0)  -- (4,0);
\draw (1,0) node[left]{$q^{\mu/2}z_1$};
\draw[blueline=0.2](1,-2) -- (4,-2);
\draw (1,-2) node[left]{$q^{-\mu/2}z_1$};
\draw[greenline=0.8](6,0) -- (9,0);
\draw (9,0) node[right]{$z_1$};
\draw[wavy=0.8](6,-2) -- (9,-2);
\draw (9,-2) node[right]{$z_1$};
\draw (5,-1) ellipse (1cm and 2cm);
\draw(6,-1) node[left]{$\cO$};
\draw[aline=0.5](2.5,1) -- (2.5,-3);
\draw (2.5,1) node[above]{$z_2$};
\draw(11,-1) node[]{$=$};
\draw[redline=0.2,xshift=15cm](1,0)  -- (4,0);
\draw[xshift=15cm](1,0) node[left]{$q^{\mu/2}z_1$};
\draw[blueline=0.2,xshift=15cm](1,-2) -- (4,-2);
\draw[xshift=15cm] (1,-2) node[left]{$q^{-\mu/2}z_1$};
\draw[greenline=0.8,xshift=15cm](6,0) -- (9,0);
\draw[xshift=15cm] (9,0) node[right]{$z_1$};
\draw[wavy=0.8,xshift=15cm](6,-2) -- (9,-2);
\draw[xshift=15cm] (9,-2) node[right]{$z_1$};
\draw[xshift=15cm] (5,-1) ellipse (1cm and 2cm);
\draw[xshift=15cm](6,-1) node[left]{$\cO$};
\draw[aline=0.5,xshift=15cm](7.5,1) -- (7.5,-3);
\draw[xshift=15cm] (7.5,1) node[above]{$z_2$};
\end{tikzpicture}
\\
\begin{tikzpicture}[scale=0.5]
\draw[dashgreenline=0.8](4,0) -- (1,0);
\draw (1,0) node[left]{$z_1$};
\draw[wavy=0.8](4,-2) -- (1,-2);
\draw (1,-2) node[left]{$z_1$};
\draw[dashredline=0.2](9,0) -- (6,0);
\draw (9,0) node[right]{$q^{-\mu/2} z_1$};
\draw[dashblueline=0.2](9,-2) -- (6,-2);
\draw (9,-2) node[right]{$q^{\mu/2}z_1$};
\draw (5,-1) ellipse (1cm and 2cm);
\draw(6,-1) node[left]{$ \cO_{21}$};
\draw[aline=0.5](7.5,1) -- (7.5,-3);
\draw (7.5,1) node[above]{$z_2$};
\draw(13,-1) node[]{$=$};
\draw[dashgreenline=0.8,xshift=15cm](4,0) -- (1,0);
\draw[xshift=15cm] (1,0) node[left]{$z_1$};
\draw[wavy=0.8,xshift=15cm](4,-2) -- (1,-2);
\draw[xshift=15cm] (1,-2) node[left]{$z_1$};
\draw[dashredline=0.2,xshift=15cm](9,0) -- (6,0);
\draw[xshift=15cm] (9,0) node[right]{$q^{-\mu/2} z_1$};
\draw[dashblueline=0.2,xshift=15cm](9,-2) -- (6,-2);
\draw[xshift=15cm] (9,-2) node[right]{$q^{\mu/2}z_1$};
\draw[xshift=15cm] (5,-1) ellipse (1cm and 2cm);
\draw[xshift=15cm](6,-1) node[left]{$\cO_{21}$};
\draw[aline=0.5,xshift=15cm](2.5,1) -- (2.5,-3);
\draw[xshift=15cm] (2.5,1) node[above]{$z_2$};
\end{tikzpicture}
\end{align*}

For the compatibility with the right boundary we claim that
\begin{center}
  \begin{tikzpicture}[scale=0.52]
  \begin{scope} [rotate=-45]
  \draw[redline=0.5](2,0)  -- (4,0);  \draw (2,1) node[left]{$q^{\mu/2}z$};
  \draw[blueline=0.5](2,-2) -- (4,-2); \draw (2,-1) node[left]{$q^{-\mu/2}z$};
  \draw[greenline=0.5](6,0) -- (7.5,0);   \draw (6.5,-0.2) node[below]{$z$};
   \draw[dashgreenline=0.8](7.5,0) -- (7.5,-4);   \draw (7.5,-4) node[below]{$z^{-1}$};
  \draw[wavy=0.8](6,-2) -- (9.5,-2.2);  \draw (6.5,-2.2) node[below]{$z$};
   \draw[wavy=0.8](9.5,-2) -- (9.5,-4);  \draw (9.5,-4) node[below]{$z^{-1}$};
  \draw (5,-1) ellipse (1cm and 2cm);
  \draw(5.5,-0.4) node[left]{$\cO$};
   \end{scope}
    \draw(7,-4) node[]{=};
  \begin{scope} [rotate=45,xshift=-1.5cm,yshift=-12cm]
  \draw[dashgreenline=0.8](4,0) -- (2,0);  \draw (2,0) node[below]{$z^{-1}$};
  \draw[wavy=0.8](4,-2) -- (2,-2); \draw (2,-2) node[below]{$z^{-1}$};
  \draw[dashredline=0.9](9.5,0) -- (6,0);   \draw (5.5,.7) node[above]{$q^{-\mu/2}z^{-1}$};
  \draw[redline=0.5](9.5,2) -- (9.5,0);   \draw (9.5,2) node[above]{$q^{\mu/2}z$};
  \draw[blueline=0.25](7.5,2) -- (7.5,-2);  \draw (7.5,2) node[above]{$q^{-\mu/2}z$};
  \draw[dashblueline=0.7](7.5,-2) -- (6,-2);  \draw (7.3,-2.6) node[below]{$q^{\mu/2}z^{-1}$};
  \draw (5,-1) ellipse (1cm and 2cm);
  \draw(5.8,-1.5) node[left]{$\cO_{21}$};
      \end{scope}
   \end{tikzpicture}
\end{center}
which corresponds to the following identity in $\cA^{(2)}$:
\eq{ \label{keyrelation-init:right}
\cK_\ups(z)_1 \cR_{\ups\phi}(z^2) \cK_\phi(z)_2 \,\cO = \cO \, \cK_\vrho(q^{-\mu/2}z)_1 \cR_{\vrho\brho}(z^2) \cK_\brho(q^{\mu/2}z)_2,
}
which we call the \emph{right boundary factorization identity}.
The diagrams above serve as a motivation for the identity, which we now prove using results from Section \ref{sec:augmentedqOns} (an alternative computational proof of Theorem \ref{thm:keyrelation:right} is given in Appendix C).

\begin{thrm} \label{thm:keyrelation:right}
For all $\mu \in \C$, all $q \in \C^\times$ not a root of unity and all $\xi \in \C^\times$, relation \eqref{keyrelation-init:right} is satisfied.
\end{thrm}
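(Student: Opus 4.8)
The plan is to run the argument of the proof of Theorem~\ref{thm:fund} in the boundary setting: replace the coproduct formula \eqref{R:axiom2} for $\cR$ by the coproduct formula \eqref{K:axiom2} for the universal K-matrix $\cK$, and use the intertwiner $\cO$ of Theorem~\ref{thm:O:plus} in the same way. Concretely, I would first show that each side of \eqref{keyrelation-init:right} equals, up to a scalar in $\C[[z]]$, the action of $\Del(\cK)$ in an appropriate tensor product of two grading-shifted Borel representations, then use $\cO$ to identify the two sides, and finally evaluate on $w_0\ot w_0$ to see the scalar is $1$.

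For the first step I would expand $(\ups_z\ot\phi_z)(\Del(\cK))$ and $(\vrho_{q^{-\mu/2}z}\ot\brho_{q^{\mu/2}z})(\Del(\cK))$ using \eqref{K:axiom2}, so that each becomes a product of three factors. The outer factors $\cK\ot 1$ and $1\ot\cK$ give, by \eqref{K:def}, scalar multiples of $\cK_\ups(z)_1$, $\cK_\phi(z)_2$, resp.\ of $\cK_\vrho(q^{-\mu/2}z)_1$, $\cK_\brho(q^{\mu/2}z)_2$. For the middle factor $(\id\ot\psi)(\cR)$ I would use \eqref{psi:Sigma} together with the definitions \eqref{maps:minus} to get $\phi_z\circ\psi=\phi^-_{1/z}$ and $\brho_{q^{\mu/2}z}\circ\psi=\brho^{\,-}_{q^{-\mu/2}z^{-1}}$, which turns the middle factor into $(\ups\ot\phi^-)\big((\Sigma_z\ot\Sigma_{1/z})(\cR)\big)$, resp.\ into $(\vrho\ot\brho^{\,-})\big((\Sigma_{q^{-\mu/2}z}\ot\Sigma_{q^{-\mu/2}z^{-1}})(\cR)\big)$. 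A short grading computation, using $\Theta_\la\in U_q(\wh\mfn^+)_\la\ot U_q(\wh\mfn^-)_{-\la}$ and that $\ka$ is fixed by all $\Sigma$'s, shows that in both cases the two shifts combine so that $(\Sigma_{w_1}\ot\Sigma_{1/w_2})(\cR)=\cR(w_1w_2)=\cR(z^2)$; hence by \eqref{Rrhobrho:Rsitau:def} the middle factors are scalar multiples of $\cR_{\ups\phi}(z^2)$ and $\cR_{\vrho\brho}(z^2)$ respectively.

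For the second step, since the standard universal K-matrix $\cK$ lies in a completion of $U_q(\wh\mfb^+)$ and $\cK(z)=\Sigma_z(\cK)$ acts as a well-defined $\cA^{(2)}$-valued formal power series (Theorem~\ref{thm:K(z):action} and the discussion preceding it), the intertwining relation \eqref{O:plus:intw}, which holds on the generators of $U_q(\wh\mfb^+)$ and hence for all $u\in U_q(\wh\mfb^+)$, extends to $u=\cK$: after grading-shifting, both sides become formal power series in $z$ whose coefficients involve only finitely many homogeneous components of $\cK$, so the identity holds coefficientwise. Combined with the first step this yields \eqref{keyrelation-init:right} up to an overall scalar $c(z)\in\C[[z]]$. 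To see $c(z)=1$, apply both sides to $w_0\ot w_0$: the operators $\cK_\pi(z)$ fix $w_0$ by \eqref{K:def}, $\cR_{\ups\phi}(z^2)$ and $\cR_{\vrho\brho}(z^2)$ fix $w_0\ot w_0$ by \eqref{Rrhobrho:Rsitau:def}, and $\cO(w_0\ot w_0)=w_0\ot w_0$ since $a_1\ba^\dag_2$ annihilates $w_0\ot w_0$ while $q^{\mu(D_1-D_2)/2}$ fixes it; thus both sides send $w_0\ot w_0$ to itself, forcing $c(z)=1$.

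I expect the only real subtlety to be the bookkeeping in the first step, namely keeping precise track of how the principal-grading shifts interact with the twist $\psi$ so that $(\id\ot\psi)(\cR)$, evaluated in the shifted Borel representations, becomes exactly the rescaled R-operators at argument $z^2$ (and not, say, at $z^{\pm1}$ or $q^{\pm\mu}z^2$). The extension of the $\cO$-intertwining identity to the completed element $\cK$ is routine within the framework of Section~\ref{sec:augmentedqOns}, and the scalar-matching step is immediate.
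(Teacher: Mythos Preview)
Your proposal is correct and follows essentially the same approach as the paper's own proof: identify each side of \eqref{keyrelation-init:right} with the image of $\Del(\cK)$ in the two tensor products of shifted Borel representations via \eqref{K:axiom2} and \eqref{Rrhobrho:Rsitau:def}, invoke the $U_q(\wh\mfb^+)$-intertwining property of $\cO$ (Theorem~\ref{thm:O:plus}), and fix the scalar by evaluating on $w_0\ot w_0$. Your treatment of the grading bookkeeping for the middle factor and of the extension of \eqref{O:plus:intw} to the completed element $\cK$ is more explicit than the paper's, but the argument is the same.
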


\begin{proof}
The proof is analogous to the proof of Theorem \ref{thm:fund}.
We first note that 
\[ 
\begin{aligned}
\big( \vrho_{q^{-\mu/2}z} \ot \brho_{q^{\mu/2}z} \big)\big( (\id \ot \psi)(\cR) \big) &= \big( \vrho_{q^{-\mu/2}z} \ot \brho^{\, -}_{q^{-\mu/2}z^{-1}} \big)(\cR) && \propto \; \cR_{\vrho\brho}(z^2), \\
\big( \ups_{z} \ot \phi_{z} \big)\big( (\id \ot \psi)(\cR) \big) &= \big( \ups_{z} \ot \phi^-_{z^{-1}} \big)(\cR) && \propto \; \cR_{\ups\phi}(z^2).
\end{aligned}
\]
Noting the coproduct formula \eqref{K:axiom2}, we obtain
\[
\begin{aligned}
\cK_\vrho(q^{-\mu/2}z)_1 \cR_{\vrho\brho}(z^2) \cK_\brho(q^{\mu/2}z)_2 \qu &\propto \qu \big( \vrho_{q^{-\mu/2}z} \ot \brho_{q^{\mu/2}z} \big)(\Del(\cK)), \\
\cK_\ups(z)_1 \cR_{\ups\phi}(z^2) \cK_\phi(z)_2 \qu &\propto \qu \big( \ups_{z} \ot \phi_{z} \big)(\Del(\cK)).
\end{aligned}
\]
Now Theorem \ref{thm:O:plus} implies \eqref{keyrelation-init:right} up to a scalar. 
The fact that all factors fix $w_0 \ot w_0$ shows that the scalar is 1.
\end{proof}

Compatibility with the left boundary requires that 

\begin{center}
  \begin{tikzpicture}[scale=0.52]
  \begin{scope}[rotate=-45]
  \draw[dashredline=0.3](4,0) -- (0.5,0);  
  \draw[redline=0.8](0.5,0) -- (0.5,2);  
  \draw[dashblueline=0.6](4,-2) -- (2.5,-2);   
  \draw[blueline=0.8](2.5,-2) -- (2.5,2);     
  \draw(2.5,0) node[bblob]{};
  \draw[dashgreenline=0.2](8,0) -- (6,0);   
  \draw (8,-0.5) node[right]{$z^{-1}$};
  \draw[wavy=0.2](8,-2) -- (6,-2);  
  \draw (8,-2.5) node[right]{$z^{-1}$};
  \draw (5,-1) ellipse (1cm and 2cm);
  \draw(5.7,-0.3) node[left]{ \mbox{\footnotesize $\cO_{21}^{-1}$}};
  \draw (0.2,2.2) node[right]{$q^{\mu/2} z$}; \draw (2.2,2.2) node[right]{$q^{-\mu/2} z$};
   \draw (2.7,-2.8) node[below]{$q^{\mu/2} z^{-1}$};   
 \draw (4.3,1.6) node{$ q^{-\mu/2} z^{-1}$};
   \end{scope}
  \draw(7.5,-3) node[]{$=$};
   \begin{scope}[rotate=45,xshift=2.5cm,yshift=-9cm]
   \draw[dashgreenline=0.2](2.5,-4) -- (2.5,0);  \draw (2.5,-4.7) node{$z^{-1}$}; \draw(2.5,-2) node[bblob]{};
  \draw[wavy=0.2](0.5,-4) -- (0.5,-2); \draw (0.5,-4.7) node{$z^{-1}$};
  \draw[greenline=0.5](2.5,0)  -- (4,0);  \draw (3.4,0.7) node{$z$};
  \draw[wavy=0.8](0.5,-2) -- (4,-2); \draw (3.5,-2.7) node{$z$};
  \draw[redline=0.8](6,0) -- (8,0);   \draw (8,0) node[right]{$q^{\mu/2} z $};
  \draw[blueline=0.8](6,-2) -- (8,-2);  \draw (8,-2) node[right]{$q^{-\mu/2} z $};
  \draw(5,-1) ellipse (1cm and 2cm);
  \draw(5.7,-1.9) node[left]{  \mbox{\footnotesize $\cO^{-1}$}};
    \end{scope}
\end{tikzpicture}
\end{center}

The identity in $\cA^{(2)}$ corresponding to this is 
\eq{
\label{keyrelation:left}
\tcK_\brho(q^{\mu/2}z,\wt\xi)_2 \wt{\cR}_{\vrho\brho}(z^2) \tcK_\vrho(q^{-\mu/2}z,\wt\xi)_1 \cO^{-1} =
\cO^{-1} \tcK_\phi(z,\wt\xi)_2 \wt{\cR}_{\ups\phi}(z^{2}) \tcK_\ups(z,\wt\xi)_1.
}

\begin{thrm} \label{thm:keyrelation:left}
Relation \eqref{keyrelation:left} is satisfied.
\end{thrm}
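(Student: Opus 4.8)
The plan is to obtain Theorem~\ref{thm:keyrelation:left} from Theorem~\ref{thm:keyrelation:right} by a purely formal argument, in exact analogy with how the left reflection equation~\eqref{LeftBoundary} was deduced from the right reflection equation~\eqref{eq:RightBoundary:W}: one inverts the right boundary factorization identity~\eqref{keyrelation-init:right} and then performs the substitutions $z \mapsto qz$ and $\xi \mapsto \txi^{-1}$. No fresh representation-theoretic input is required; the substantive content is already contained in Theorem~\ref{thm:keyrelation:right}.

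First I would record that every factor occurring in~\eqref{keyrelation-init:right} is invertible in $\End(W \ot W)[[z]]$: the operator $\cO$ lies in $\GL(W \ot W)$, and by Lemma~\ref{lem:K:explicit} and the normalizations~\eqref{K:def} and~\eqref{Rrhobrho:Rsitau:def} each K- and R-operator appearing there is invertible as well (indeed these inverses are already used freely in~\eqref{TildeK:W} and~\eqref{RsitauRrhobrho:tilde}). Since operators carrying the subscript $1$ commute with those carrying the subscript $2$, inverting~\eqref{keyrelation-init:right} yields
\[
\cO^{-1} \, \cK_\phi(z)_2^{-1} \, \cR_{\ups\phi}(z^2)^{-1} \, \cK_\ups(z)_1^{-1} \;=\; \cK_\brho(q^{\mu/2}z)_2^{-1} \, \cR_{\vrho\brho}(z^2)^{-1} \, \cK_\vrho(q^{-\mu/2}z)_1^{-1} \, \cO^{-1} .
\]
Next I would apply the substitution $z \mapsto qz$ --- an automorphism of $\C[[z]]$ preserving the subalgebra $\cA^{(2)}$ --- together with $\xi \mapsto \txi^{-1}$, using that $\cO$ (hence $\cO^{-1}$) depends on neither $z$ nor $\xi$ and that $\cR_{\ups\phi}$ and $\cR_{\vrho\brho}$ do not depend on $\xi$. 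Invoking $q \cdot q^{\pm\mu/2} z = q^{1\pm\mu/2} z$ and comparing with the definitions $\tcK_\pi(z) = \cK_\pi(qz)^{-1}\big|_{\xi \mapsto \txi^{-1}}$ of~\eqref{TildeK:W} and $\wt{\cR}_{\vrho\brho}(z) = \cR_{\vrho\brho}(q^2 z)^{-1}$, $\wt{\cR}_{\ups\phi}(z) = \cR_{\ups\phi}(q^2 z)^{-1}$ of~\eqref{RsitauRrhobrho:tilde}, the displayed identity becomes precisely~\eqref{keyrelation:left} with its two sides interchanged, completing the argument.

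The only delicate points --- hence the main obstacle, such as it is --- are bookkeeping: one must check that the spectral-parameter shifts compose correctly, so that e.g.\ $\cK_\brho(q \cdot q^{\mu/2} z)^{-1}\big|_{\xi \mapsto \txi^{-1}}$ is indeed $\tcK_\brho(q^{\mu/2}z)$ as written in~\eqref{keyrelation:left}, and one must confirm that term-by-term inversion inside the formal-power-series completion is legitimate, which reduces to the invertibility statement above. Both are routine given Lemma~\ref{lem:K:explicit} and the discussion preceding~\eqref{tildeL:def}. Alternatively one could mimic the proof of Theorem~\ref{thm:keyrelation:right} directly, but that would first require reinterpreting the left K-operators of~\eqref{K-tilde} as actions of a suitable ``left'' universal K-matrix; the inversion argument sidesteps this entirely.
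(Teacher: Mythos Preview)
Your proposal is correct and follows essentially the same approach as the paper's own proof, which simply says that \eqref{keyrelation:left} ``follows straightforwardly by inverting \eqref{keyrelation-init:right} and replacing $(z,\xi) \mapsto (qz,\txi^{-1})$.'' Your version is more detailed (you spell out the invertibility of the factors and the bookkeeping of spectral-parameter shifts), but the argument is the same; the remark about subscript-1 and subscript-2 operators commuting is in fact unnecessary, since the reversal of order under inversion already produces the correct arrangement.
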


\begin{proof}
Given the definitions \eqref{K-tilde} and \eqref{RsitauRrhobrho:tilde}, this follows straightforwardly by inverting \eqref{keyrelation-init:right} and replacing $(z,\xi) \mapsto (qz,\txi^{-1})$.
\end{proof}

\section{Discussion} \label{sec:discussion}
The main result of this paper is Theorem \ref{thm:keyrelation:right} which can be viewed as a boundary analogue of Theorem \ref{thm:fund}. 
To establish this result, first we needed to show that all R and K-operators involved in equation \eqref{keyrelation-init:right} are well-defined actions of the universal elements $\cR$ and $\cK$ on the infinite-dimensional $U_q(\wh\mfb^+)$-modules involved. 
The key fact that allows for this is that $\cR$ and $\cK$ live in completions of $U_q(\wh\mfb^+)\ot U_q(\wh\mfb^-)$ and of $U_q(\wh\mfb^+)$, respectively. 
This is very familiar for $\cR$ but for $\cK$ relies on the recent works \cite{AV22a,AV22b}. 
Introducing the $U_q(\wh\mfb^+)$-intertwiner $\cO$ and the formula for $\Delta(\cK)$ given by \eqref{K:axiom2}, relation \eqref{thm:keyrelation:left} follows immediately from the intertwining property of $\cO$. 

The open Q-operator $\cQ(z)$ of \cite{VW20} is the trace of a product of R and K-operators over the $U_q(\wh\mfb^+)$-module $(\vrho_z,W)$ and there is a similar construction of an open Q-operator $\wb{\cQ}(z)$. 
In a future paper, the authors will present this construction and the use of Theorem \ref{thm:keyrelation:left} in deriving a boundary analogue of the factorization relation $\mc T_{\mu}(z) \: \propto \: \mc Q(zq^{-\mu/2}) \wb{\mc Q}(zq^{\mu/2})$.
They will also develop the analogous theory for different coideal subalgebras, in particular those for which non-diagonal solutions of the reflection equation are intertwiners.
There is a quite subtle rational degeneration of the construction in the present paper.
The first-named author will study this in a separate paper, giving an alternative approach to Q-operators for the open XXX spin chain, cf.~\cite{FS15}.

\appendix

\section{Deformed Pochhammer symbols and exponentials} \label{app:qexp}

This appendix is independent from the main text, but provides identities which are used there.
We review some basic theory of deformed Pochhammer symbols and exponentials (as formal power series) with a deformation parameter $p \in \C^\times$, which corresponds to $q^2$ in the main text. 

\subsection{Deformed Pochhammer symbols}

Let $x$ be a formal variable.
For $n \in \Z$, the (finite) deformed Pochhammer symbol $(x;p)_n \in \C[[x]]$ is defined by
\eq{
(x;p)_n := \begin{cases} \displaystyle \prod_{m=0}^{n-1} (1-x p^m) & \text{if } n \ge 0, \\ \displaystyle \prod_{m=n}^{-1} (1-x p^m)^{-1} & \text{if } n <0 \end{cases}
}
(the definition for $n<0$ is understood as a product of geometric series); since its constant coefficient is nonzero, it is invertible.
For all $p \in \C^\times$ and $n \in \Z_{\ge 0}$ we have the following basic identity in $\C[[x]]$, see \cite[(I.2), (I.3)]{GR90}:
\eq{ \label{qPochhammer:identity}
(x;p)_{-n} = (p^{-n} x;p)_n^{-1} = (x/p;p^{-1})_n^{-1} = (-x)^{-n} p^{n(n+1)/2} (p/x;p)_n^{-1}.
}
Assuming $|p|<1$, the infinite deformed Pochhammer symbol
\eq{ 
(x;p)_\infty := \prod_{m=0}^\infty (1-x p^m)
}
is an invertible formal power series with well-defined coefficients in $\C$.
The following identity holds in $\C[[x]]$, see \cite[(I.5)]{GR90}:
\eq{ \label{qPochhammer:finitetoinfinite}
(x;p)_n = \frac{(x; p)_\infty}{(p^n x ; p)_\infty}.
}

\subsection{Deformed exponentials}

From now on we assume that $p$ is not a root of unity. 
In particular, $(p;p)_k \ne 0$ for all $k \in \Z_{\ge 0}$.
The \emph{deformed exponential} is the invertible formal power series
\eq{ \label{qexp:def:app}
e_{p}(x) := {}_1\phi_0(0;-;p,x) = \sum_{k=0}^\infty \frac{x^k}{(p;p)_k}.
}
The ordinary exponential formal power series arises as the termwise limit $\lim_{p \to 1} e_p((1-p)x) = {\mathrm e}^x$.
This series satisfies the functional relation
\eq{ \label{qexp:funcrel}
e_{p}(p x) = (1-x) e_{p}(x),
}
see \cite[Sec. 1.3]{GR90}.
Since constants are the only formal power series which are invariant under $x \mapsto px$, an inspection of constant coefficients shows that \eqref{qexp:funcrel} implies
\eq{ \label{qexp:qpochhammer}
e_{p}(x) = \frac{1}{(x;p)_\infty} \qq \text{if } |p|<1.
} 
Similarly we consider the invertible formal power series
\eq{ \label{QEXP:def:app}
E_p(x) := {}_0\phi_0(-;-;p,-x) = \sum_{k=0}^\infty \frac{p^{k(k-1)/2} x^k}{(p;p)_k}.
}
Then $E_p(-x)^{-1}$ also satisfies \eqref{qexp:funcrel} and by comparing constant coefficients again we deduce $e_{p}(x) = E_p(-x)^{-1}$.
By evaluating \eqref{qPochhammer:identity} at $x=1$, we obtain $E_p(-x) = e_{p^{-1}}(p^{-1}x)$ and hence
\eq{ \label{qexp:inverseformula}
e_{p}(x) = e_{p^{-1}}(p^{-1}x)^{-1} \qu \in \C[[x]].
}

Deformed exponentials in $x$ and $y$ satisfy various useful identities in particular deformations of the commutative algebra $\C[[x,y]]$.
For instance, in any algebra generated by the symbols $x$ and $y$ such that $yx = \ga xy$ for $\ga \in \C$, the definition implies the following identity:
\eq{ \label{qexp:commrel}
    y e_p(x) = e_p(\ga x) y
}
which we will use repeatedly.
For a survey of product formulas analogous to $\exp(x)\exp(y)=\exp(x+y)$, see \cite{Ko97}.
We will need the following result.

\begin{lemma} \label{lem:qexp:product}
Let $x,y$ be elements of an algebra such that $y x = p x y$.
The following identities hold as formal power series in $x,y$:
\begin{align}
\label{qexp:product1} e_p(x) e_p(y) &= e_p(x+y), \\
\label{qexp:product2} e_p(y) e_p(x) &= e_p\big(x(1-y)\big) e_p(y) = e_p(x) e_p(-xy) e_p(y) = e_p(x) e_p\big((1-x)y\big).
\end{align}
\end{lemma}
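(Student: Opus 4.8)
The plan is to prove Lemma~\ref{lem:qexp:product} by exploiting the uniqueness principle for formal power series that is used repeatedly in Appendix~\ref{app:qexp}: a formal power series in the relevant variables is determined by its constant term together with a functional equation under the rescaling $x \mapsto px$ (or $y \mapsto py$), since constants are the only series invariant under such a rescaling. This turns each identity into a routine check of a recursion plus an initial condition, avoiding any direct manipulation of the double sums.

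First I would treat \eqref{qexp:product1}. Both sides are elements of the formal power series completion of the algebra generated by $x,y$ with $yx=pxy$; on the left I would use the basic commutation relation \eqref{qexp:commrel} (with $\ga=p$ applied to move a factor of $y$ past $e_p(x)$) together with the functional relation \eqref{qexp:funcrel} for $e_p$ in the form $e_p(px)=(1-x)e_p(x)$. Concretely, I would show that $F := e_p(x)e_p(y)$ satisfies $F|_{x\mapsto px} = (1-(x+y))\,F$ in the appropriate ordering, which is exactly the functional equation characterizing $e_p(x+y)$ up to a constant; comparing constant terms (both are $1$) gives \eqref{qexp:product1}. Here one must be careful that $x+y$ is a single element of the algebra and that $(1-(x+y))$ is its image under the rescaling correctly tracked — the ordering conventions ($x$ to the left of $y$) matter and should be stated explicitly.

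Next, for \eqref{qexp:product2} I would derive all three expressions from \eqref{qexp:product1} by formal substitutions. The cleanest route: in \eqref{qexp:product1} replace $x$ by $-xy$ (note $y(-xy) = -pxy\cdot y$, so $-xy$ and $y$ still satisfy the $p$-commutation relation, and $-xy$ commutes appropriately with $x$) to get $e_p(x)e_p(-xy)e_p(y) = e_p(x)e_p((1-x)y)$ — wait, more carefully: apply \eqref{qexp:product1} with the pair $(-xy, y)$ to obtain $e_p(-xy)e_p(y) = e_p(y-xy) = e_p((1-x)y)$, then left-multiply by $e_p(x)$; that yields the last two expressions. For the first expression $e_p(x(1-y))$, apply \eqref{qexp:product1} with the pair $(x,-xy)$, noting $(-xy)x = p x(-xy)$, to get $e_p(x)e_p(-xy) = e_p(x-xy) = e_p(x(1-y))$. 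Finally, to connect $e_p(y)e_p(x)$ with these, I would use \eqref{qexp:commrel} to write $e_p(y)e_p(x) = e_p(x)\,\big(e_p(x)^{-1}e_p(y)e_p(x)\big)$ and identify the conjugate, or more directly establish $e_p(y)e_p(x) = e_p(x)e_p(-xy)e_p(y)$ by the same functional-equation argument applied to $G := e_p(y)e_p(x)$ in the variable $y$.

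The main obstacle I anticipate is purely bookkeeping: in a noncommutative algebra the "substitution" $x \mapsto -xy$ into an identity is only legitimate when the substituted elements satisfy the same relations ($p$-commutation) under which the identity was proved, and one must verify this for each of the three substitutions. There is also a subtlety in that \eqref{qexp:product1} is symmetric-looking but the proof breaks the symmetry via the ordering convention, so when I reuse it with swapped roles I need $xy = p^{-1}yx$, i.e.\ the pair $(y,x)$ satisfies the relation with $p^{-1}$, not $p$ — so I should either prove a $p^{-1}$ version or phrase everything so that the left factor always $p$-commutes past the right. I expect once the ordering is pinned down, each identity is a two-line check, so the real content is setting up the functional-equation uniqueness lemma cleanly and then being disciplined about which variable plays the role of the rescaled one.
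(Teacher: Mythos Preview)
Your approach differs from the paper's, which simply cites the $q$-binomial theorem for \eqref{qexp:product1} and \cite{Ko97} for \eqref{qexp:product2} without further argument. The functional-equation strategy you outline is a legitimate and more self-contained alternative, but one step is misstated and needs correction.

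For \eqref{qexp:product1} you claim $F|_{x\mapsto px}=(1-(x+y))F$ with $F=e_p(x)e_p(y)$. Rescaling only $x$ gives $e_p(px)e_p(y)=(1-x)F$, not $(1-(x+y))F$; and on the other side $e_p(x+y)|_{x\mapsto px}=e_p(px+y)$ has no clean functional relation, since $px+y$ and $x+y$ do not $p$-commute. The fix is to rescale \emph{both} variables: then $F|_{x\mapsto px,\,y\mapsto py}=(1-x)e_p(x)(1-y)e_p(y)$, and the identity $ye_p(x)=e_p(px)y=(1-x)e_p(x)y$ shows this equals $(1-(x+y))F$, which is precisely the functional relation \eqref{qexp:funcrel} applied to $e_p(x+y)$. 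Uniqueness under the joint rescaling $(x,y)\mapsto(px,py)$ (only constants are invariant, as $p$ is not a root of unity) together with matching constant term $1$ then gives \eqref{qexp:product1}.

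Your derivation of the three expressions in \eqref{qexp:product2} from \eqref{qexp:product1} is correct: the pairs $(x,-xy)$ and $(-xy,y)$ each satisfy the required $p$-commutation, so the substitutions are legitimate and yield $e_p(x)e_p(-xy)=e_p(x(1-y))$ and $e_p(-xy)e_p(y)=e_p((1-x)y)$. For the remaining identity $e_p(y)e_p(x)=e_p(x)e_p((1-x)y)$, your proposed functional-equation argument in $y$ works cleanly: both sides satisfy $F|_{y\mapsto py}=(1-y)F$ (for the right-hand side this reduces to $(1-x)e_p(x)y=ye_p(x)$, which is the same identity used above) and both equal $e_p(x)$ at $y=0$. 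The recursion determines all $y$-coefficients from the value at $y=0$, completing the proof.
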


\begin{proof}
\eqref{qexp:product1} is a direct consequence of the well-known q-binomial formula, see e.g.~\cite[Ex. 1.35]{GR90}.
For \eqref{qexp:product2} see \cite[Prop. 3.2]{Ko97}.
\end{proof}

\subsection{Deformed exponentials as linear maps} 

Let $V$ be a $\C$-linear space.
Call an operator $f$ on $V$ \emph{locally nilpotent} if for all $v \in V$ there exists $o(v) \in \Z_{\ge 0}$ such that $f^{o(v)}(v)=0$ (note that nilpotent operators are locally nilpotent and if $V$ is finite-dimensional the converse is true).
If $f$ is nilpotent, the deformed exponential $e_p(f)$ defines an invertible map on $V$. 
If additionally $y$ is an indeterminate then $e_p(yf)$ is a well-defined invertible element of $\End(V)[[y]]$.

In the case $V=W \ot W$ the following commutation relations for linear-operator valued formal series are satisfied, expressed in terms of the linear operators $a$, $\adag$, $\badag$, $f(D)$ ($f \in \cF$) on $W$ defined in Section \ref{sec:extqosc}.

\begin{lemma} \label{lem:qexp:rels1}
Let $y$ be a formal variable. 
In $\End(W \ot W)[[y]]$ the following identities hold:
\eq{ \label{qexp:commute}
\big[ e_p(y a_1 \ba^\dag_2), f(D_1+D_2) \big] = \big[ e_p(y a_1 \ba^\dag_2) , a_1  \big] = \big[ e_p(y a_1 \ba^\dag_2) , \ba^\dag_2 \big] = 0
}
for all $f \in \cF$ and 
\begin{align} 
\label{qexp:adag}
\big[ e_p(y a_1 \ba^\dag_2), a^\dag_1 \big] &= y p^{D_1} \ba^\dag_2 e_p(y a_1 \ba^\dag_2), \\
\label{qexp:Dba}
\big[ e_p(y a_1 \ba^\dag_2), p^{-D_1} a_2 \big] &= y e_p(y a_1 \ba^\dag_2) a_1 p^{-D_1}.
\end{align}
\end{lemma}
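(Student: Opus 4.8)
The plan is to verify the four identities of Lemma \ref{lem:qexp:rels1} by reducing them to the basic commutation relations \eqref{A:basicrels2} together with the single-variable commutation rule \eqref{qexp:commrel}. First I would observe that the operator $a_1 \ba^\dag_2$ acts on the first and second tensor factors through $a$ and $\ba^\dag$ respectively, so its only nontrivial interactions are with operators that do not commute with $a$ (in the first slot) or with $\ba^\dag$ (in the second slot). Concretely, $a_1 \ba^\dag_2$ commutes with $a_1$, with $\ba^\dag_2$, with any $f(D_1)$ composed appropriately — wait, no: $a$ does not commute with $f(D)$. The point is that it commutes with $f(D_1+D_2)$ because, using $a f(D) = f(D+1) a$ and $f(D) \ba^\dag = \ba^\dag f(D+1)$, conjugating $a_1\ba^\dag_2$ by $f(D_1+D_2)$ shifts $D_1 \mapsto D_1 - 1$ and $D_2 \mapsto D_2 + 1$ simultaneously, leaving $D_1 + D_2$ invariant. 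Hence $a_1\ba^\dag_2$ commutes with $f(D_1+D_2)$; since $e_p(y a_1\ba^\dag_2)$ is a formal power series in $a_1\ba^\dag_2$, the three identities in \eqref{qexp:commute} follow immediately.

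For the remaining two identities, the key is to identify the relevant $q^2$-commutation relation and then apply \eqref{qexp:commrel}. For \eqref{qexp:adag}: from \eqref{A:basicrels2} one has $a a^\dag = 1 - q^{2(D+1)}$ and $a^\dag a = 1 - q^{2D}$, so in $\End(W\ot W)$ the operator $X := a_1\ba^\dag_2$ and $Y := a^\dag_1$ satisfy a relation of the form $YX = p\,XY + (\text{correction})$; more usefully, one computes $[a, a^\dag]$-type relations carefully. I expect the cleanest route is: write $a^\dag_1 (a_1\ba^\dag_2) = (a^\dag_1 a_1) \ba^\dag_2 = (1 - q^{2D_1})\ba^\dag_2$ acting suitably, and $(a_1\ba^\dag_2) a^\dag_1 = a_1 a^\dag_1 \ba^\dag_2 = (1 - q^{2(D_1+1)})\ba^\dag_2$, so that $[a^\dag_1, a_1\ba^\dag_2] = (q^{2(D_1+1)} - q^{2D_1})\ba^\dag_2 = q^{2D_1}(q^2-1)\ba^\dag_2$. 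Hmm, that doesn't immediately match the stated right-hand side $y p^{D_1}\ba^\dag_2 e_p(ya_1\ba^\dag_2)$ with $p = q^2$; one then needs to iterate, since the commutator with $a^\dag_1$ must be computed order by order in the power series $e_p(ya_1\ba^\dag_2) = \sum_k \frac{y^k}{(p;p)_k}(a_1\ba^\dag_2)^k$. The strategy is to prove by induction on $k$ that $[a^\dag_1, (a_1\ba^\dag_2)^k] = (\text{explicit coefficient})\,\ba^\dag_2 (a_1\ba^\dag_2)^{k-1} p^{D_1}$ and then resum. A cleaner alternative: note that $\ba^\dag_2(a_1\ba^\dag_2) = (a_1\ba^\dag_2)\ba^\dag_2$ (they commute, as $\ba^\dag$ commutes with itself and with $a$ in disjoint factors) and $p^{D_1}(a_1\ba^\dag_2) = p^{-1}(a_1\ba^\dag_2)p^{D_1}$ since $a_1$ lowers $D_1$. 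So $p^{D_1}\ba^\dag_2$ is an eigenvector for conjugation by $a_1\ba^\dag_2$, and the sum telescopes via the functional equation \eqref{qexp:funcrel} for $e_p$.

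Accordingly, the cleanest proof I would write is as follows. Set $X = a_1\ba^\dag_2$, $p=q^2$. For \eqref{qexp:adag}, observe the operator identity $[a^\dag_1, X] = a a^\dag|_1 \cdot \ba^\dag_2 - a^\dag a|_1\cdot\ba^\dag_2 = -(q^{2(D_1+1)} - q^{2D_1})\ba^\dag_2 = (1-p)\, p^{D_1}\ba^\dag_2 \cdot (-1)$ — sign to be pinned down by a direct check on $w_j\ot w_k$ — and moreover $X (p^{D_1}\ba^\dag_2) = p^{-1}(p^{D_1}\ba^\dag_2) X$ and $a^\dag_1$ interacts with $X$ only through this single commutator. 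Then, writing $C := p^{D_1}\ba^\dag_2$, one has $[a^\dag_1, X^k] = \text{const}\cdot \sum_{j=0}^{k-1} X^j C X^{k-1-j} = \text{const}\cdot C X^{k-1}\sum_{j} p^{-j}$, and substituting into the series and invoking \eqref{qexp:funcrel} gives the stated right-hand side; similarly for \eqref{qexp:Dba}, where the relevant single commutator is $[p^{-D_1}a_2, X] = p^{-D_1}a_2 a_1\ba^\dag_2 - a_1\ba^\dag_2 p^{-D_1} a_2 = a_1 p^{-D_1}(a_2\ba^\dag_2 - \ba^\dag_2 a_2)$, and $a\ba^\dag - \ba^\dag a$ acting on $w_k$ is computed from \eqref{A:basicrels2} and the definition $\ba^\dag = -q^{-2D}a^\dag$. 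The main obstacle is bookkeeping the scalar factors and signs in these single commutators — in particular getting the powers of $q$ and the $(1-p)$ versus $(p-1)$ right so that the telescoping against $e_p(yX)$ produces exactly $y C e_p(yX)$ with no stray constant; the structural argument (single commutator plus $p$-conjugation eigenvalue plus functional equation \eqref{qexp:funcrel}) is routine once those constants are fixed by evaluating both sides on a basis vector $w_j\ot w_k$.
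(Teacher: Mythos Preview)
Your approach is correct and essentially the same as the paper's: both reduce \eqref{qexp:adag} and \eqref{qexp:Dba} to computing the commutator of $a^\dag_1$ (resp.\ $p^{-D_1}a_2$) with $(a_1\ba^\dag_2)^k$ and then resumming via $(1-p^k)/(p;p)_k = 1/(p;p)_{k-1}$; the paper just packages the induction as the single-variable identities $[a^{k+1},a^\dag] = (1-p^{k+1})p^D a^k$ and $[(\ba^\dag)^{k+1},a]_{p^{k+1}} = (1-p^{k+1})(\ba^\dag)^k$, which immediately yield the tensor-product statements since $a^\dag_1$ commutes with $\ba^\dag_2$ and $a_1$ commutes with $a_2$. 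Your sign and exponent hesitations are pure bookkeeping (for instance $X(p^{D_1}\ba^\dag_2) = p\,(p^{D_1}\ba^\dag_2)X$, not $p^{-1}$), and neither \eqref{qexp:commrel} nor \eqref{qexp:funcrel} is actually needed---only the Pochhammer recursion.
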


\begin{proof}
Note that \eqref{qexp:commute} follows directly from the definition of the deformed exponential.
A straightforward inductive argument using \eqref{A:basicrels2} yields
\begin{align}
[a^{k+1},a^\dag] &= (1-p^{k+1}) p^D a^k,
\label{A:rels1}  \\
[(\ba^{\dag})^{k+1},a]_{p^{k+1}} &= (1-p^{k+1}) (\ba^{\dag})^k, \label{barA:rels1}
\end{align}
for all $k \in \Z_{\ge 0}$, which imply \eqref{qexp:adag} and \eqref{qexp:Dba}, respectively.
\end{proof}

\section{Explicit expressions for R-operators} \label{app:R-operators}

In this appendix we derive explicit formulas for $\cR_{\vrho\brho}(z)$ and $\cR_{\ups\phi}(z)$, defined by \eqref{Rrhobrho:Rsitau:def} as images of the universal R-matrix $\cR$ fixing $w_0 \ot w_0$. 
We expect that these will be useful in further studies of Baxter's Q-operators for the open XXZ spin chain; for now they will allow us to give a proof of the boundary factorization identity which does not rely on the universal K-matrix formalism. 
First we note that, by the second part of Theorem \ref{thm:R(z):action}, $\cR_{\vrho\brho}(z)$ and $\cR_{\ups\phi}(z)$ lie in the centralizer
\eq{ \label{A2:0:def}
\cA^{(2)}_0 := \Big\{ X \in \cA^{(2)} \, \Big| \, \big[ X, q^{D_1+D_2} \big] = 0 \Big\}.
}
One straightforwardly verifies that $\cA^{(2)}_0$ is generated by elements of the form
\eq{ \label{series0}
\sum_{k \ge 0} (\ba^\dag_2)^k f_k(D_1,D_2) a_1^k, \qq \sum_{k \ge 0} (a^\dag_1)^k f_k(D_1,D_2) a_2^k, \qq \qq f_k \in \cF^{(2)}.
}
Hence, elements of $\cA^{(2)}_0$ in fact commute with all elements of the form $f(D_1+D_2)$ ($f \in \cF$).

\subsection{Explicit expression for $R_{\ups\phi}(z)$}

We first state and prove an explicit formula for $R_{\ups\phi}(z)$.
We keep using the shorthand notation $p=q^2$.

\begin{thrm}
For all $z \in \C$ we have 
\eq{ \label{Rsitau:formula}
\cR_{\ups\phi} (z) = e_p(z a^\dag_1 a_2) q^{(\mu-1)(D_2-D_1)-2 D_1 (D_2+1)}.
}
\end{thrm}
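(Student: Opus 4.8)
The plan is to determine $\cR_{\ups\phi}(z)$ by exploiting the intertwining property, exactly as was done for the L-operators in Theorem~\ref{thm:fund} but now using the fundamental intertwiner $\cO$ together with the already-established formula \eqref{Rsitau:formula} in its "$\vrho\brho$-incarnation" --- or, more self-containedly, by solving the linear intertwining equation \eqref{R(z):intw} directly. Since $\ups$ extends to a $U_q(\wh\mfg)$-representation, Proposition~\ref{prop:R(z):intw} applies and gives, for all $u \in U_q(\wh\mfb^-)$,
\eq{ \label{aux:intw}
\cR_{\ups\phi}(z) \cdot (\ups_z \ot \phi^-)(\Del(u)) = (\ups_z \ot \phi^-)(\Del^{\sf op}(u)) \cdot \cR_{\ups\phi}(z).
}
First I would note that, by the second part of Theorem~\ref{thm:R(z):action}, $\cR_{\ups\phi}(z)$ commutes with $\ups(k_1) \ot \phi^-(k_1)$, hence lies in $\cA^{(2)}_0$; so by \eqref{series0} we may write $\cR_{\ups\phi}(z) = \sum_{k\ge 0} (a^\dag_1)^k f_k(D_1,D_2) a_2^k$ for unknowns $f_k \in \cF^{(2)}$, with $f_0(0,0)=1$ forced by the normalization $\cR_{\ups\phi}(z)\cdot(w_0 \ot w_0)=w_0\ot w_0$.

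Next I would impose \eqref{aux:intw} for the two generators $u=f_0$ and $u=f_1$ of $U_q(\wh\mfb^-)$ that do not already live in $U_q(\wh\mfh)$. Using $\Del(f_i) = f_i \ot k_i^{-1} + 1 \ot f_i$ and $\Del^{\sf op}(f_i) = k_i^{-1}\ot f_i + f_i \ot 1$, together with the explicit actions $\ups(f_0) = \frac{q^2}{1-q^2} a(q^{-\mu}-q^{\mu-2D})$, $\ups(f_1) = \frac{1}{1-q^2}a^\dag$ from \eqref{hom:sigma} and $\phi^-(f_0) = \frac{q}{1-q^2}a$, $\phi^-(f_1)=0$ from \eqref{homs:minus} (with the appropriate grading shift $z$ inserted in $\ups_z$), the $u=f_1$ equation becomes a relation only involving $a^\dag_1$ and the diagonal factors, while the $u=f_0$ equation couples $a_1$ and $a_2$. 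Pushing the ansatz through these two relations, using the commutation rules \eqref{A:basicrels2} to move $a$'s and $a^\dag$'s past functions of $D$, yields a first-order recursion in $k$ for the $f_k$ together with difference equations in $D_1,D_2$; solving them with the normalization pins down $f_k(D_1,D_2) = \frac{z^k}{(q^2;q^2)_k} q^{\text{(something linear)}}$, which upon resumming via the definition \eqref{qexp:def:app} of $e_p$ gives precisely $e_p(z a^\dag_1 a_2)$ times the claimed diagonal prefactor $q^{(\mu-1)(D_2-D_1)-2D_1(D_2+1)}$. An alternative, cleaner route: since $\vrho^- = \vrho\circ\psi$, $\brho^{\,-}=\brho\circ\psi$ etc., one first derives the analogous formula for $\cR_{\vrho\brho}(z)$ by the same method (or reads it off from Appendix~\ref{app:R-operators}'s companion computation), then conjugates by $\cO$ using Theorem~\ref{thm:O:plus} / Corollary~\ref{crl:O:minus} and the factorization relation to transport it to $\cR_{\ups\phi}(z)$; the diagrammatic identity in Section~\ref{sec:LandR} makes clear these two operators occupy matching slots.

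I expect the main obstacle to be bookkeeping rather than anything conceptual: correctly tracking the many $q$-power diagonal factors (the $\mu$-dependent twists in $\ups$, the shift $D\mapsto D\pm 1$ when commuting $a,a^\dag$ past $q^{cD}$, and the grading shift $z$) so that the diagonal prefactor comes out exactly as $q^{(\mu-1)(D_2-D_1)-2D_1(D_2+1)}$ and the normalization at $w_0\ot w_0$ is respected. A secondary subtlety is checking that the linear intertwining equation \eqref{aux:intw} genuinely has a one-dimensional solution space over $\C[[z]]$ within $\cA^{(2)}_0$ --- this is what guarantees that the solution we construct by the recursion is $\cR_{\ups\phi}(z)$ and not merely \emph{a} solution; this follows from the general principle in Remark~\ref{rmk:R(z):YBE:define} adapted to the present infinite-dimensional setting, but should be spelled out by observing that the recursion determines all $f_k$ uniquely from $f_0(0,0)$. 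Finally I would verify the formula directly on a basis vector $w_j \ot w_k$ (or just check the functional relation \eqref{qexp:funcrel}-style consistency) as a sanity check before declaring the proof complete.
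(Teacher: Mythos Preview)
Your overall strategy --- solve the intertwining relation \eqref{aux:intw} for the generators $u=f_0,f_1$ --- is exactly the paper's. The gap is in your justification of the ansatz $\cR_{\ups\phi}(z) = \sum_{k\ge 0} (a^\dag_1)^k f_k(D_1,D_2) a_2^k$. Citing \eqref{series0} does not suffice: that equation says $\cA^{(2)}_0$ is \emph{generated} by the two displayed types, not that every element is of one type; for instance $a_1 \ba^\dag_2 \in \cA^{(2)}_0$ is not of your ansatz form. Consequently, showing the recursion determines all $f_k$ uniquely does not establish that the solution space of \eqref{aux:intw} in $\cA^{(2)}_0$ is one-dimensional, only that it is so within your ansatz.

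The paper closes this gap by first peeling off the diagonal factor, writing $X = \wt X\, q^{(\mu-1)(D_2-D_1)-2D_1(D_2+1)}$. Because $\phi^-(f_1)=0$, the $u=f_1$ equation then collapses to the bare commutator condition $[\wt X, a^\dag_1]=0$. The centralizer of $a^\dag_1$ in $\cA^{(2)}_0$ consists precisely of series $\sum_k (a^\dag_1)^k f_k(D_2) a_2^k$ with $f_k$ depending on $D_2$ \emph{only} --- so the ansatz is derived, not assumed, and with a sharper form than you propose. The $u=f_0$ equation then gives a recursion in which the crucial step is the linear independence of $p^{\pm D_1}$ over $\cF(D_2)$, splitting one relation into two and forcing $f_k$ to be constant and equal to $z^k/(p;p)_k$.

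Your alternative route via $\cO$ does not work as stated: $\cO$ (respectively $\cO_{21}$) intertwines tensor products of $U_q(\wh\mfb^+)$-representations (respectively $U_q(\wh\mfb^-)$-representations), whereas $\cR_{\ups\phi}(z)$ is built from the \emph{mixed} pair $(\ups,\phi^-)$ with one factor from each Borel. There is no direct conjugation by $\cO$ relating $\cR_{\ups\phi}$ to $\cR_{\vrho\brho}$; indeed, in the paper the formula for $\cR_{\vrho\brho}$ is derived separately and by a considerably more involved argument.
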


\begin{proof}
From Proposition \ref{prop:R(z):intw} we deduce that $\cR_{\ups\phi} (z)$ is a solution of the linear relation
\eq{
\label{X:equation:1} X (\ups_{z} \ot \phi^-)(\Del(u)) = (\ups_{z} \ot \phi^-)(\Del^{\rm op}(u)) X \qu \text{for all } u \in U_q(\wh\mfb^-).
}
First of all, note that the element in the right-hand side of \eqref{Rsitau:formula} satisfies \eqref{X:equation:1} with $u \in \{ k_0,k_1 \}$ and so it suffices to prove that the vector space
\eq{ 
\mc X = \Big\{ X \in \cA^{(2)}_0 \, \Big| \, X \text{ satisfies } \eqref{X:equation:1} \text{ for } u \in \{ f_0,f_1\} \Big\}
}
is spanned by 
$e_p(z^2 a^\dag_1 a_2) q^{(\mu-1)(D_2-D_1)-2 D_1 (D_2+1)}$.
Using the explicit formulas \eqref{Delta:def}, \eqref{hom:sigma} and \eqref{homs:minus}, we obtain that \eqref{X:equation:1} is equivalent to the system
\begin{align*}
X \Big( z^{-1} a_1 (q^{-\mu}-q^{\mu-2D_1} ) q^{-\mu-2D_2-1} + q^{-1} a_2 \Big) &= \Big( z^{-1} a_1 (q^{-\mu}-q^{\mu-2D_1} ) + q^{\mu-2(D_1+1)} a_2 \Big) X, \\
X a^\dag_1 q^{\mu+1+2D_2} &= a^\dag_1 X.
\end{align*}
Without loss of generality we may write $X = \wt X q^{(\mu-1)(D_2-D_1)-2 D_1 (D_2+1)}$ with $\wt X \in \cA^{(2)}_0$.
Hence \eqref{X:equation:1} is equivalent to
\eq{
z^{-1} [\wt X ,a_1 (1-p^{\mu - D_1} ) ] = p^{\mu -D_1- 1} a_2 \wt X - \wt X p^{D_1} a_2 , \qq \qq [\wt X, a^\dag_1] = 0.
}
It is straightforward to check that the centralizer in $\cA^{(2)}_0$ of $a^\dag_1$ is the subalgebra generated by elements of the form $\sum_{k \ge 0} (a^\dag_1)^k f_k(D_2) a_2^k$ with $f_k \in \cF$.
It follows that $\wt X$ is of this form.
Therefore \eqref{X:equation:1} is equivalent to the single equation
\[
\sum_{k \ge 0} \big[(a^\dag_1)^k,a_1 (1-p^{\mu-D_1} ) \big]  f_k(D_2) a_2^k = z \sum_{k \ge 0} (a^\dag_1)^k \big( p^{\mu-D_1-k-1} f_k(D_2+1) - p^{D_1} f_k(D_2) \big) a_2^{k+1}.
\]
The commutator vanishes if $k=0$ so in the left-hand side we replace $k$ by $k+1$.
For $k \ge 0$ we have
\begin{align*}
\big[(a^\dag)^{k+1},a (1-p^{\mu-D} ) \big] 
&= (a^\dag)^k (1-p^{k+1}) (p^{\mu - D - k - 1} - p^D).
\end{align*}
Hence \eqref{X:equation:1} is equivalent to 
the recurrence relation 
\[
(1-p^{k+1}) \big( p^{\mu-D_1-k-1} - p^{D_1} \big) f_{k+1}(D_2) = z \big( p^{\mu-D_1-k-1} f_k(D_2+1) - p^{D_1} f_k(D_2) \big).
\]
Viewing $\cF^{(2)}(D_1,D_2)$ as an $\cF(D_2)$-module, the elements $p^{\pm D_1}$ are linearly independent. 
Hence the above recurrence relation is equivalent to the system
\[
(1-p^{k+1}) f_{k+1}(D) = z f_k(D + 1), \qq f_k(D+1) = f_k(D).
\]
This is in turn equivalent to $f_k(D) \in (p;p)_k^{-1} z^k \C$ for $k \in \Z_{>0}$, as required.
\end{proof}

\subsection{The automorphism $\chi$ and the q-oscillator subalgebra $\wt{\cA}$}

To obtain an expression for $R_{\ups\phi}(z)$ in terms of deformed exponentials, it is very convenient to point out an additional automorphism $\chi$.
It cannot be defined on all of $\cA$ so we will consider a subalgebra $\wt{\cA}$.
First, consider the subalgebra $\wt{\cF}(D) \subset \cF(D)$ generated by
\[
p^{\pm D(D+1)/2}, \qq \ga^D, \qq (p \wt \ga;p)_D^{\pm 1}, \qq (p \ga z^2;p)_D, \qq (-\ga z^2)^{-D} (p \ga^{-1} z^{-2};p)_D^{-1}
\]
for all $\ga \in \C^\times$ and $\wt \ga \in \C^\times \backslash p^{\Z}$. 

For elements of $\wt{\mc F}(D)$, unlike general elements of $\mc F(D)$, the symbol $D$ can be formally evaluated at negative integers.
Accordingly, we define an involutive automorphism $\chi$ of $\wt{\cF}(D)$ accomplishing the formal replacement $D \mapsto -D-1$.
To be more precise, we set
\eq{ \label{chi:def:1}
\begin{aligned}
\multicolumn{2}{c}{$\chi\big( p^{\pm D(D+1)/2} \big) = p^{\pm D(D+1)/2}, \qq \qq \chi\big(\ga^D \big) = \ga^{-D-1}$, \qq \qq} \\
\chi\big( (p \wt \ga;p)_D^{\pm 1} \big) &= (1 - \wt \ga)^{\mp 1} p^{\pm D(D+1)/2} (-\wt \ga)^{\mp D} (p \wt \ga^{-1};p)_D^{\mp 1}, \\
\chi\big( (p \ga z^2;p)_D \big) &= (1 - \ga z^2)^{-1} p^{D(D+1)/2} (-\ga z^2)^{-D} (p \ga^{-1} z^{-2};p)_D^{-1}, \\
\chi\big( (-\ga z^2)^{-D} (p \ga^{-1} z^{-2};p)_D^{-1} \big) &= (1 - \ga z^2) p^{-D(D+1)/2} (p \ga z^2;p)_D.
\end{aligned}
}

We denote the subalgebra of $\End(W)$ generated by $a^\dag$, $a$ and $\wt{\mc F}(D)$ by $\wt{\cA}$.
It is straightforward to check that $\chi$ extends to a (non-involutive) algebra automorphism of $\wt{\cA}$ by means of the assignments
\eq{ \label{chi:def:2}
\chi(a) = \badag, \qq \chi(\adag) = a.
}

We can formulate a completion of the tensor product $\wt{\cA} \ot \wt{\cA}$ in a similar way as for $\cA \ot \cA$.
More precisely, we consider the subalgebra $\wt{\cF}^{(2)}$ of $\cF^{(2)}$ generated by the subsets $\wt{\cF}(D_1)$, $\wt{\cF}(D_2)$ and the special elements $p^{\pm D_1(D_2+1)}$.
The completed tensorial square of $\wt{\cA}$ is defined to be the subalgebra $\wt{\cA}^{(2)}$ of $\End(W \ot W)$ generated by the elements \eqref{series} with $g_{k,\ell} , \, h_{k,\ell} \in \wt{\cF}^{(2)}$.
Note that the boundary factorization identity \eqref{keyrelation-init:right} is an identity in the subalgebra $\wt{\cA}^{(2)} \subset \End(W \ot W)[[z]]$. 

The automorphism 
\eq{ \label{chi2:def}
\chi^{(2)} := \sigma \circ (\chi \ot \chi^{-1})
}
of $\wt{\cA} \ot \wt{\cA}$ naturally extends to an automorphism of $\wt{\cA}^{(2)}$, fixing $p^{\pm D_1(D_2+1)}$ and acting termwise on power series in locally nilpotent operators.

\begin{rmk}
The map $\chi$ can be seen as an infinite-dimensional version of conjugation by anti-diagonal matrices; certain $U_q(\wh\mfb^+)$-representations are naturally related this way.
For instance, for the 2-dimensional representation $\Pi$, note that $\Ad(J) \circ \Pi = \Pi \circ \Phi$ where $\Ad$ denotes `conjugation by' and $J = \big( \begin{smallmatrix} 0 & 1 \\ 1 & 0 \end{smallmatrix} \big)$.
In the same way, $\chi$ relates the prefundamental representations $\vrho$ and $\brho$ up to a twist by the diagram automorphism $\Phi$: $\chi \circ \vrho = \brho \circ \Phi$.
Hence, the condition \eqref{Phi:R} and the 1-dimensionality of the solution space of the relevant linear equation implies
$(\Ad(J) \ot \chi) ( \cL_\vrho(z) ) = \cL_\brho(z)$.
At the same time, a suitable scalar multiple of $\cR_{\Pi \, \Pi}(z)$, i.e. the R-matrix for the XXZ chain, is fixed by $\Ad(J \ot J)$ and we will see in Section \ref{app:Rrhobrho:explicit} that the same statement is true for $\cR_{\vrho\brho}(z)$ and $\chi^{(2)}$.

From \eqref{Uqk:def} it follows that $\Phi(U_q(\mfk)) = U_q(\mfk)|_{\xi \mapsto \xi^{-1}}$. 
Hence, the boundary counterparts of the above relations also involve inversion of the free parameter $\xi$:
\[
\Ad(J)\big( K_{\Pi}(z) \big)|_{\xi \mapsto \xi^{-1}} = -\xi \, K_{\Pi}(z), \qq \chi(\cK_\vrho(z))|_{\xi \mapsto \xi^{-1}}  = q^{-1} (z^{2} - \xi^{-1})^{-1} \cK_\brho(z).
\]
In fact, applying $\chi \ot \Ad(J)$ to the reflection equation \eqref{eq:RightBoundary:W} with $\pi = \vrho$ and inverting $\xi$ we see that \[
\cK_\vrho(z) \mapsto \chi(\cK_\vrho(z))|_{\xi \mapsto \xi^{-1}}
\]
defines a bijection: ${\sf RE}_\vrho \to {\sf RE}_\brho$ of the solution spaces defined in \eqref{eq:REsolspace}. \hfill \rmkend
\end{rmk}

We can use the map $\chi^{(2)}$ to generate further relations similar to those in Lemma \ref{lem:qexp:rels1}.

\begin{lemma} \label{lem:qexp:rels2}
Let $y$ be a formal parameter. 
In $\End(W \ot W)[[y]]$ the following identities hold:
\begin{align} 
\label{qexp:badag} [\ba^\dag_2,e_p(y a^\dag_1 a_2)] &= y e_p(y a^\dag_1 a_2) a^\dag_1 p^{-D_2-1}, \\
\label{qexp:Dadagba}
[\ba^\dag_1 a_2,e_p(y a_1 \ba^\dag_2) ] &= y \big( e_p(y a_1 \ba^\dag_2)  p^{-D_1-1} - p^{-D_2-1} e_p(y a_1 \ba^\dag_2) \big). 
\end{align}
\end{lemma}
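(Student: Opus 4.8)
The plan is to deduce both identities by applying the algebra automorphism $\chi^{(2)} = \sigma\circ(\chi\ot\chi^{-1})$ of $\wt\cA^{(2)}$ (composed, where needed, with a relabelling of the two copies of $W$, i.e. an extra $\sigma$) to the identities \eqref{qexp:adag} and \eqref{qexp:Dba} of Lemma \ref{lem:qexp:rels1}, as announced in the sentence preceding the lemma. Since $\chi^{(2)}$ is a well-defined algebra automorphism of the completed tensor square — in particular it acts termwise on the deformed exponentials \eqref{qexp:def:app} — it carries each identity of Lemma \ref{lem:qexp:rels1} to a new identity in $\End(W\ot W)[[y]]$. Using the explicit action $\chi(a)=\badag$, $\chi(\adag)=a$, $\chi(\ga^D)=\ga^{-D-1}$, together with $\badag = -q^{-2D}\adag$ and the relations \eqref{A:basicrels2}, I would rewrite the images of all generators and rearrange the resulting identities into the stated form \eqref{qexp:badag}, \eqref{qexp:Dadagba}.

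Equivalently — and this may be cleaner to present — one argues directly, mirroring the proof of Lemma \ref{lem:qexp:rels1}. For \eqref{qexp:badag}, set $X = y\adag_1 a_2$ and $c = [\badag_2,X]$; a short computation with \eqref{A:basicrels2} gives $c = y(1-p)p^{-D_2-1}\adag_1$ and the $p$-Heisenberg relation $cX = pXc$. The Leibniz rule then yields $[\badag_2,X^k] = \tfrac{1-p^k}{1-p}X^{k-1}c$, and dividing by $(p;p)_k$ and resumming the resulting geometric-type series gives $[\badag_2,e_p(X)] = \tfrac{1}{1-p}e_p(X)c$, which is exactly \eqref{qexp:badag}. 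For \eqref{qexp:Dadagba}, note first that the stated identity is equivalent to the intertwining relation
\[
\big(\badag_1 a_2 + y p^{-D_2-1}\big)\,e_p(y a_1\badag_2) = e_p(y a_1\badag_2)\,\big(\badag_1 a_2 + y p^{-D_1-1}\big);
\]
this can be established by expanding $e_p(y a_1\badag_2) = \sum_{k\ge 0} y^k (a_1\badag_2)^k/(p;p)_k$, computing $[\badag_1 a_2,(a_1\badag_2)^k]$ with the help of \eqref{A:rels1}--\eqref{barA:rels1} and the functional equation \eqref{qexp:funcrel}, and resumming — or equivalently by verifying the relation on each basis vector $w_j\ot w_k$, where both sides act diagonally.

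The main obstacle, in either route, is purely computational: keeping track of the (non-commuting) relations between the diagonal operators $p^{\pm D_i}$ and the ladder operators $a,\adag,\badag$ across the two tensor factors, e.g. $p^{-D_2}a_1\badag_2 = p^{-1}a_1\badag_2 p^{-D_2}$ versus $p^{-D_1}a_1\badag_2 = p\, a_1\badag_2 p^{-D_1}$. For the second identity there is an additional subtlety: the element $c = [\badag_1 a_2, a_1\badag_2] = (1-p^{-1})(p^{-D_2}-p^{-D_1})$ does \emph{not} satisfy a clean $p$-commutation with $a_1\badag_2$ — its two summands transform with opposite powers of $p$ — so the naive Leibniz-telescoping that works for \eqref{qexp:badag} does not apply verbatim, which is the reason for recasting \eqref{qexp:Dadagba} in the intertwining form above before resumming. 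Beyond this bookkeeping there is no conceptual difficulty.
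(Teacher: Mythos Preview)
Your direct Leibniz-rule argument for \eqref{qexp:badag} is correct and self-contained. Note, however, that your first route --- applying $\chi^{(2)}$ (plus an extra $\sigma$) to \eqref{qexp:adag} --- does not land on \eqref{qexp:badag}: it yields the intermediate identity $[e_p(y a_1\badag_2),a_2] = y a_1 p^{-D_2-1} e_p(y a_1\badag_2)$ instead. The paper bridges this gap by introducing a further \emph{anti}-automorphism $\eta^{(2)} = \eta\ot\wb\eta$ (with $\eta:a\leftrightarrow\adag$, $\wb\eta:a\leftrightarrow\badag$, both fixing $\cF(D)$) and applying it to that intermediate identity. So your automorphism route is incomplete as written, but your alternative direct computation rescues it.

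For \eqref{qexp:Dadagba} the paper takes a rather different path: it first combines \eqref{qexp:adag} with a consequence of \eqref{qexp:Dba} to obtain $[e_p(y a_1\badag_2), a^\dag_1 a_2 p^{D_2}] = y(e_p(y a_1\badag_2)p^{D_2} - p^{D_1}e_p(y a_1\badag_2))$, and only then applies $\chi^{(2)}$ to reach \eqref{qexp:Dadagba}. This sidesteps exactly the difficulty you flag (the two diagonal pieces of the commutator $c$ transforming with opposite powers of $p$), and is somewhat slicker than a direct term-by-term expansion. Your intertwining reformulation is correct and would also work after some bookkeeping; but the throwaway remark that ``both sides act diagonally'' on $w_j\ot w_k$ is false --- neither $\badag_1 a_2$ nor $e_p(y a_1\badag_2)$ is diagonal in that basis --- so that shortcut should be dropped.
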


\begin{proof}
In this proof we view the algebra $\cA$ as a subalgebra of $\End(W)[[y]]$, and similarly for $\cA^{(2)}$.
To prove \eqref{qexp:badag}, first we apply $\chi^{(2)}$ to \eqref{qexp:adag}, obtaining
\eq{
\label{qexp:ba} [e_p(y a_1 \ba^\dag_2 ),a_2]  = y a_1 p^{-D_2-1} e_p(y a_1 \ba^\dag_2).
}
Now consider the unique involutive algebra anti-automorphism $\eta: \cA \to \cA$ which exchanges $a$ and $a^\dag$ and fixes $f(D)$ for all $f \in \cF$ and the unique involutive algebra anti-automorphism $\wb\eta: \cA \to \cA$ which exchanges $a$ and $\ba^\dag$ and fixes $f(D)$ for all $f \in \cF$.
Then $\eta^{(2)} := \eta \ot \wb\eta$ is an algebra antiautomorphism of $\cA \ot \cA$.
It extends in a natural way to an algebra antiautomorphism of $\cA^{(2)}$.
By applying $\eta^{(2)}$ to \eqref{qexp:ba} we obtain \eqref{qexp:badag}.

Finally, to prove \eqref{qexp:Dadagba}, upon right-multiplying \eqref{qexp:Dba} by $p^{D_1+D_2+1}$ we obtain 
\eq{ 
\label{qexp:babD} [e_p(y a_1 \ba^\dag_2), a_1 p^{D_2}] = y e_p(y a_1 \ba^\dag_2) a_1 p^{D_2}.
}
From \eqref{qexp:adag} and \eqref{qexp:babD} it follows that
\eq{
\label{qexp:adagbabD}
\begin{aligned}
[e_p(y a_1 \ba^\dag_2), a^\dag_1 a_2 p^{D_2}] &= y \Big( \ba^\dag_2 p^{D_1} e_p(y a_1 \ba^\dag_2) a_2 + a^\dag_1  e_p(y a_1 \ba^\dag_2) a_1 \Big) p^{D_2} \\
&= y \Big( p^{D_1} e_p(y a_1 \ba^\dag_2) \big(p^{D_2}-1\big) + \big(1-p^{D_1}\big) e_p(y a_1 \ba^\dag_2) p^{D_2} \Big) \\
&= y \big( e_p(y a_1 \ba^\dag_2) p^{D_2} - p^{D_1} e_p(y a_1 \ba^\dag_2) \big).
\end{aligned}
}
Now \eqref{qexp:Dadagba} follows as the $\chi^{(2)}$-image of \eqref{qexp:adagbabD}.
\end{proof}

\subsection{Explicit expression for $\cR_{\vrho\brho}(z)$} \label{app:Rrhobrho:explicit}

To aid the computation of $\cR_{\vrho\brho}(z)$, consider the subalgebra $\wt{\cA}^{(2)}_0 = \wt{\cA}^{(2)} \cap \cA^{(2)}_0$, which is preserved by $\chi^{(2)}$.

\begin{lemma} \label{prop:R:chi2}
$\cR_{\vrho\brho}(z)$ is a $\wt{\cA}^{(2)}_0$-valued formal power series whose coefficients are fixed by $\chi^{(2)}$.
\end{lemma}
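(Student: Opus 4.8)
The plan is to write the universal R-matrix as $\cR=\Theta^{-1}\ka^{-1}$ (cf.\ \eqref{R:factorization}) and handle the two factors separately. The containment $\cR_{\vrho\brho}(z)\in\cA^{(2)}_0$ is the second assertion of Theorem \ref{thm:R(z):action}, as already noted before the lemma; what remains for membership is to check that the coefficient functions lie in $\wt{\cF}^{(2)}$ rather than in a general $\cF^{(2)}$. Under $\vrho_z\ot\brho^{-}$ the weight-$0$ factor $\ka^{-1}$ becomes a diagonal operator of the form $\ga_0^{D_1}\ga_1^{D_2}q^{-2D_1(D_2+1)}$, with $\ga_0,\ga_1\in\C^\times$ read off from the $k_1$-weights in \eqref{homs:plus}, \eqref{homs:minus} and the normalisation in \eqref{Rrhobrho:Rsitau:def}; these are exactly among the generators of $\wt{\cF}^{(2)}$. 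By \eqref{Theta:keyobservation}, $(\Sigma_z\ot\id)(\Theta^{-1})$ is a power series in $z$ whose coefficients are finite sums of elements of $U_q(\wh{\mfn}^+)\ot U_q(\wh{\mfn}^-)$; applying $\vrho\ot\brho^{-}$ turns each coefficient into a finite sum of monomials in $\adag_1,a_1$ and $\badag_2,a_2$, which after normal ordering via \eqref{A:basicrels2} into the shape \eqref{series0} has coefficients in $\wt{\cF}^{(2)}$ (the only functions of $D$ produced are products of factors $1-q^{2m+2D}$). One could alternatively deduce this \emph{a posteriori} from the recurrences satisfied by $\cR_{\vrho\brho}(z)$, exactly as in the computation of $\cR_{\ups\phi}(z)$.

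For the $\chi^{(2)}$-invariance, recall from \eqref{Rrhobrho:Rsitau:def} that $\cR_{\vrho\brho}(z)$ is the unique $\C[[z]]$-multiple of $(\vrho_z\ot\brho^{-})(\cR)$ fixing $w_0\ot w_0$; since $\chi^{(2)}$ is a $\C[[z]]$-algebra automorphism of $\wt{\cA}^{(2)}$, it suffices to prove that $\chi^{(2)}$ fixes $(\vrho_z\ot\brho^{-})(\cR)$ itself. Writing $(\vrho_z\ot\brho^{-})(\cR)=(\vrho_z\ot\brho^{-})(\Theta^{-1})\cdot(\vrho_z\ot\brho^{-})(\ka^{-1})$, I would treat the diagonal factor by a direct finite computation: $\chi^{(2)}$ sends $q^{-2D_1(D_2+1)}$ to itself and $\ga_0^{D_1}\ga_1^{D_2}$ to $\ga_0^{-D_2-1}\ga_1^{-D_1-1}$, and one checks that the particular $\ga_0,\ga_1$ at hand make this operator $\chi^{(2)}$-fixed. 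For the factor $(\vrho_z\ot\brho^{-})(\Theta^{-1})$, the point is that \emph{on the nilpotent parts} the identities from the preceding remark are exact: $\chi\circ\vrho=\brho\circ\Phi$ on $U_q(\wh{\mfn}^+)$, $\chi^{-1}\circ\brho^{-}=\vrho^{-}\circ\Phi$ on $U_q(\wh{\mfn}^-)$, as well as $\vrho^{-}\circ\om=\vrho\circ\Phi$ and $\brho\circ\om=\brho^{-}\circ\Phi$ on the respective nilpotent subalgebras. Combining these with $(\Phi\ot\Phi)(\Theta^{-1})=\Theta^{-1}$ and $\si(\Theta^{-1})=(\om\ot\om)(\Theta^{-1})$ (both consequences of \eqref{Phi:R} and \eqref{omega:R}, using that $\Phi$ preserves the bilinear form) and with \eqref{Phi:Sigma}, \eqref{omega:Sigma} and \eqref{Theta:keyobservation}, one gets, with $\chi^{(2)}=\si\circ(\chi\ot\chi^{-1})$,
\[
\chi^{(2)}\bigl((\vrho_z\ot\brho^{-})(\Theta^{-1})\bigr)
= \si\bigl((\brho_z\ot\vrho^{-})(\Theta^{-1})\bigr)
= \bigl((\vrho^{-}\!\circ\om)\ot(\brho_z\!\circ\om)\bigr)(\Theta^{-1})
= (\vrho\ot\brho^{-}_{z^{-1}})(\Theta^{-1})
= (\vrho_z\ot\brho^{-})(\Theta^{-1}).
\]
Multiplying the two fixed factors yields $\chi^{(2)}\bigl((\vrho_z\ot\brho^{-})(\cR)\bigr)=(\vrho_z\ot\brho^{-})(\cR)$, hence $\chi^{(2)}(\cR_{\vrho\brho}(z))=\cR_{\vrho\brho}(z)$, which is the claim.

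The hard part is exactly the Cartan bookkeeping just alluded to: the relations of the preceding remark are \emph{not} exact on $k_i^{\pm1}$ (there $\chi$, which acts by $D\mapsto -D-1$, inverts the action), so one must not apply them to $\cR$ directly but isolate the factor $\ka^{-1}$ and verify its $\chi^{(2)}$-invariance by hand; this works precisely because the $k_1$-weights enter $\ka$ only through the symmetric pairing $(\la,\la')$, so that the double replacement $D_i\mapsto -D_i-1$ together with the transposition $\si$ reproduces it. A secondary, more routine but still delicate point is showing that the normal-ordered form of $(\vrho_z\ot\brho^{-})(\Theta^{-1})$ really lands in $\wt{\cF}^{(2)}$ and not merely in $\cF^{(2)}$.

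If one prefers to avoid the explicit factorisation, an alternative is to characterise $\cR_{\vrho\brho}(z)$ up to scalar by the Yang--Baxter equation \eqref{R(z):YBE} with $(\pi^+,\pi,\pi^-)=(\vrho,\Pi,\brho^{-})$, whose outer factors are the explicit operators $\cL_\vrho$ and $\cL^{-}_\brho$, and transform that equation by $\si_{13}\circ(\chi\ot\Ad(J)\ot\chi^{-1})$: using $(\Ad(J)\ot\chi)(\cL_\vrho(z))=\cL_\brho(z)$ and its analogue for $\cL^{-}_\brho$, the transformed equation is again \eqref{R(z):YBE} but with $\cR_{\vrho\brho}(z)_{13}$ replaced by $\chi^{(2)}(\cR_{\vrho\brho}(z))_{13}$, and uniqueness of the solution (cf.\ Remark \ref{rmk:R(z):YBE:define}) forces $\chi^{(2)}$-invariance. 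The price of this route is having to justify $1$-dimensionality of that solution space for the non-evaluation representations $\vrho,\brho^{-}$, which the universal-element argument above sidesteps; this is why I would make the latter the main line of proof.
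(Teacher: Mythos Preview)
Your argument is correct but takes a longer route than the paper's, and your concern about the Cartan part is unfounded.

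The paper does not split $\cR$ into $\Theta^{-1}$ and $\ka^{-1}$ for the $\chi^{(2)}$-invariance. Instead it observes a single identity of algebra maps
\[
\chi^{(2)} \circ (\vrho \ot \brho^{\,-}) \;=\; (\vrho \ot \brho^{\,-}) \circ (\om \ot \om) \circ \sigma
\]
(on all of $U_q(\wh\mfb^+) \ot U_q(\wh\mfb^-)$, including $U_q(\wh\mfh) \ot U_q(\wh\mfh)$), and then simply applies $(\om\ot\om)(\cR)=\cR_{21}$ together with \eqref{omega:Sigma} and \eqref{R(z)uni:def}. This dispatches the $\chi^{(2)}$-invariance in one line.

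The reason your ``hard part'' disappears is that the relation $\chi \circ \vrho = \brho \circ \Phi$ of the preceding remark \emph{does} hold on $k_i^{\pm 1}$: for instance $\chi(\vrho(k_1)) = \chi(q^{-2D}) = q^{2(D+1)} = \brho(k_0) = \brho(\Phi(k_1))$. Likewise $\chi \circ \vrho = \brho^{\,-} \circ \om$ and $\chi^{-1} \circ \brho^{\,-} = \vrho \circ \om$ hold on the full Borel subalgebras, so there is no need to isolate $\ka^{-1}$ and treat it by hand. (Incidentally, $\ka^{-1}$ acts simply as $q^{-2D_1(D_2+1)}$ with no extra $\ga_0^{D_1}\ga_1^{D_2}$ factor; the normalisation in \eqref{Rrhobrho:Rsitau:def} is a scalar in $\C[[z]]$.)

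Your treatment of the membership $\cR_{\vrho\brho}(z)\in\wt{\cA}^{(2)}_0$ matches the paper's: both note that $\vrho\ot\brho^{\,-}$ sends $U_q(\wh\mfn^+)\ot U_q(\wh\mfn^-)$ into $\wt{\cA}\ot\wt{\cA}$ and that $\ka$ contributes precisely the generator $p^{D_1(D_2+1)}$ of $\wt{\cF}^{(2)}$. Your alternative route via the Yang--Baxter equation and $\chi\ot\Ad(J)\ot\chi^{-1}$ is a valid idea, but as you note it requires a separate uniqueness argument; the paper's direct approach (and yours) avoids this.
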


\begin{proof}
It is clear from \eqref{homs:plus} and \eqref{homs:minus} that $\vrho \ot \brho^{\, -}$ takes values in $\wt{\cA} \ot \wt{\cA} \subset \wt{\cA}^{(2)}$.
Now recall \eqref{R:factorization} and note that the factor $\ka$ acts as $p^{D_1(D_2+1)}$.
Furthermore, noting the form of $(\Sigma_z \ot \id)(\Theta)$ given by \eqref{Theta:keyobservation} with the components $\Theta_\la$ lying in $U_q(\wh\mfn^+)_\la \ot U_q(\wh\mfn^+)_{-\la}$ ($\la \in \wh Q^+$), we obtain that the action of $\mc{R}(z)$ on $(\vrho \ot \brho^{\, -},W \ot W)$ is by an element of $\wt{\cA}^{(2)}_0$.
For the second part, note that
\[
\chi^{(2)} \circ (\vrho \ot \brho^{\, -}) = (\chi^{-1} \ot \chi) \circ (\brho^{\, -} \ot \vrho) \circ \sigma = (\vrho \ot \brho^{\, -}) \circ (\om \ot \om) \circ \sigma.
\]
Applying this to $\cR(z)$, making use of \eqref{R(z)uni:def}, \eqref{omega:Sigma} and \eqref{omega:R}, we obtain $\chi^{(2)}(\cR_{\vrho\brho}(z)) = \cR_{\vrho\brho}(z)$.
\end{proof}

In the derivation of the formula for $\cR_{\vrho\brho}(z)$, we rely on the following result.

\begin{lemma} \label{lem:centralizer}
The centralizer of the subset $\{ \adag_1, \badag_2 \}$ in $\cA^{(2)}$ is equal to $\C[[z]]$.
\end{lemma}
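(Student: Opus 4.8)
The plan is to show that any $X \in \cA^{(2)}$ commuting with both $\adag_1$ and $\badag_2$ must be a scalar (in $\C[[z]]$), the reverse inclusion being trivial. First I would recall the structure of $\cA^{(2)}$: by definition every element is a $\C[[z]]$-linear combination of series of the two forms in \eqref{series}, namely $\sum_{k,\ell \ge 0} (a^\dag_2)^\ell g_{k,\ell}(D_1,D_2) a_1^k$ and $\sum_{k,\ell \ge 0} (a^\dag_1)^k h_{k,\ell}(D_1,D_2) a_2^\ell$. It is cleaner to argue directly on matrix entries: fix the basis $\{w_j \ot w_{j'}\}$ of $W \ot W$ and note that $X$ is determined by its (column-finite) matrix. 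The two commutation conditions will force this matrix to be diagonal and then constant along the diagonal.

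The key steps, in order. (1) Using $[X,\adag_1]=0$: since $\adag_1$ shifts the first tensor factor $w_j \mapsto (1-q^{2(j+1)})w_{j+1}$ and is injective with image of finite codimension, commuting with it is a strong constraint. Concretely, for the matrix coefficients $X_{(j,k),(j',k')} := \langle w_j\ot w_k, X(w_{j'}\ot w_{k'})\rangle$, the relation $X\adag_1 = \adag_1 X$ gives a recursion relating $X_{(j,k),(j'+1,k')}$ to $X_{(j-1,k),(j',k')}$ (with nonzero scalar factors coming from the $1-q^{2(\cdot)}$ terms, which never vanish since $q$ is not a root of unity). Iterating this recursion downward in the first index forces $X_{(j,k),(j',k')}$ to vanish unless $j - j' $ is constant along the relevant chain; combined with column-finiteness this pins down the dependence on $(j,j')$. (2) Symmetrically, using $[X,\badag_2]=0$ and the fact that $\badag_2$ shifts the second factor $w_k \mapsto (1-q^{-2(k+1)})w_{k+1}$ injectively, one gets an analogous recursion in the second pair of indices. (3) Combining (1) and (2): the only matrices surviving both recursions are scalar multiples of the identity, i.e. $X \in \C[[z]]\cdot \Id_{W\ot W}$.

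Alternatively — and perhaps more in the spirit of the paper, since it mirrors the arguments already used for $\mc X$ in the proof of \eqref{Rsitau:formula} — I would first use $[X,\badag_2]=0$ to deduce, via the relations \eqref{A:basicrels2} (equivalently by the computation giving the centralizer of $\adag_1$ as in the proof just above, and its $q\mapsto q^{-1}$, factor-swapped analogue for $\badag_2$), that $X$ lies in the subalgebra generated by elements $\sum_{k\ge 0} (\badag_2)^k f_k(D_1,D_2) a_1^k$. Then imposing $[X,\adag_1]=0$ on such an element and using \eqref{A:basicrels2} together with $[a_1^{k},\adag_1]$-type identities (as in \eqref{A:rels1}) yields a recursion forcing all $f_k$ with $k \ge 1$ to vanish and $f_0$ to be independent of $D_1$; a symmetric step (or direct inspection) then removes the $D_2$-dependence of $f_0$, leaving $X = f_0 \in \C[[z]]$.

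The main obstacle I anticipate is bookkeeping rather than conceptual: one must be careful that $\cA^{(2)}$ contains \emph{infinite} series in the locally nilpotent operators $a_1\badag_2$ and $\adag_1 a_2$ (not just in $a_1,a_2,\adag_1,\adag_2$ separately), so the reduction "$X$ commutes with $\badag_2$ $\Rightarrow$ $X$ has no $a_2,\adag_2$ beyond what is allowed" must be justified at the level of the column-finite formal series, checking convergence/truncation on each basis vector $w_j\ot w_{j'}$. Once that structural point is handled, the recursions close because every scalar factor that appears is of the form $1-q^{2m}$ with $m\ne 0$, hence nonzero. I would therefore organize the write-up around a single clean claim about matrix entries and derive both recursions from it.
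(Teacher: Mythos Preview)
The paper's proof is shorter and more symmetric than either of your plans: it simply computes the two centralizers separately and intersects them. Using the self-centralizing property of $\langle \adag \rangle$ in $\cA$ stated in Section~\ref{sec:extqosc} (and its analogue for $\badag$), one finds that the centralizer of $\adag_1$ in $\cA^{(2)}$ is
\[
\bigg\{ \sum_{k,\ell \ge 0} (\adag_1)^k f_{k,\ell}(D_2)\, a_2^\ell \;\bigg|\; f_{k,\ell} \in \cF \bigg\},
\]
with no $D_1$-dependence in the coefficient functions, while the centralizer of $\badag_2$ is
\[
\bigg\{ \sum_{k,\ell \ge 0} (\badag_2)^k g_{k,\ell}(D_1)\, a_1^\ell \;\bigg|\; g_{k,\ell} \in \cF \bigg\},
\]
with no $D_2$-dependence. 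That these intersect only in $\C[[z]]$ is then immediate by inspection of the first-factor and second-factor content.

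Your second approach is in this spirit but contains an error: you claim that $[X,\badag_2]=0$ forces $X$ into the span of $\sum_{k\ge 0} (\badag_2)^k f_k(D_1,D_2)\, a_1^k$ with $f_k$ allowed to depend on $D_2$. That is false, since $f(D_2)\,\badag_2 = \badag_2\, f(D_2+1)$, so a nonconstant $f(D_2)$ already fails to commute with $\badag_2$. The correct centralizer has $f_k = f_k(D_1)$ only. Once you fix this, your ``symmetric step to remove the $D_2$-dependence of $f_0$'' becomes redundant, and you are essentially executing the paper's argument, just asymmetrically (first one centralizer, then the other constraint) rather than intersecting two centralizers computed in parallel.

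Your first approach via matrix-entry recursions would also go through, but it is more laborious and, as you note yourself, requires care with the infinite-series structure of $\cA^{(2)}$. The paper avoids all of that bookkeeping by invoking the self-centralizing statement once in each tensor factor.
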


\begin{proof}
This centralizer is the intersection of the centralizer of $\adag_1$ and the centralizer of $\badag_2$, which are easily found to be equal to
\[
\bigg\{ \sum_{k,\ell \ge 0} (a^\dag_1)^k f_{k,\ell}(D_2) a_2^\ell \, \bigg| \, f_{k,\ell} \in \cF \bigg\}, \qq
\bigg\{ \sum_{k,\ell \ge 0} (\ba^\dag_2)^k g_{k,\ell}(D_1) a_1^\ell \, \bigg| \, g_{k,\ell} \in \cF \bigg\},
\]
respectively.
Clearly their intersection is trivial.
\end{proof}

Now we are ready to state and prove a formula for $\cR_{\vrho\brho}(z)$ in terms of deformed exponentials.

\begin{thrm} \label{thm:Rrhobrho:formula}
For all $z$ we have 
\eq{ \label{Rrhobrho:formula}
\cR_{\vrho\brho}(z) = e_{q^2}(q^3 z a_1 \ba^\dag_2) e_{q^2}(q^{-1} z a^\dag_1 a_2) q^{-2D_1(D_2+1)}.
}
\end{thrm}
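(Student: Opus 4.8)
The plan is to pin down $\cR_{\vrho\brho}(z)$ from the Yang--Baxter equation using the already-computed L-operators, and then to recognise the product of deformed exponentials in \eqref{Rrhobrho:formula}. Unlike the case of $\cR_{\ups\phi}(z)$, there is no direct linear intertwining equation of the type \eqref{R(z):intw} available here, since neither $\vrho$ nor $\brho$ extends to a $U_q(\wh\mfg)$-representation; instead I would follow the strategy indicated in Remark \ref{rmk:R(z):YBE:define}. Applying Proposition \ref{prop:R(z):YBE} with $(\pi^+,\pi,\pi^-)=(\vrho,\Pi,\brho^{\, -})$, and recalling that $\cR_{\vrho\Pi}(z)$, $\cR_{\Pi\brho^{\, -}}(z)$ and $\cR_{\vrho\brho^{\, -}}(z)$ are scalar multiples of $\cL_\vrho(z)$, $\cL^-_\brho(z)=\cL_\brho(z)_{21}$ and $\cR_{\vrho\brho}(z)$ respectively --- with each of these appearing exactly once on each side, so that the normalising scalars cancel --- one obtains in $\End(W\ot\C^2\ot W)[[z_1,z_2]]$ (with $W$ in slots $1,3$ and $\C^2$ in slot $2$) the relation
\[
\cL_\vrho(z_1)_{12}\,\cR_{\vrho\brho}(z_1 z_2)_{13}\,\cL^-_\brho(z_2)_{23}=\cL^-_\brho(z_2)_{23}\,\cR_{\vrho\brho}(z_1 z_2)_{13}\,\cL_\vrho(z_1)_{12}.
\]

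The next step is to extract the uniqueness needed to identify $\cR_{\vrho\brho}(z)$ from this relation. Reading off the four $\End(\C^2)$-components and comparing coefficients of $z_1$ and $z_2$ converts it into a recursion for the homogeneous pieces of $\cR_{\vrho\brho}(z)\in\cA^{(2)}_0$. The order-zero component forces $\cR_{\vrho\brho}(0)$ to be a diagonal operator, an element of $\cF^{(2)}(D_1,D_2)$; the remaining low-order components together with the normalisation $\cR_{\vrho\brho}(z)\cdot(w_0\ot w_0)=w_0\ot w_0$ pin it down as $q^{-2D_1(D_2+1)}$, and the recursion then determines all higher pieces. The one-dimensionality of the solution space --- which in this infinite-dimensional setting does not follow from the irreducibility argument of Remark \ref{rmk:R(z):YBE:define} --- I would establish by combining this recursion with Lemma \ref{prop:R:chi2} and Lemma \ref{lem:centralizer}, which force the ratio of two solutions to centralise $\{a^\dag_1,\ba^\dag_2\}$ and hence to be a scalar in $\C[[z]]$.

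It then remains to check that the right-hand side $\mathsf R(z)$ of \eqref{Rrhobrho:formula} is this solution. That it fixes $w_0\ot w_0$ is immediate: $a_1\ba^\dag_2$ and $a^\dag_1 a_2$ annihilate $w_0\ot w_0$, so both deformed exponentials --- and the diagonal factor $q^{-2D_1(D_2+1)}$ --- act trivially on it. To verify the $RLL$ relation one commutes the matrix entries of $\cL_\vrho(z_1)$ and $\cL^-_\brho(z_2)$ past $e_{q^2}(q^3 z\, a_1\ba^\dag_2)$ and $e_{q^2}(q^{-1} z\, a^\dag_1 a_2)$ using \eqrefs{qexp:commute}{qexp:Dba} and \eqrefs{qexp:badag}{qexp:Dadagba}. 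The workload is roughly halved by symmetry: $\mathsf R(z)$ is manifestly $\chi^{(2)}$-invariant (cf. Lemma \ref{prop:R:chi2}), and $\chi^{(2)}$ --- together with conjugation by $J=\big(\begin{smallmatrix}0&1\\1&0\end{smallmatrix}\big)$ on the $\C^2$-factor --- induces a symmetry of the $RLL$ system, while the antiautomorphism $\eta^{(2)}$ from the proof of Lemma \ref{lem:qexp:rels2} relates the remaining components, so that only a couple of the four $\End(\C^2)$-components need be treated directly.

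The main obstacle is this last verification: even after exploiting the symmetries, pushing the L-operators through the two deformed exponentials in the $RLL$ relation is a lengthy manipulation of deformed-exponential identities. An equivalent route is to avoid guessing and solve the recursion from the $\End(\C^2)$-components outright, recognising the homogeneous pieces produced as those of $e_{q^2}(q^3 z\, a_1\ba^\dag_2)\,e_{q^2}(q^{-1} z\, a^\dag_1 a_2)\,q^{-2D_1(D_2+1)}$; this is the same computation reorganised. A secondary difficulty is establishing the one-dimensionality of the $RLL$ solution space without the finite-dimensional irreducibility argument, for which Lemma \ref{lem:centralizer} and the $\chi^{(2)}$-symmetry are the key tools.
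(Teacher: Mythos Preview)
Your proposal is correct and follows essentially the same route as the paper: derive an $RLL$ equation from the Yang--Baxter equation with $\Pi$ in the middle slot, exploit $\chi^{(2)}$-invariance (Lemma \ref{prop:R:chi2}) to cut the system roughly in half, and use the centralizer lemma (Lemma \ref{lem:centralizer}) together with the identities \eqref{qexp:adag} and \eqref{qexp:badag} to pin down the solution. The paper streamlines the organisation somewhat --- rather than first proving uniqueness by recursion and then verifying the formula against the full $RLL$ system, it makes the substitutions $X=\wt X\,q^{-2D_1(D_2+1)}$ and then $\wt X=Y\,e_{q^2}(q^3 z a_1\ba^\dag_2)\,e_{q^2}(q^{-1}z a^\dag_1 a_2)$ directly inside the component equations, reducing everything to the single condition $[Y,a^\dag_1]=[Y,\ba^\dag_2]=0$, from which both existence and uniqueness fall out at once --- but the ingredients and the logic are the same as yours.
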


\begin{proof}
Clearly, $w_0 \ot w_0$ is fixed by the expression on the right-hand side of \eqref{Rrhobrho:formula}. 
In the following we initially work over the ring $\C[[z,z_2]]$ for some new indeterminate $z_2$ and write $z_1 = z z_2$.
By applying $\vrho_{z_1} \ot \Pi_1 \ot \brho^{\, -}_{z_2}$ to \eqref{R:YBE} and left and right-multiplying by $\cL^-_{\brho,23}(z_2^{-1})^{-1}$ we obtain
\eq{ \label{Rrhobrho:equation}
\cR_{\vrho\brho}(z)_{12}  \cL_\vrho(z_1)_{13} \cL^-_{\brho}(z_2^{-1})^{-1}_{32} = \cL^-_{\brho}(z_2^{-1})^{-1}_{32}  \cL_\vrho(z_1)_{13} \cR_{\vrho\brho}(z)_{12}
}
an equation in $(\wt{\cA}^{(2)} \ot \End({\C^2}))[[z_2]]$.
By a direct computation we obtain
\eq{
\cL^-_\brho(z_2^{-1})^{-1} = \frac{1}{z_2^2-1}
\begin{pmatrix} q^{-D-1} z_2^2 & \ba^\dag q^{-D-1} z_2 \\ a q^{D-1} z_2 & q^{D+1} z_2^2 - q^{-D-1} \end{pmatrix} \in \End({\C^2}) \ot \wt{\cA}.
}
Now we consider the equation
\eq{ \label{X:equation:2}
(z_2^2-1) X_{12}  \cL_\vrho(z_1)_{13} \cL^-_{\brho}(z_2^{-1})^{-1}_{32} = 
(z_2^2-1) \cL^-_{\brho}(z_2^{-1})^{-1}_{32}  \cL_\vrho(z_1)_{13} X_{12} 
}
in $(\wt{\cA}^{(2)} \ot \End({\C^2}))[[z_2]]$, for some $X \in \wt{\cA}^{(2)}_0$ such that $\chi^{(2)}(X)=X$.
It suffices to prove that
\eq{ 
\mc X = \Big\{ X \in \wt{\cA}^{(2)}_0 \, \Big| \, X \text{ satisfies } \eqref{X:equation:2} \text{ and is fixed by } \chi^{(2)} \Big\},
}
which by Lemma \ref{prop:R:chi2} contains $(\vrho_z \ot \brho^{\, -})(\cR)$, is spanned by the element given in the right-hand side of \eqref{Rrhobrho:formula}.

By considering explicit expressions for $(z_2^2-1)  \cL_\vrho(z_1)_{13} \cL^-_{\brho}(z_2^{-1})_{32}^{-1}$ and 
$(z_2^2-1) \cL^-_{\brho}(z_2^{-1})_{32}^{-1}  \cL_\vrho(z_1)_{13}$, we obtain that \eqref{X:equation:2} amounts to the system
\begin{gather*}
X \big( q^{D_1 - D_2 -1} - a^\dag_1 a_2 q^{-D_1 +D_2 -2} z \big) = \big( q^{D_1 - D_2 -1} - a_1 \ba^\dag_2 q^{D_1 - D_2} z \big) X, \\
X \Big( \big( \ba^\dag_2 q^{D_1 - D_2 -1} + a^\dag_1 q^{-D_1 - D_2 -2} z \big) - a^\dag_1 q^{-D_1 + D_2} z z_2^2 \Big) =\\
\qq \qq = \Big( \ba^\dag_2 q^{-D_1-D_2-1} - \big( a^\dag_1 q^{-D_1-D_2-2} + \ba^\dag_2 q^{D_1-D_2+1} z \big) z z_2^2 \Big)  X , \\
X \Big( a_2 q^{-D_1+D_2-1} - \big( a_1 q^{D_1-D_2} + a_2 q^{D_1+D_2+1} z \big) z z_2^2 \Big) = \\ 
\qq \qq = \Big( \big( a_2 q^{D_1+D_2-1} + a_1 q^{D_1-D_2} z \big)  - a_1 q^{D_1+D_2+2} z z_2^2 \Big) X, \\
X \Big( q^{-D_1+D_2+1} + q^{D_1-D_2+1} z^2 - a_1 \ba^\dag_2 q^{D_1-D_2} z \Big) = \Big( q^{-D_1+D_2+1} + q^{D_1-D_2+1} z^2  - a^\dag_1 a_2 q^{-D_1+D_2-2} z \Big) X
\end{gather*}
for $X \in \wt{\cA}_0^{(2)}$ fixed by $\chi^{(2)}$.
Since $\C[[z,z_2]] \cong (\C[[z]])[[z_2]]$, considering expansion coefficients with respect to $z_2$, we can use $[X,q^{D_1+D_2}]=0$ to deduce that the above system is equivalent to
\begin{gather}
\label{X:a}
\begin{aligned}
X a_2 q^{-2D_1} &= \big( a_2 + a_1 q^{-2D_2+1} z \big) X, \qq & 
a_1 X &= X  \big( a_1 q^{-2(D_2+1)} + q^{-1} a_2 z \big), \\
X a^\dag_1 q^{2(D_2+1)} &= \big( a^\dag_1 + \ba^\dag_2 q^{2D_1+3} z \big) X, &
\ba^\dag_2 X &= X \big( \ba^\dag_2 q^{2D_1} + q^{-1} a^\dag_1 z \big) , 
\end{aligned} \\
\begin{aligned}
\label{X:D}
\big[ X , q^{2D_1} \big] &= \big( X a^\dag_1 a_2 q^{2D_2-1} - a_1 \ba^\dag_2 q^{2D_1+1} X \big) z, \\
\big[ X, q^{2D_2} + q^{2D_1} z^2 \big] &= \big( X a_1 \ba^\dag_2 q^{2D_1-1} - a^\dag_1 a_2 q^{2D_2-3} X \big) z.
\end{aligned}
\end{gather}
Note that $q^{-2D_1(D_2+1)} \in \wt{\cA}^{(2)}_0$ is fixed by $\chi^{(2)}$.
Hence without loss of generality we may write 
\eq{
X = \wt X q^{-2D_1(D_2+1)},
}
for some $\wt X \in \wt{\cA}^{(2)}_0$ fixed by $\chi^{(2)}$.
The system \eqrefs{X:a}{X:D} is equivalent to
\begin{align}
\label{tildeX:a} [\wt X, a_2] &= q^{-2D_2+1} a_1 \wt X z,
& [a_1, \wt X]  &= \wt X q^{2D_1-1} a_2 z, \\
\label{tildeX:adag} [\wt X, a^\dag_1] &= \ba^\dag_2 q^{2D_1+3} \wt X z, 
& [\ba^\dag_2,\wt X] &= \wt X a^\dag_1 q^{-2D_2-3} z, \\
\label{tildeX:D} \big[ \wt X, q^{2D_1} \big] &= \big( \wt X a^\dag_1 a_2 q^{2D_1-1} - a_1 \ba^\dag_2 q^{2D_1+1} \wt X \big) z, \hspace{-50mm} \\
\label{tildeX:DbD} \big[ \wt X, q^{2D_2} + q^{2D_1} z^2 \big] &= \big( \wt X a_1 \ba^\dag_2 q^{2D_2+3}  - a^\dag_1 a_2 q^{2D_2-3} \wt X \big) z. \hspace{-50mm}
\end{align}
Since $\chi^{(2)}$ fixes $\wt X$, the equations in \eqref{tildeX:a} and the equations in \eqref{tildeX:adag} are pairwise equivalent.
At the same time, the system \eqrefs{tildeX:a}{tildeX:adag} implies \eqref{tildeX:D} and \eqref{tildeX:DbD}.
To show this, since $[\wt X,q^{2D_1}] = [a^\dag_1 a_1,\wt X]$ from \eqrefs{tildeX:a}{tildeX:adag} we obtain
\begin{align*}
& [\wt X,q^{2D_1}]  + a_1 \ba^\dag_2 q^{2D_1+1} \wt X z -  \wt X a^\dag_1 a_2 q^{2D_1-1} z = \\
&\qq = a_1 \ba^\dag_2 q^{2D_1+1} \wt X z - [\wt X,a^\dag_2 ] a_1 + a^\dag_1 [a_1,\wt X] -  \wt X a^\dag_1 a_2 q^{2D_1-1} z  \\
&\qq = \big( \ba^\dag_2 q^{2D_1+3} [a_1, \wt X] - [\wt X,a^\dag_1] a_2 q^{2D_1-1} \big) z,
\end{align*}
which vanishes, thereby recovering \eqref{tildeX:D}.
Applying $\chi^{(2)}$ to \eqref{tildeX:D} we obtain $[\wt X,q^{-2D_2}] = \big( \wt X a^\dag_1 a_2 q^{-2D_2-1} - a_1 \ba^\dag_2 q^{-2D_2+1} \wt X \big) z$.
Left-and-right multiplying this by $q^{2D_2}$ 
and using \eqrefs{tildeX:a}{tildeX:adag} to rewrite the result we obtain
\eq{ \label{tildeX:bD}
[\wt X,q^{2D_2}] = \big( \ba^\dag_2 \wt X a_1 q^{2D_2+3} - q^{2D_2-1} a^\dag_1 \wt X a_2 \big) z.
}
Finally, using \eqref{tildeX:bD} and again \eqrefs{tildeX:a}{tildeX:adag}, we derive that
\begin{align*}
& [\wt X,q^{2D_2} + q^{2D_1} z^2] - \wt X a_1 \ba^\dag_2 q^{2D_2+3} z + a^\dag_1 a_2 q^{2D_2-3} \wt X z = \\
&\qq = \ba^\dag_2 \wt X a_1 q^{2D_2+3} z - q^{2D_2-1} a^\dag_1 \wt X a_2 z + [\wt X,q^{2D_1}] z^2 + \\
& \qq \qq - (\ba^\dag_2 \wt X - \wt X a^\dag_1 q^{-2D_2-3} z) a_1 q^{2D_2+3} z + a^\dag_1  q^{2D_2-1} (\wt X a_2 - a_1 q^{-2D_2+1} \wt X z) z \\
& \qq = \big( \wt X a^\dag_1 a_1 + a^\dag_1 a_1 \wt X + [\wt X, 1 - a^\dag_1 a_1] \big) z^2
\end{align*}
which vanishes, thereby proving \eqref{tildeX:DbD} as well.

We have obtained that the system \eqrefs{tildeX:a}{tildeX:DbD} is equivalent to the system \eqref{tildeX:adag}. 
Writing $p=q^2$, without loss of generality we set
\[
\wt X = Y e_p(q^3 z a_1 \ba^\dag_2) e_p(q^{-1} z a^\dag_1 a_2)
\]
for some $Y \in \wt{\cA}^{(2)}_0$ fixed by $\chi^{(2)}$, noting that $e_p(q^3 z a_1 \ba^\dag_2)$ and $e_p(q^{-1} z a^\dag_1 a_2)$ lie in $\wt{\cA}^{(2)}_0$ and are fixed by $\chi^{(2)}$. 
The theorem now follows from the following claim. \\

\emph{Claim:} \eqref{tildeX:adag} is satisfied if and only if $Y \in \C[[z]]$.\\

In the special case $Y=1$, \eqref{tildeX:adag} is indeed satisfied:
\begin{align*} 
[\wt X, a^\dag_1] - \ba^\dag_2 q^{2D_1+3} z \wt X
&= \Big( [ e_p(q^3 z a_1 \ba^\dag_2), a^\dag_1] -  \ba^\dag_2 q^{2D_1+3} z e_p(q^3 z a_1 \ba^\dag_2) \Big) e_p(q^{-1} z a^\dag_1 a_2), \\
[\ba^\dag_2, \wt X] - \wt X a^\dag_1 q^{-2D_2-3} z 
&= e_p(q^3 z a_1 \ba^\dag_2) \Big( [\ba^\dag_2, e_p(q^{-1} z a^\dag_1 a_2)] - e_p(q^{-1} z a^\dag_1 a_2) a^\dag_1 q^{-2D_2-3} z \Big),
\end{align*}
with the expressions in parentheses vanishing by virtue of \eqref{qexp:adag} and \eqref{qexp:badag} (with $y=q^{-1}z$).

For general $Y$ we therefore have
\begin{align*}
[\wt X, a^\dag_1] - \ba^\dag_2 q^{2D_1+3} z \wt X &= [Y,a^\dag_1] e_p(q^3 z a_1 \ba^\dag_2) e_p(q^{-1} z a^\dag_1 a_2), \\
[\ba^\dag_2, \wt X] - \wt X a^\dag_1 q^{-2D_2-3} z &= [\ba^\dag_2, Y] e_p(q^3 z a_1 \ba^\dag_2) e_p(q^{-1} z a^\dag_1 a_2).
\end{align*}
Both right-hand sides vanish, i.e. \eqref{tildeX:adag} is indeed satisfied, if and only if $Y$ lies in the centralizer in $\wt{\cA}^{(2)}$ of $\{ a^\dag_1, \ba^\dag_2 \}$, which is trivial by Lemma \ref{lem:centralizer}. This proves the claim.
\end{proof}

\section{An alternative proof of the main theorem} \label{app:altproof}

In this part of the appendix we give a proof of the boundary factorization identity \eqref{keyrelation-init:right} independent of the universal K-matrix formalism, instead relying on the explict expressions obtained in Appendix \ref{app:R-operators}.
Before we state and prove a key lemma, note that expressions of the form $e_p(\ga^D y)$ where $\ga \in \C^\times$ and $y$ is an indeterminate give rise to well-defined $\End(W)$-valued formal power series, sending $w_j$ to $e_p(\ga^j y) w_j$.

\begin{lemma} \label{lem:qexp:auxeqns}
Let $y$ be a formal parameter and let $p$ be a nonzero complex number, not a root of unity.
In $\End(W \ot W)[[y]]$ we have the identities
\begin{align} \label{qexp:auxeqn1}
e_p(p a_1 \ba^\dag_2) (y;p)_{D_1} &= (y;p)_{D_1} e_p(-a_1 \ba^\dag_2 p^{D_1} y) e_p(p a_1 \ba^\dag_2) \\
\label{qexp:auxeqn2}
e_p(p a_1 \ba^\dag_2) (p^{1-D_1} y;p)_{D_1}^{-1} e_p(p y \ba^\dag_1 a_2) &= e_p(p y \ba^\dag_1 a_2) (p^{1-D_2} y;p)_{D_2}^{-1} e_p(p a_1 \ba^\dag_2).
\end{align}
\end{lemma}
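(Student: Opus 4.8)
\emph{Proof plan.} Throughout I would use freely the relations \eqref{A:basicrels2} of $\cA$ (in particular $a f(D)=f(D+1)a$ and its variants), the product formulas of Lemma~\ref{lem:qexp:product}, and the commutation relations of Lemmas~\ref{lem:qexp:rels1} and~\ref{lem:qexp:rels2}. All the series occurring below are well-defined elements of $\End(W\otimes W)[[y]]$ because $a_1\ba^\dag_2$ and $\ba^\dag_1 a_2$ are locally nilpotent on $W\otimes W$, and conjugation by the invertible operators $(y;p)_{D_1}$ and $(p^{1-D_i}y;p)_{D_i}$ (the latter being a function of $D_i$ alone) is an algebra automorphism of the relevant completion.

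For \eqref{qexp:auxeqn1} the plan is to conjugate $e_p(p\,a_1\ba^\dag_2)$ through $(y;p)_{D_1}$. Since $\ba^\dag_2$ commutes with every function of $D_1$ and $a_1 f(D_1)=f(D_1+1)a_1$, one gets $a_1\ba^\dag_2\,(y;p)_{D_1}=(y;p)_{D_1+1}\,a_1\ba^\dag_2$, hence
\[
(y;p)_{D_1}^{-1}\bigl(a_1\ba^\dag_2\bigr)(y;p)_{D_1}=(1-y p^{D_1})\,a_1\ba^\dag_2,
\qquad
(y;p)_{D_1}^{-1}\,e_p\bigl(p\,a_1\ba^\dag_2\bigr)\,(y;p)_{D_1}=e_p\bigl(p(1-yp^{D_1})a_1\ba^\dag_2\bigr).
\]
Writing the argument as $-y p^{D_1+1}a_1\ba^\dag_2+p\,a_1\ba^\dag_2$ and checking the $q$-commutation $(p\,a_1\ba^\dag_2)(y p^{D_1+1}a_1\ba^\dag_2)=p\,(y p^{D_1+1}a_1\ba^\dag_2)(p\,a_1\ba^\dag_2)$, one applies \eqref{qexp:product1} to split the right-hand side as $e_p(-y p^{D_1+1}a_1\ba^\dag_2)\,e_p(p\,a_1\ba^\dag_2)$. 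Since $y p^{D_1+1}a_1\ba^\dag_2=a_1\ba^\dag_2\,p^{D_1}y$, left-multiplying by $(y;p)_{D_1}$ yields \eqref{qexp:auxeqn1}.

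Identity \eqref{qexp:auxeqn2} is the harder one; here the naive conjugation produces geometric-series denominators $(1-y p^{-D_i})^{-1}$, so a direct reduction to Lemma~\ref{lem:qexp:product} seems unavailable. I would instead compare the two sides on a basis vector $w_j\otimes w_k$. Using $(a_1\ba^\dag_2)^n(w_j\otimes w_k)=(p^{-k-1};p^{-1})_n\,w_{j-n}\otimes w_{k+n}$ and $(\ba^\dag_1 a_2)^m(w_a\otimes w_b)=(p^{-a-1};p^{-1})_m\,w_{a+m}\otimes w_{b-m}$, one expands $e_p(p\,a_1\ba^\dag_2)$ and $e_p(p y\,\ba^\dag_1 a_2)$ termwise and extracts, for each $s$, the coefficient of $w_{j-s}\otimes w_{k+s}$ on either side. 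Matching these coefficients reduces to a terminating basic hypergeometric summation of $q$-Chu--Vandermonde / Heine type, obtainable from the identities recalled in Appendix~\ref{app:qexp} (cf.~\cite{GR90}). An alternative, purely algebraic route is to combine \eqref{qexp:auxeqn1} with the commutation relations \eqref{qexp:adag}, \eqref{qexp:Dba}, \eqref{qexp:badag}, \eqref{qexp:Dadagba} and the (anti)automorphisms $\eta^{(2)}$, $\chi^{(2)}$ introduced in Appendices~\ref{app:qexp}--\ref{app:R-operators}, so as to transport $e_p(p y\,\ba^\dag_1 a_2)$ across $e_p(p\,a_1\ba^\dag_2)$ and the intervening Pochhammer factor simultaneously.

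The main obstacle is \eqref{qexp:auxeqn2}: in the basis-vector approach one must recognize the correct $q$-hypergeometric summation after a double-sum rearrangement, whereas in the algebraic approach the difficulty lies entirely in the bookkeeping of the shift factors $p^{\pm D_1}, p^{\pm D_2}$ generated each time an exponential is moved past another operator. Identity \eqref{qexp:auxeqn1}, by contrast, is essentially a one-line conjugation argument.
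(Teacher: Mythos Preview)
Your argument for \eqref{qexp:auxeqn1} is correct and in fact cleaner than the paper's. The paper first restricts to the finite-dimensional blocks $(W\otimes W)_m$, argues by rationality in $p$ that it may assume $|p|<1$, rewrites $(y;p)_{D_1}=e_p(p^{D_1}y)/e_p(y)$, and then applies \eqref{qexp:product2}. Your direct conjugation avoids the analytic detour entirely: the single relation $(y;p)_{D_1}^{-1}(a_1\ba^\dag_2)(y;p)_{D_1}=(1-yp^{D_1})a_1\ba^\dag_2$ together with \eqref{qexp:product1} is all that is needed, and it works for every $p$ not a root of unity.

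For \eqref{qexp:auxeqn2} there is a genuine gap: neither of your two sketches is carried out, and both miss the key structural observation that makes the paper's proof short. The paper again passes to $(W\otimes W)_m$ and, by rationality in $p$, may assume $|p|>1$; then $(p^{1-D_i}y;p)_{D_i}^{-1}=(y;p^{-1})_\infty^{-1}\,e_p(p^{1-D_i}y)$, so \eqref{qexp:auxeqn2} reduces to
\[
e_p(p\,a_1\ba^\dag_2)\,e_p(p^{1-D_1}y)\,e_p(p y\,\ba^\dag_1 a_2)
=
e_p(p y\,\ba^\dag_1 a_2)\,e_p(p^{1-D_2}y)\,e_p(p\,a_1\ba^\dag_2).
\]
The crucial step is that \eqref{qexp:Dadagba}, evaluated at $y=p$, gives the single conjugation law
\[
e_p(p\,a_1\ba^\dag_2)\bigl(p^{-D_1}+\ba^\dag_1 a_2\bigr)=\bigl(p^{-D_2}+\ba^\dag_1 a_2\bigr)\,e_p(p\,a_1\ba^\dag_2),
\]
so that $e_p(p\,a_1\ba^\dag_2)$ transports $e_p\bigl(p^{1-D_1}y+p y\,\ba^\dag_1 a_2\bigr)$ to $e_p\bigl(p^{1-D_2}y+p y\,\ba^\dag_1 a_2\bigr)$. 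The required $q$-commutations $(\ba^\dag_1 a_2)\,p^{1-D_1}=p\cdot p^{1-D_1}(\ba^\dag_1 a_2)$ and $p^{1-D_2}(\ba^\dag_1 a_2)=p\,(\ba^\dag_1 a_2)\,p^{1-D_2}$ then allow \eqref{qexp:product1} to split each side. You listed \eqref{qexp:Dadagba} among the tools but did not spot that $p^{-D_1}$ and $\ba^\dag_1 a_2$ must be bundled together; trying to move them separately through $e_p(p\,a_1\ba^\dag_2)$ is exactly what produces the awkward shift factors you anticipated. Your basis-vector route could in principle be completed, but it is considerably heavier than this two-line argument.
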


\begin{proof}
Note that 
\[
W \ot W = \bigoplus_{m \in \Z_{\ge 0}} (W \ot W)_m, \qq (W \ot W)_m := \bigoplus_{j,k \ge 0 \atop j+k=m} \C w_j \ot w_k.
\]
Because each factor in \eqrefs{qexp:auxeqn1}{qexp:auxeqn2} preserves each finite-dimensional subspace $(W \ot W)_m$, it suffices to prove the restrictions of \eqrefs{qexp:auxeqn1}{qexp:auxeqn2} to $(W \ot W)_m$, where $m \in \Z_{\ge 0}$ is fixed but arbitrary.
Note that on $(W \ot W)_m$ the operators appearing as arguments of the deformed exponentials are nilpotent. 
Therefore the operators on the left- and right-hand side of the restricted equations depend rationally on $p$ and hence it suffices to prove them with $p$ restricted to an open subset of $\C$.

We will prove the restriction of \eqref{qexp:auxeqn1} to $(W \ot W)_m$ for all $p \in \C$ such that $|p|<1$. 
Combining \eqref{qPochhammer:finitetoinfinite} and \eqref{qexp:qpochhammer} we obtain $(y;p)_D = \frac{e_p(p^Dy)}{e_p(y)}$; as a consequence, \eqref{qexp:auxeqn1} is equivalent to
\eq{
e_p(p a_1 \ba^\dag_2) e_p(p^{D_1} y) = e_p(p^{D_1} y) e_p(-a_1 \ba^\dag_2 p^{D_1} y) e_p(p a_1 \ba^\dag_2).
\label{qexp:auxeqn1:2}
}
But this equation follows directly from \eqref{qexp:product2} and the observation $(a_1 \ba^\dag_2)(p^{D_1}y) = p(p^{D_1}y)(a_1 \ba^\dag_2)$.

On the other hand\footnote{
We will need \eqref{qexp:auxeqn2} with $|p|<1$, but we are not aware of a direct proof of this.
}, we will prove the restricted version of \eqref{qexp:auxeqn2} for all $p \in \C^\times$ such that $|p|>1$.
In this case, for all $j \in \Z_{\ge 0}$ we have
\[
(p^{1-j}y;p)_j^{-1} = (y;p^{-1})_j^{-1} = \frac{(p^{-j}y;p^{-1})_\infty}{(y;p^{-1})_\infty} \in \C[[y]].
\]
From \eqref{qexp:qpochhammer} and \eqref{qexp:inverseformula} we deduce the identity 
\[ 
(p^{-j}y;p^{-1})_\infty = e_{p^{-1}}(p^{-j} y)^{-1} = e_p(p^{1-j} y) \in \C[[y]].
\]
Hence $(p^{1-D}y;p)_D^{-1} = (y;p^{-1})_\infty^{-1} e_p(p^{1-D}y)$ in $\End(W)[[y]]$ and \eqref{qexp:auxeqn2} is equivalent to
\eq{ \label{qexp:auxeqn2:2}
e_p(p a_1 \ba^\dag_2) e_p(p^{1-D_1}y) e_p(p y \ba^\dag_1 a_2) = e_p(p y \ba^\dag_1 a_2) e_p(p^{1-D_2}y) e_p(p a_1 \ba^\dag_2).
}

To prove \eqref{qexp:auxeqn2:2}, note that \eqref{qexp:Dadagba} can be evaluated at $y=p$, and the result can be rewritten as
\[
e_p(p a_1 \ba^\dag_2) \big( p^{-D_1} + \ba_1^\dag a_2 \big) = \big( p^{-D_2} + \ba_1^\dag a_2  \big) e_p(p a_1 \ba^\dag_2).
\]
By iteration we obtain 
\eq{ \label{qexp:auxeqn2:3}
e_p(p a_1 \ba^\dag_2) e_p\big( p^{1-D_1}y + p y \ba^\dag_1 a_2 \big) = e_p\big( p^{1-D_2} y + p y \ba^\dag_1 a_2 \big) e_p(p a_1 \ba^\dag_2).
}
Note that $(\ba^\dag_1 a_2) p^{1-D_1} = p \, p^{1-D_1} (\ba^\dag_1 a_2)$ and $p^{1-D_2} (\ba^\dag_1 a_2) = p \, (\ba^\dag_1 a_2) p^{1-D_2}$. 
Applying \eqref{qexp:product1}, we obtain \eqref{qexp:auxeqn2:2}, as required.
\end{proof}

\begin{proof}[Alternative proof of Theorem \ref{thm:keyrelation:right}.]
By virtue of \eqref{Oplus:def}, the desired identity, viz.
\eq{ \label{app:keyrelation:right}
\cK_\ups(z)_1 \cR_{\ups\phi}(z^2) \cK_\phi(z)_2 \,\cO = \cO \cK_\vrho(q^{-\mu/2}z)_1 \cR_{\vrho\brho}(z^2) \cK_\brho(q^{\mu/2}z)_2
}
for arbitrary $\mu \in \C$ and $q, \xi \in \C^\times$ such that $q$ is not a root of unity, is equivalent to
\eq{ \label{keyrelation:right:0}
\begin{aligned}
& e_{p}(p a_1 \ba^\dag_2) \cK_\ups(z)_1 \cR_{\ups\phi}(z^2) \cK_\phi(z)_2 e_{p}(p a_1 \ba^\dag_2)^{-1} = \\
&\qq = q^{\mu (D_1-D_2)/2} \cK_\vrho(q^{-\mu/2}z)_1 \cR_{\vrho\brho}(z^2) \cK_\brho(q^{\mu/2}z)_2 q^{\mu (D_2-D_1)/2} 
\end{aligned}
}
where $p=q^2$. 
The strategy of the proof is as follows.
We move various simple factors in $\cF^{(2)}(D_1,D_2)$ to the right in both sides of \eqref{keyrelation:right:0}, thus bringing them to a similar form.
Then more advanced product formulas involving q-exponentials and finite q-Pochhammer symbols yield the desired equality.

More precisely, we set $\ga = p q^{-\mu} \xi^{-1} \in \C^\times$ and from \eqref{qPochhammer:identity} deduce 
\[
(\ga z^{-2};p)_j^{-1} = p^{j(1-j)/2} (-\ga^{-1} z^2)^j (p^{1-j}\ga^{-1} z^2;p)_j^{-1}
\]
for all $j \in \Z_{\ge 0}$.
Using the identities $q^{-D^2} a^\dag = -q \ba^\dag q^{-D^2}$ and $q^{-D^2} a = a q^{2D-1} q^{-D^2}$, we obtain, for the left-hand side of \eqref{keyrelation:right:0},
\eq{
\begin{aligned}
& e_{p}(p a_1 \ba^\dag_2) \cK_\ups(z)_1 \cR_{\ups\phi}(z^2) \cK_\phi(z)_2 e_{p}(p a_1 \ba^\dag_2)^{-1} = \\
& \; = e_{p}(p a_1 \ba^\dag_2) \big( -\ga^{-1} q^{1-D_1} \big)^{D_1} (\ga z^2;p)_{D_1} (p^{1-D_1}\ga^{-1} z^2;p)_{D_1}^{-1} e_{p}( z^{2} a^\dag_1 a_2) \cdot \\
& \; \qq \cdot q^{(2\mu -1)D_1-2D_2 - 2D_1D_2 - D_2^2} (-\xi)^{D_2} e_{p}(p a_1 \ba^\dag_2)^{-1} \\
& \; = e_{p}(p a_1 \ba^\dag_2) (\ga z^2;p)_{D_1}  (p^{1-D_1}\ga^{-1} z^2;p)_{D_1}^{-1}  e_{p}(p \ga^{-1} z^2 \ba^\dag_1 a_2) (-q^{-D_1-D_2-2} \xi)^{D_1+D_2} e_{p}(p a_1 \ba^\dag_2)^{-1} \hspace{-5pt} \\
& \; = e_{p}(p a_1 \ba^\dag_2) (\ga z^2;p)_{D_1}  (p^{1-D_1} \ga^{-1} z^2;p)_{D_1}^{-1}  e_{p}(p \ga^{-1} z^2 \ba^\dag_1 a_2) e_{p}(p a_1 \ba^\dag_2)^{-1} (-q^{-D_1-D_2-2} \xi)^{D_1+D_2}. \hspace{-5pt} 
\end{aligned}
}
Similarly, for the right-hand side of \eqref{keyrelation:right:0} we obtain
\eq{ 
\begin{aligned}
& q^{\mu(D_1-D_2)/2} \cK_\vrho(q^{-\mu/2}z)_1 \cR_{\vrho\brho}(z^2) \cK_\brho(q^{\mu/2}z)_2 q^{\mu(D_2-D_1)/2} = \\
& \; = (\ga z^2;p)_{D_1} q^{\mu(D_1-D_2)/2 - D_1^2} (-\xi)^{D_1} e_{p}(q^3 z^2 a_1 \ba^\dag_2) e_{p}(q^{-1} z^2 a_1^\dag a_2) \cdot \\
& \; \qq \cdot (p^{1-D_2} \ga^{-1} z^2;p)_{D_2}^{-1} q^{\mu(D_2-D_1)/2 -2(D_1+D_2)-2D_1D_2-D_2^2} (-\xi)^{D_2} = \\
& \; = (\ga z^2;p)_{D_1} e_{p}(-a_1 \ba^\dag_2 q^{2D_1} \ga z^2) e_{p}(p \ga^{-1} z^2 \ba^\dag_1 a_2) (p^{1-D_2} \ga^{-1} z^2;p)_{D_2}^{-1} (-q^{-D_1-D_2-2} \xi)^{D_1+D_2}.
\end{aligned}
}
Therefore \eqref{keyrelation:right:0} is equivalent to
\eq{ \label{keyrelation:right:1}
\begin{aligned}
& e_{p}(p a_1 \ba^\dag_2) (\ga z^2;p)_{D_1}  (p^{1-D_1} \ga^{-1} z^2;p)_{D_1}^{-1}  e_{p}(p \ga^{-1} z^2 \ba^\dag_1 a_2) e_{p}(p a_1 \ba^\dag_2)^{-1} = \\
& \qq \qq = (\ga z^2;p)_{D_1} e_{p}(-a_1 \ba^\dag_2 p^{D_1} \ga z^2) e_{p}(p \ga^{-1} z^2 \ba^\dag_1 a_2) (p^{1-D_2} \ga^{-1} z^2;p)_{D_2}^{-1}.
\end{aligned}
}
Applying \eqref{qexp:auxeqn1} with $y=\ga z^2$ and \eqref{qexp:auxeqn2} with $y = \ga^{-1} z^2$, we deduce \eqref{keyrelation:right:1}, as required.
\end{proof}

\end{document}